\spnewtheorem{assumptions}[theorem]{Assumptions}{\bfseries}{}
\spnewtheorem{prop}[theorem]{Proposition}{\bfseries}{\it}
\spnewtheorem{cor}[theorem]{Corollary}{\bfseries}{\it}
\spnewtheorem{defn}[theorem]{Definition}{\bfseries}{}
\spnewtheorem{thm}[theorem]{Theorem}{\bfseries}{\it}
\spnewtheorem{lem}[theorem]{Lemma}{\bfseries}{\it}
\spnewtheorem{rem}[theorem]{Remark}{\bfseries}{}
\spnewtheorem{expl}[theorem]{Example}{\bfseries}{}
\spnewtheorem{observation}[theorem]{Observation}{\bfseries}{}
\spnewtheorem{assumption}[theorem]{Assumption}{\bfseries}{}
\spnewtheorem{notation}[theorem]{Notation}{\bfseries}{}
\spnewtheorem{construction}[theorem]{Construction}{\bfseries}{}
\def\hyph{-\penalty0\hskip0pt\relax}
\def\epsilon{\varepsilon}
\def\eps{\epsilon}
\renewcommand{\rho}{\varrho}
\def\ol{\overline}
\def\inj{\mathsf{in}}
\def\o{\cdot}
\newcommand{\Quo}{\mathsf{Quo}}
\newcommand{\takeout}[1]{\empty}
\newcommand{\ra}{\rightarrow}
\newcommand{\xra}{\xrightarrow}
\newcommand{\Ra}{\Rightarrow}
\newcommand{\La}{\Leftarrow}
\newcommand{\seq}{\subseteq}
\newcommand{\hra}{\hookrightarrow}
\newcommand{\tl}{\widetilde}
\newcommand{\DCath}{\hat \DCat}
\newcommand{\Stone}{\mathsf{Stone}}
\newcommand{\Priest}{\mathsf{Priest}}
\newcommand{\Mon}{\mathsf{Mon}}
\newcommand{\Alg}{\mathsf{Alg}\,}
\newcommand{\Set}{\mathsf{Set}}
\newcommand{\BA}{\mathsf{BA}}
\newcommand{\DL}{\mathsf{DL}_{01}}
\newcommand{\Poset}{\mathsf{Pos}}
\newcommand{\ACat}{\mathcal{A}}
\newcommand{\BCat}{\mathcal{B}}
\newcommand{\Cat}{\mathcal{C}}
\newcommand{\ev}{\mathsf{ev}}
\newcommand{\DCat}{\mathcal{D}}
\newcommand{\id}{\mathsf{id}}
\newcommand{\Id}{\mathsf{Id}}
\newcommand{\PS}{\Psi\Sigma^*}
\newcommand{\VPS}{\und{\Psi\Sigma^*}}
\newcommand{\mlhs}{\mu\hL_\Sigma}
\newcommand{\mls}{\mu L_\Sigma}
\newcommand{\AlgCat}[1]{\mathsf{Alg}~#1}
\newcommand{\Quofp}[1]{\mathsf{Quo}_{f}(#1)}
\newcommand{\Coalg}[1]{\mathsf{Coalg}\,#1}
\newcommand{\Bim}{\mathsf{Mon}(\DCat)}
\newcommand{\Bimfp}{\mathsf{Mon}_{f}(\DCat)}
\newcommand{\SBim}{\Sigma\text{-}\mathsf{Mon}(\DCat)}
\newcommand{\SBimfp}{\Sigma\text{-}\mathsf{Mon}_{f}(\DCat)}
\newcommand{\Coalgfp}[1]{\mathsf{Coalg}_{f}\,#1}
\newcommand{\Coalglfp}[1]{\mathsf{Coalg}_{lfp}\,#1}
\newcommand{\Algcfp}[1]{\mathsf{Alg}_{cfp}\, #1}
\newcommand{\Algfp}[1]{\mathsf{Alg}_{f}\, #1}
\newcommand{\FCoalg}[1]{\mathsf{Coalg}_f\,#1}
\newcommand{\FAlg}[1]{\mathsf{Alg}_{f}\,#1}
\newcommand{\lfp}{locally finitely presentable\xspace}
\newcommand{\lcp}{locally cofinitely presentable\xspace}
\newcommand{\Vect}[1]{\mathsf{Vect}\,#1}
\newcommand{\JSL}{{\mathsf{JSL}_0}}
\newcommand{\colim}{\mathop{\mathsf{colim}}}
\newcommand{\Sub}[1]{\mathsf{Sub}(#1)}
\newcommand{\FPSub}[1]{\mathsf{Sub}_{f}(#1)}
\newcommand{\FPSubrqc}[1]{\mathsf{Sub}_{f}^{r}(#1)}
\newcommand{\Subrqc}[1]{\mathsf{Sub}^{r}(#1)}
\newcommand{\und}[1]{#1}
\newcommand{\under}[1]{|#1|}
\newcommand{\xto}[1]{\xrightarrow{#1}}
\newcommand{\epito}{\twoheadrightarrow}
\newcommand{\monoto}{\rightarrowtail}
\newcommand{\hL}{{\hat{L}}}
\newcommand{\hU}{\hat{U}}
\newcommand{\hF}{\hat{F}}
\newcommand{\Pow}{\mathcal{P}}
\newcommand{\FPow}{\Pow_\omega}
\newcommand{\Ind}{\mathsf{Ind}\,}
\newcommand{\Pro}{\mathsf{Pro}\,}
\newcommand{\ProC}[1]{\mathsf{Pro}(#1)}
\newcommand{\IndC}[1]{\mathsf{Ind}(#1)}
\newcommand{\Ideal}[1]{\mathsf{Ideal}(#1)}
\newcommand{\Field}{\mathds{Z}}
\newcommand{\one}{\mathbb{1}}
\newcommand{\two}{\mathbb{2}}
\newcommand{\Reg}{\mathsf{Reg}}
\newcommand{\Lor}{\bigvee}
\begin{document}
%
%

\FXRegisterAuthor{rm}{arm}{Rob}
\FXRegisterAuthor{sm}{asm}{SM}
\FXRegisterAuthor{hu}{ahu}{HU}

%
%
\titlerunning{Generalized Eilenberg Theorem I: Local Varieties of Languages}
\authorrunning{J.~Ad\'{a}mek, S.~Milius, R.~S.~R.~Myers and H.~Urbat}

\title{Generalized Eilenberg Theorem I:\\ Local Varieties of Languages}

\author{Ji\v{r}\'{\i} Ad\'{a}mek\inst{1}, Stefan Milius\inst{2}, Robert S.~R.~Myers\inst{1} \and Henning Urbat\inst{1}}
\institute{Institut f\"ur Theoretische Informatik\\ Technische Universit\"at Braunschweig, Germany
\and
Lehrstuhl f\"ur Theoretische Informatik\\ Friedrich-Alexander-Universit\"at Erlangen-N\"urnberg, Germany
}
\maketitle

\begin{center}Dedicated to Manuela Sobral.\end{center}

\begin{abstract}
We investigate the duality between algebraic and coalgebraic recognition of languages to derive a generalization of the local version of Eilenberg's theorem. This theorem states that the lattice of all boolean algebras of regular languages over an alphabet $\Sigma$ closed under derivatives is isomorphic to the lattice of all pseudovarieties of $\Sigma$-generated monoids. By applying our method to different categories, we obtain three related results: one, due to Gehrke, Grigorieff and Pin, weakens boolean algebras to distributive lattices, one  weakens them to join-semilattices, and the last one considers vector spaces over $\Field_2$.
\end{abstract}

\section{Introduction}
Regular languages are precisely the behaviours of finite automata. A machine\hyph independent
characterization of regularity is the starting point of algebraic
automata theory (see e.g.~\cite{pin13}): one defines recognition via
preimages of monoid morphisms $f: \Sigma^* \to M$, where $M$ is a finite monoid, and every regular language is recognized in this way by
its syntactic monoid. This suggests to investigate how operations on regular languages relate to operations on monoids.  Recall that a \emph{pseudovariety of monoids} is a class of finite monoids closed under finite products, submonoids and quotients (homomorphic images), and a  \emph{variety of regular languages} is a class of regular languages closed under the boolean operations (union, intersection and complement), left and right derivatives\footnote[1]{The left and right derivatives of a language $L\seq \Sigma^*$ are  $w^{-1}L = \{u\in\Sigma^*: wu\in L\}$ and $Lw^{-1}=\{u\in \Sigma^*: uw\in L\}$ for $w\in \Sigma^*$, respectively.} and preimages of monoid morphisms $\Sigma^*\ra\Gamma^*$.
 Eilenberg's variety theorem \cite{Eilenberg76}, a cornerstone of automata theory, establishes a lattice isomorphism
\[ \text{varieties of regular languages} ~\cong~ \text{pseudovarieties of monoids}.\] Numerous variations of this correspondence are known, e.g. weakening the closure properties in the definition of a variety \cite{pin95,polak01}, or replacing regular languages by formal power series \cite{reut80}. Recently Gehrke, Grigorieff and
Pin~\cite{ggp08,ggp10} proved a ``local''
version of Eilenberg's theorem: for every fixed alphabet
$\Sigma$, there is a lattice isomorphism between \emph{local varieties of regular languages} (sets of regular languages over $\Sigma$ closed under boolean operations and derivatives) and \emph{local pseudovarieties of monoids} (sets of $\Sigma$-generated finite monoids closed under quotients and subdirect products). At the heart of this result lies the use of Stone duality to relate the boolean algebra of regular languages over $\Sigma$, equipped with left and right derivatives, to the free $\Sigma$-generated profinite monoid.

In this paper we generalize the local Eilenberg theorem to the level of an abstract duality. Our approach starts with the observation that all concepts involved in this theorem are inherently categorical:
\begin{enumerate}[(1)]
\item The boolean algebra $\Reg_\Sigma$ of all regular languages over $\Sigma$ naturally carries the structure of a deterministic automaton whose transitions $L\xra{a} a^{-1}L$ for $a\in \Sigma$ are given by left derivation and whose final states are the languages containing the empty word. In other words, $\Reg_\Sigma$ is a coalgebra for the functor $T_\Sigma Q = \two \times Q^\Sigma$ on the category of boolean algebras, where $\two=\{0,1\}$ is the two-chain. The coalgebra $\Reg_\Sigma$ can be captured abstractly as the \emph{rational fixpoint} $\rho T_\Sigma$ of $T_\Sigma$, i.e., the terminal locally finite
$T_\Sigma$-coalgebra \cite{m10}.
\item Monoids are precisely the monoid objects in the category of sets, viewed as a monoidal category w.r.t. the cartesian product.
\item The categories of boolean algebras and sets occurring in (1) and (2)  are locally finite varieties of algebras (that is, their finitely generated algebras are finite), and the full subcategories of  \emph{finite} boolean algebras and  \emph{finite} sets are dually equivalent via Stone duality.
\end{enumerate}
Inspired by (3), we call two locally finite varieties $\Cat$ und $\DCat$ of (possibly ordered) algebras  \emph{predual} if the respective full subcategories $\Cat_f$ and $\DCat_f$ of finite algebras are dually equivalent. Our aim is to prove a local Eilenberg theorem for an abstract pair of predual varieties $\Cat$ and $\DCat$, the classical case being covered by taking $\Cat$ = boolean algebras and $\DCat$ = sets. In this setting deterministic automata are modeled both as coalgebras for the functor
 \[T_\Sigma:\Cat\ra\Cat,\quad T_\Sigma Q = \two \times Q^\Sigma,\] 
and as algebras for the functor
\[L_\Sigma: \DCat\ra\DCat,\quad L_\Sigma A = \one + \coprod_{\Sigma} A,\] 
where $\two$ is a two-element algebra in $\Cat$ and $\one$ is its dual finite algebra in $\DCat$. These functors are \emph{predual} in the sense that their restrictions $T_\Sigma: \Cat_f\ra \Cat_f$ and $L_\Sigma: \DCat_f\ra \DCat_f$ to finite algebras are dual, and therefore the categories of finite $T_\Sigma$-coalgebras and finite $L_\Sigma$-algebras are dually equivalent.  As a first approximation to the local Eilenberg theorem, we consider the rational fixpoint $\rho T_\Sigma$ for $T_\Sigma$ -- which is always carried by the automaton $\Reg_\Sigma$ of regular languages --  and the initial algebra $\mu L_\Sigma$ for $L_\Sigma$ and establish a lattice isomorphism
\[\text{subcoalgebras of $\rho T_\Sigma$ ~$\cong$~ ideal completion of the poset of finite quotient algebras of $\mls$}.\]
 This is ``almost'' the
desired general local Eilenberg theorem. For the classical case ($\Cat=$  boolean algebras and $\DCat$ = sets) one has $\rho T_\Sigma = \Reg_\Sigma$ and $\mu L_\Sigma = \Sigma^*$, and the above isomorphism states that the boolean subalgebras of $\Reg_\Sigma$ closed under \emph{left}
derivatives correspond to sets of finite quotient
algebras of $\Sigma^*$ closed under quotients and
subdirect products. What is missing is the closure under \emph{right} derivatives on the coalgebra side, and quotient algebras of $\Sigma^*$ which are
\emph{monoids} on the algebra side. 

The final step is to prove that the above isomorphism restricts to one between \emph{local varieties of regular languages in $\Cat$} (i.e., subcoalgebras of $\rho T_\Sigma$ closed under right derivatives) and
\emph{local pseudovarieties of $\DCat$-monoids}. Here a \emph{$\DCat$-monoid} is a monoid object in the monoidal category $(\DCat,\otimes,\Psi 1)$ where $\otimes$ is the tensor product of algebras and $\Psi 1$ is the free algebra on one generator. In more elementary terms, a $\DCat$-monoid is an algebra $A$ in $\DCat$ equipped with a ``bilinear'' monoid
multiplication  $A\times A \xra{\circ} A$, which means that the maps $a\circ -$
and $- \circ a$ are $\DCat$-morphisms for all $a\in A$. For example, $\DCat$-monoids in $\DCat=$ sets, posets, join-semilattices and vector spaces over $\Field_2$ are monoids, ordered monoids, idempotent semirings and $\Field_2$-algebras (in the sense of algebras over a field), respectively. In all these examples the monoidal category $(\DCat, \otimes, \Psi 1)$ is \emph{closed}: the set $\DCat(A,B)$ of homomorphisms between two algebras $A$ and $B$ is an algebra in $\DCat$ with the pointwise algebraic structure. Our main result is the \\

\noindent\fcolorbox{lightgray}{lightgray}{\parbox{0.98\textwidth}{%
  \textbf{General Local Eilenberg Theorem.} Let $\Cat$ and $\DCat$ be predual locally finite varieties of algebras, where the algebras in $\DCat$ are possibly ordered. Suppose further that  $\DCat$ is monoidal closed w.r.t. tensor product, epimorphisms in $\DCat$ are surjective, and the free algebra in $\DCat$ on one generator is dual to a two-element algebra in $\Cat$. Then there is a lattice isomorphism
  \[\text{\textbf{local varieties of regular languages in $\boldsymbol{\Cat}$ ~$\boldsymbol{\cong}$~ local pseudovarieties of $\boldsymbol{\DCat}$-monoids.}}\]
}}
{~}\\~\\

By applying this to Stone duality ($\Cat= $ boolean algebras and $\DCat=$ sets) we recover the ``classical'' local Eilenberg theorem.  Birkhoff duality ($\Cat$ = distributive lattices and $\DCat= $ posets) gives another result of Gehrke et. al, namely a lattice isomorphism  between \emph{local lattice varieties of regular languages} (subsets of $\Reg_\Sigma$ closed under union, intersection and derivatives) and local pseudovarieties of ordered monoids. Finally, by taking $\Cat=\DCat= $ join-semilattices and $\Cat=\DCat=$ vector spaces over $\Field_2$, respectively, we obtain two new local Eilenberg theorems. The first one establishes a lattice isomorphism between \emph{local semilattice varieties of regular languages} (subsets of $\Reg_\Sigma$ closed under union and derivatives) and local pseudovarieties of idempotent semirings, and the second one gives an isomorphism between \emph{local linear varieties of regular languages} (subsets of $\Reg_\Sigma$ closed under symmetric difference and derivatives) and local pseudovarieties of $\Field_2$-algebras.

As a consequence of the General Local Eilenberg Theorem we also gain a generalized view of profinite monoids. The dual equivalence between $\Cat_f$ and  $\DCat_f$ lifts to a dual equivalence  between $\Cat$ and a category $\DCath$  arising as a profinite completion of $\DCat_f$. In the classical case we have $\Cat$ = boolean algebras, $\DCat$ = sets and $\DCath$ = Stone spaces, and the dual equivalence between $\Cat$ and $\DCath$ is given by Stone duality. Then the $\DCath$-object dual to the rational fixpoint $\rho T_\Sigma \in \Cat$ can be  equipped with a monoid structure that makes it the free profinite $\DCat$-monoid on $\Sigma$.

\noindent\fcolorbox{lightgray}{lightgray}{\parbox{0.98\textwidth}{%
  \textbf{Theorem.} Under the assumptions of the General Local Eilenberg Theorem, the free profinite $\DCat$-monoid on $\Sigma$ is  dual to the rational fixpoint $\rho T_\Sigma$.
}}
{~}\\~\\

\noindent This extends the corresponding results of Gehrke, Grigorieff and Pin \cite{ggp08} for $\DCat$ = sets and $\DCat$ = posets.\\

\noindent The present paper is a revised and extended version of the conference paper \cite{ammu14}, providing full proofs and more detailed examples. In comparison to \emph{loc. cit.} we work with a slightly modified categorical framework in order to simplify the presentation, see Section \ref{sec:oldasm}.

\paragraph{Related work.} Our paper is inspired by the work of Gehrke, Grigorieff and
Pin~\cite{ggp08,ggp10} who showed that the algebraic operation of the free
profinite monoid on $\Sigma$ dualizes to the derivative operations on
the boolean algebra of regular languages (and similarly for the free
ordered profinite monoid on $\Sigma$). Previously, the duality between
the boolean algebra of regular languages and the Stone space of
profinite words appeared (implicitly) in work by
Almeida~\cite{Almeida94} and
was formulated by Pippenger~\cite{Pippenger97} in terms of Stone duality. 

A categorical approach to the duality theory of regular languages has been suggested by Rhodes and Steinberg \cite{rs2008}. They introduce the notion of a boolean bialgebra, and prove the equivalence of bialgebras and profinite semigroups. The precise connection to their work is yet to be investigated. 

Another related work is Pol\'ak~\cite{polak01} and Reutenauer \cite{reut80}. They
consider what we treat as the example of join-semilattices and vector spaces, respectively, and
obtain a (non-local) Eilenberg theorem for these cases. To the
best of our knowledge the local version we prove does not follow from
the global version, and so we believe that our result is new. 

The origin of all the above work is, of course, Eilenberg's
theorem~\cite{Eilenberg76}. Later Rei\-terman~\cite{Reiterman82} proved another characterization of
pseudovarieties of monoids in the spirit of Birkhoff's classical variety
theorem. Reiterman's theorem states
that any pseudovariety of monoids can be characterized by
profinite equations (i.e., pairs of elements of a free profinite monoid). We do not treat profinite
equations in the present paper.

\section{The Rational Fixpoint}
The aim of this section is to recall the rational fixpoint of a functor, which provides a an abstract coalgebraic view of the set of regular languages. As a prerequisite, we need a categorical notion of ``finite automaton'', and so we will work with categories where enough ``finite'' objects exist -- viz. \emph{locally finitely presentable} categories \cite{ar94}.
\begin{defn}
\label{def:lfp}
\begin{enumerate}[(a)]
\item An object $X$ of a category $\Cat$ is \emph{finitely presentable} if the hom-functor $\Cat(X,-):\Cat\ra\Set$ is finitary (i.e., preserves filtered colimits). Let $\Cat_{f}$ denote the full subcategory of all finitely presentable objects of $\Cat$. 
\item $\Cat$ is \emph{locally finitely presentable} if it is cocomplete, $\Cat_{f}$ is small up to isomorphism and every object of $\Cat$ is a filtered colimit of finitely presentable objects.
\end{enumerate}
\end{defn}

\begin{expl}\label{ex:lfp} Let $\Gamma$ be a finitary signature, that is, a set of operation symbols with finite arity.
\begin{enumerate}[(1)] 
\item Denote by $\Alg{\Gamma}$ the category of $\Gamma$-algebras and $\Gamma$-homomorphisms. A \emph{variety of algebras} is a full subcategory of $\Alg{\Gamma}$ closed under products, subalgebras (represented by injective homomorphisms) and homomorphic images (represented by surjective homomorphisms). Equivalently, by Birkhoff's theorem \cite{Birkhoff35}, varieties of algebras are precisely the classes of algebras definable by equations of the form $s=t$, where $s$ and $t$ are $\Gamma$-terms. Every variety of algebras is locally finitely presentable \cite[Corollary 3.7]{ar94}.
\item Similarly, let $\Alg_{\leq} \Gamma$ be the category of ordered $\Gamma$-algebras. Its objects are $\Gamma$-algebras carrying a poset structure for which every $\Gamma$-operation is monotone, and its morphisms are monotone $\Gamma$-homomorphisms. A \emph{variety of ordered algebras} is a full subcategory of $\Alg_\leq \Gamma$ closed under products, subalgebras and homomorphic images. Here subalgebras are represented by embeddings (injective $\Gamma$-homomorphisms that are both monotone and order-reflecting), and homomorphic images are represented by surjective $\Sigma$-homomorphisms that are monotone but not necessarily order-reflecting. 

Bloom \cite{Bloom1976} proved an ordered analogue of Birkhoff's theorem: varieties of ordered algebras are precisely the classes of ordered algebras definable by inequalities $s\leq t$ between $\Gamma$-terms. From this it is easy to see that every variety of ordered algebras is  finitary monadic over the \lfp category of posets, and hence \lfp \cite[Theorem and Remark 2.78]{ar94}.
\end{enumerate}
 In our applications we will work with the varieties in the table below. Observe  that all these varieties are \emph{locally finite}, that is, their finitely presentable objects are precisely the finite algebras.
\begin{center}
\begin{tabular}{|lll|}
\hline\rule[11pt]{0pt}{0pt}
$\Cat$ & objects & morphisms \\
\hline
$\Set$ & sets & functions\\
$\BA$ & boolean algebras & boolean morphisms\\
$\DL$ & distributive lattices with $0$ and $1\quad$ & lattice morphisms preserving $0$ and $1$\\
$\JSL$ & join-semilattices with $0$  & semilattice morphisms preserving $0$\\
$\Vect{\Field_2}\quad$ & vector spaces over the field $\Field_2$ & linear maps\\
$\Poset$ & partially ordered sets & monotone functions\\
\hline
\end{tabular}
\end{center}
\end{expl}

{~}  

\begin{rem}
For the rest of this paper the term ``variety'' refers to both varieties of algebras and varieties of ordered algebras.
\end{rem}

\begin{notation}
Fix a locally finitely presentable category $\Cat$ and a finitary endofunctor $T: \Cat\ra \Cat$.
\end{notation}

\begin{defn}
\label{nota:fpcoalg}
A \emph{$T$-coalgebra} is a pair $(Q,\gamma)$ of a $\Cat$-object $Q$ and a $\Cat$-morphism $\gamma: Q\ra TQ$. A \emph{homomorphism}
\[ h: (Q,\gamma) \ra (Q',\gamma') \]
of $T$-coalgebras is a $\Cat$-morphism $h: Q\ra Q'$ with $\gamma' \cdot h = Th\cdot \gamma$. We denote by
$\Coalg{T}$ the category of all $T$-coalgebras and their homomorphisms, and by $\FCoalg{T}$ the full subcategory of $T$-coalgebras  $(Q,\gamma)$ with finitely presentable carrier $Q$ (in the case where $\Cat$ is a locally finite variety, these are precisely the finite coalgebras).
\end{defn}

\begin{expl}\label{exp:tcoalg}
Given a finite alphabet $\Sigma$ and an object $\two$ in $\Cat$, the endofunctor
\[ T_\Sigma = \two\times \Id^\Sigma = \two \times \Id\times \Id\times \ldots \times \Id\] of $\Cat$ is finitary since in any \lfp category filtered colimits commute with finite products. If $\Cat$ is a locally finite variety and $\two$ is a two-element algebra in $\Cat$, then $T_\Sigma$-coalgebras are deterministic automata, see e.g. \cite{Rutten2000}. Indeed, by the universal property of the product, to give a coalgebra $Q\xra{\gamma} T_\Sigma Q=\two \times Q^\Sigma$ means precisely to give an algebra $Q$ (of states), morphisms $\gamma_a:Q\ra Q$ for every $a\in\Sigma$ (representing  $a$-transitions) and a morphism $f:Q\ra \two$ (representing final states). Here are two special cases:
\begin{enumerate}[(a)]
\item The usual concept of a deterministic automaton (without initial states) is captured as a coalgebra for $T_\Sigma$ where $\Cat=\Set$ and $\two=\{0,1\}$.
  An important example of a $T_\Sigma$-coalgebra is the automaton $\Reg_\Sigma$ of regular languages. Its states are the regular languages over $\Sigma$, its transitions are
  \[ \gamma_a(L) = a^{-1}L \qquad \text{for all $L\in\Reg_\Sigma$ and  $a\in\Sigma$,} \]
  and the final states are precisely the languages containing the empty word $\epsilon$.
\item Analogously, consider $T_\Sigma$ as an endofunctor of 
   $\Cat=\BA$ with $\two=\{0,1\}$ (the two-element boolean algebra).
   A coalgebra for $T_\Sigma$ is a deterministic automaton with a boolean algebra structure on the state set $Q$. Moreover, the transition maps $\gamma_a:Q\ra Q$ are boolean homomorphisms, and the final states (given by the inverse image of $1$ under $f:Q\ra \two$) form an ultrafilter. The above automaton $\Reg_\Sigma$ is also a $T_\Sigma$-coalgebra in $\BA$: the set of regular languages is a boolean algebra w.r.t. the usual set-theoretic operations, left derivatives preserve these operations, and the languages containing $\epsilon$ form a principal ultrafilter.
\end{enumerate}
\end{expl}

\begin{defn}\label{rem:rat_fix}
\begin{enumerate}[(a)] \item A coalgebra is called \emph{locally finitely presentable} if it is a filtered colimit of coalgebras with finitely presentable carrier.  The full subcategory of $\Coalg{T}$ of all locally finitely presentable coalgebras is denoted $\Coalglfp{T}$. 
\item The \emph{rational fixpoint} of $T$ is the filtered colimit \[r: \rho T\to T(\rho T)\] of \emph{all} coalgebras with finitely presentable carrier, i.e., the colimit of the diagram $\FCoalg{T}\monoto \Coalg{T}$.
\end{enumerate}
\end{defn}
The term ``rational fixpoint'' is justified by item (a) in the theorem below.

\begin{thm}[see \cite{m10}]\label{thm:lfp_indcomp}
\begin{enumerate}[(a)]
\vspace{-0.2cm}\item $r$ is an isomorphism.
\item  $\rho T$ is the terminal locally finitely presentable $T$-coalgebra, i.e., every locally finitely presentable $T$-coalgebra has a unique coalgebra homomorphism into $\rho T$.
\end{enumerate}
\end{thm}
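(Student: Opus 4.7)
The plan is to leverage the finitariness of $T$ to turn the colimit defining $\rho T$ into a filtered colimit preserved by $T$, and then extract both statements via standard colimit arguments.

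A preliminary observation is that $\FCoalg{T}$ is filtered. It is essentially small since $\Cat_f$ is (by local finite presentability of $\Cat$). It also has finite colimits: colimits in $\Coalg{T}$ are created by the forgetful functor $U:\Coalg{T}\to\Cat$ whenever $T$ preserves them on underlying objects, and since finite colimits of finitely presentable objects are finitely presentable, any finite colimit of fp coalgebras lies again in $\FCoalg{T}$. Every essentially small category with finite colimits is filtered, so the inclusion $\FCoalg{T}\hookrightarrow\Coalg{T}$ is a filtered diagram.

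For part (a), since $\rho T = \colim_{\FCoalg{T}} U$ is a filtered colimit in $\Cat$ and $T$ is finitary, the functor $T$ preserves it, yielding $T(\rho T) = \colim_{\FCoalg{T}} TU$ with injections $Tq_{(Q,\gamma)}$. The structure map $r$ is then the canonical morphism induced by the natural transformation $\gamma:U\Ra TU$ with components $\gamma:Q\to TQ$, i.e.\ $r\cdot q_{(Q,\gamma)} = Tq_{(Q,\gamma)}\cdot \gamma$. To invert $r$, the key observation is that each $\gamma$ is itself a coalgebra morphism $(Q,\gamma)\to(TQ,T\gamma)$; approximating the (possibly non-fp) coalgebra $(TQ,T\gamma)$ by its fp subcoalgebras and using their colimit injections into $\rho T$, one assembles a cocone on $TU$ that factors through $\rho T$, producing an inverse $s:T(\rho T)\to\rho T$. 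The identities $s\cdot r=\id$ and $r\cdot s=\id$ then follow by comparing the two cocones via the universal properties of the colimits.

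For part (b), given an lfp coalgebra $(A,\alpha)=\colim_{i\in\I}(Q_i,\gamma_i)$ with each $(Q_i,\gamma_i)\in\FCoalg{T}$, the colimit injections $q_{(Q_i,\gamma_i)}:Q_i\to\rho T$ form a cocone on the diagram defining $A$, inducing a unique coalgebra homomorphism $A\to\rho T$ (coalgebraicity is automatic because both sides are colimits of coalgebra morphisms, and $T$ preserves the filtered colimit for $A$). Uniqueness of this homomorphism among \emph{all} coalgebra maps $A\to\rho T$ follows because any such $h$ restricts on each $Q_i$ to a coalgebra morphism $(Q_i,\gamma_i)\to(\rho T,r)$, and the universal property of $\rho T$ as the colimit of $\FCoalg{T}$ forces this restriction to coincide with $q_{(Q_i,\gamma_i)}$.

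The main obstacle will be the inverse construction of $r$ in part (a): one must carefully produce the fp approximations of $(TQ,T\gamma)$ and verify naturality of the resulting cocone in $(Q,\gamma)\in\FCoalg{T}$. The remainder of the argument is essentially bookkeeping with filtered colimits and the universal properties invoked above.
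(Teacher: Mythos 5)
The paper itself does not prove this theorem (it is quoted from \cite{m10}), so your argument has to stand on its own, and it has a genuine gap precisely at the steps you defer as routine. In part (b), uniqueness does \emph{not} follow from ``the universal property of $\rho T$ as the colimit of $\FCoalg{T}$'': the universal property of a colimit says nothing about an arbitrary morphism from a diagram object into the colimit coinciding with the colimit injection (already in $\Set$ a diagram object usually admits many maps into the colimit besides its injection). What makes the argument work is the fact, due to Ad\'amek and Porst \cite{ap04} and invoked by the paper in the proof of Theorem~\ref{thm:lfpindcomp}, that every coalgebra with finitely presentable carrier is a finitely presentable \emph{object of} $\Coalg{T}$. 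Granting this, a homomorphism $f\colon (Q_i,\gamma_i)\to\rho T$ factors as $f=q_{(Q',\gamma')}\cdot g$ with $g\colon (Q_i,\gamma_i)\to(Q',\gamma')$ in $\FCoalg{T}$; since the diagram is the \emph{full} inclusion, $g$ is a connecting morphism, so cocone compatibility gives $f=q_{(Q_i,\gamma_i)}$. Without this lemma, the terminality of $\rho T$ in $\Coalglfp{T}$ --- the heart of (b) --- is simply not established. (A minor slip in the same spirit: the forgetful functor $\Coalg{T}\to\Cat$ creates \emph{all} colimits unconditionally; no preservation hypothesis on $T$ is needed for closure of $\FCoalg{T}$ under finite colimits.)

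The inverse construction in part (a) has a related problem. You propose to approximate $(TQ,T\gamma)$ by ``its fp subcoalgebras'', but the theorem is stated for an arbitrary finitary $T$ on an arbitrary locally finitely presentable $\Cat$: there is no reason that $(TQ,T\gamma)$ is a directed union of subcoalgebras with finitely presentable carrier, nor that subobjects are the right approximants at all. That $(TQ,T\gamma)$ is locally finitely presentable is a substantive lemma, not bookkeeping; the paper's own subcoalgebra characterization of lfp coalgebras (Proposition~\ref{prop:locfin}) is only available under the extra hypotheses of Assumptions~\ref{asm:sec3} (locally finite varieties, $T$ preserving monomorphisms and intersections), which are not part of the cited theorem. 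A workable repair is to prove (b) first using the \cite{ap04} lemma; then show (or cite) that each $(TQ_i,T\gamma_i)$ is locally finitely presentable, so that terminality yields homomorphisms $t_i\colon (TQ_i,T\gamma_i)\to\rho T$, unique and hence natural in $i$; the induced $s$ with $s\cdot Tq_i=t_i$ then satisfies $s\cdot r=\id$ and $r\cdot s=\id$ by the same ``every homomorphism from a member of $\FCoalg{T}$ into $\rho T$ is its injection'' argument together with joint epimorphicity of the $q_i$ and $Tq_i$. As written, both halves of your proof lean on the missing finite-presentability lemma, so the gap is essential rather than cosmetic.
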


\begin{expl}
\label{exp:ratfix}
The rational fixpoint of $T_\Sigma:\Set\ra\Set$ is the automaton
$\rho T_\Sigma = \Reg_\Sigma$
  of Example \ref{exp:tcoalg}(a), see \cite{amv_atwork}. For any locally finitely presentable $T_\Sigma$-coalgebra $(Q,\gamma)$, the unique homomorphism $(Q,\gamma)\ra \rho T_\Sigma$ maps each state $q\in Q$ to its accepted language
  \[ L_q = \{a_1\ldots a_n \in \Sigma^* ~:~  q \xra{a_1} q_1 \xra{a_2} q_2 \ra \cdots \xra{a_n} q_n \text{ for some final state } q_n\}.\]
  
\end{expl}
This example can be generalized:

\begin{thm}\label{thm:rhotlift}
Suppose that $\Cat$ is a locally finite variety and $T$ lifts a finitary functor $T_0$ on $\Set$, that is, the following diagram (where $U$ denotes the forgetful functor) commutes: 
\[
\xymatrix{
\Cat \ar[r]^T \ar[d]_{U}& \Cat \ar[d]^{U}\\
\Set \ar[r]_{T_0} & \Set
}
\] 
Then the functor $\mathbb{U}: \Coalg{T}\ra \Coalg{T_0}$ given by \[ Q \xra{\gamma} TQ \quad\mapsto\quad UQ \xra{U\gamma} UTQ = T_0 UQ\]
preserves the rational fixpoint, i.e.,
\[\mathbb{U}(\rho T) \cong \rho T_0.\]
\end{thm}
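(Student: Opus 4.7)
The plan is to identify $\mathbb{U}(\rho T)$ as the terminal locally finitely presentable $T_0$-coalgebra, so that by Theorem~\ref{thm:lfp_indcomp}(b) applied to $T_0$ it must coincide with $\rho T_0$. The key preparatory step is to lift the free/forgetful adjunction $F \dashv U : \Set \to \Cat$ to an adjunction $\mathbb{F} \dashv \mathbb{U}$ between $\Coalg{T_0}$ and $\Coalg{T}$. Concretely, for each $T_0$-coalgebra $(X,\delta)$ I would define $\bar\delta : FX \to TFX$ as the adjoint transpose of the composite
\[
X \xrightarrow{\delta} T_0 X \xrightarrow{T_0 \eta_X} T_0 UFX = UTFX,
\]
where $\eta$ is the unit of $F \dashv U$ and the equality uses the hypothesis $UT = T_0 U$. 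A short naturality computation then shows that this assignment extends to a functor $\mathbb{F}$ left adjoint to $\mathbb{U}$, with $\eta_X$ itself serving as a unit coalgebra morphism $(X,\delta) \to \mathbb{UF}(X,\delta)$. Because $\Cat$ is a locally finite variety, $F$ sends finite sets to finite algebras, so $\mathbb{F}$ restricts to $\FCoalg{T_0} \to \FCoalg{T}$.

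Next, I would verify that $\mathbb{U}(\rho T)$ lies in $\Coalglfp{T_0}$. Since $U$ is finitary monadic it preserves filtered colimits, and these are created in $\Coalg{T}$ from $\Cat$ by the finitary assumption on $T$; hence $\mathbb{U}$ preserves filtered colimits. By Theorem~\ref{thm:lfp_indcomp}(b), $\rho T$ is a filtered colimit of coalgebras in $\FCoalg{T}$, each of which is sent into $\FCoalg{T_0}$ by $\mathbb{U}$ (finite algebras become finite sets), so $\mathbb{U}(\rho T)$ is locally finitely presentable.

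For the universal property, first take $(X,\delta) \in \FCoalg{T_0}$. The adjunction provides a bijection
\[
\Coalg{T_0}\bigl((X,\delta),\, \mathbb{U}(\rho T)\bigr) \;\cong\; \Coalg{T}\bigl(\mathbb{F}(X,\delta),\, \rho T\bigr),
\]
whose right-hand side is a singleton since $\mathbb{F}(X,\delta) \in \FCoalg{T}$ and $\rho T$ is terminal in $\Coalglfp{T}$. For a general $(Y,\delta_Y) \in \Coalglfp{T_0}$, written as a filtered colimit of finite $T_0$-coalgebras $(X_j,\delta_j)$, the unique morphisms $h_j : (X_j,\delta_j) \to \mathbb{U}(\rho T)$ form a compatible cocone automatically (any two factorizations through a connecting map agree by the uniqueness already established at the finite level), and hence induce a unique morphism $(Y,\delta_Y) \to \mathbb{U}(\rho T)$. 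This exhibits $\mathbb{U}(\rho T)$ as terminal in $\Coalglfp{T_0}$, yielding $\mathbb{U}(\rho T) \cong \rho T_0$.

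The main obstacle will be the careful verification of the coalgebraic lifting $\mathbb{F} \dashv \mathbb{U}$: one must check that $\bar\delta$ is a bona fide $T$-coalgebra structure, that $\mathbb{F}$ is functorial, and that the adjoint transpose of any $T_0$-coalgebra morphism $(X,\delta) \to \mathbb{U}(Q,\gamma)$ is a $T$-coalgebra morphism $\mathbb{F}(X,\delta) \to (Q,\gamma)$. All of these reduce to manipulating the triangle identities together with the compatibility $UT = T_0 U$; once they are dispensed with, the remainder is a formal consequence of the adjunction and the universal property of the rational fixpoint.
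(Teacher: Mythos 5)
Your proposal is correct, but it is organized differently from the paper's proof, even though the technical core is the same. The decisive construction in both arguments is the ``free'' $T$-coalgebra on a $T_0$-coalgebra $(X,\delta)$, carried by $FX$ with structure given by the adjoint transpose of $T_0\eta_X\cdot\delta$ (the mate of the identity $UT=T_0U$), together with the two verifications that $\eta_X$ is a coalgebra homomorphism into its $\mathbb{U}$-image and that transposes of coalgebra homomorphisms $(X,\delta)\to\mathbb{U}(Q,\gamma)$ are again coalgebra homomorphisms; these are exactly steps (1) and (2) in the paper. The difference is what the two proofs do with this material. The paper uses it only to show that the restriction $\mathbb{V}:\FCoalg{T}\to\FCoalg{T_0}$ is cofinal (existence of a morphism into the image, plus the zig-zag $Q'\leftarrow \Phi Q\to Q''$ built from the two transposes), and then concludes by a purely diagrammatic colimit comparison, $\mathbb{U}(\rho T)\cong\colim(I_0\mathbb{V})\cong\colim I_0=\rho T_0$, using only that $\mathbb{U}$ is finitary. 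You instead assemble the same data into a lifted adjunction $\mathbb{F}\dashv\mathbb{U}$ and verify directly that $\mathbb{U}(\rho T)$ lies in $\Coalglfp{T_0}$ and satisfies the universal property of Theorem~\ref{thm:lfp_indcomp}(b): the hom-set bijection reduces terminality at finite $T_0$-coalgebras to terminality of $\rho T$ among finite $T$-coalgebras, and the general lfp case follows by writing an lfp coalgebra as a filtered colimit of finite ones and using joint epimorphicity of the colimit injections (your sketch of this last uniqueness step is right, though worth spelling out). Your route is more conceptual and reusable -- the adjunction lifting is a standard tool and the terminality argument is transparent -- but it requires establishing more structure (functoriality of $\mathbb{F}$, naturality of the bijection, and the colimit-extension step), whereas the paper's cofinality argument gets by with the bare existence and zig-zag statements and no appeal to the universal property of $\rho T_0$ beyond its defining colimit. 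Two cosmetic points: the fact that $\rho T$ is a filtered colimit of finite coalgebras is the definition of the rational fixpoint rather than Theorem~\ref{thm:lfp_indcomp}(b), and filtered colimits of coalgebras are created in the base category for any $T$, not because $T$ is finitary; neither affects the argument.
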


\begin{proof}
The functor $\mathbb{U}$ is finitary (since filtered colimits of $T$-coalgebras are formed on the level of $\Cat$ and hence on the level of $\Set$) and restricts to finite coalgebras, so we have a commutative square
\[
\xymatrix{
\Coalg{T} \ar[r]^{\mathbb{U}} & \Coalg{T_0}\\
\FCoalg{T} \ar@{>->}[u]^I \ar[r]_{\mathbb{V}} & \FCoalg{T_0} \ar@{>->}[u]_{I_0}
}
\]
where $I$ and $I_0$ are the inclusion functors. We will prove below that $\mathbb{V}$ is cofinal, from which the claim follows:
\begin{align*}
\mathbb{U}(\rho T) &= \mathbb{U}(\colim I) & \text{def. } \rho T\\
&\cong \colim(\mathbb{U}I) & \text{$\mathbb{U}$ finitary}\\
&\cong \colim(I_0 \mathbb{V}) & \text{$\mathbb{U}I = I_0 \mathbb{V}$}\\
&\cong \colim(I_0) & \text{$\mathbb{V}$ cofinal}\\
&= \rho T_0 & \text{def. $\rho T_0$}
\end{align*}
The cofinality of $\mathbb{V}$ amounts to proving that
\begin{enumerate}[(1)]
\item for every finite $T_0$-coalgebra $(Q,\gamma)$ there exists a $T_0$-coalgebra homomorphism $(Q,\gamma)\ra \mathbb{V}(Q',\gamma')$ for some finite $T$-coalgebra  $(Q',\gamma')$, and
\item any two such coalgebra homomorphisms are connected by a zig-zag.
\end{enumerate}
Proof of (1). Let $\Phi:\Set\ra\Cat$ be the left adjoint of the forgetful functor $U: \Cat\ra \Set$, and denote the unit and counit of the adjunction by $\eta$ and $\epsilon$, respectively. Given a finite $T_0$-coalgebra $Q\xra{\gamma} T_0 Q$ form the ``free'' $T$-coalgebra 
\[ \Phi Q \xra{\Phi \gamma} \Phi T_0 Q \xra{\Phi T_0 \eta_Q} \Phi T_0 U\Phi Q = \Phi UT \Phi Q \xra{\epsilon_{T\Phi Q}} T\Phi Q.\]
Note that $\Phi Q$ is finite because $\Cat$ is locally finite. Then \[\eta_Q: (Q,\gamma) \ra   \mathbb{V}(\Phi Q, \epsilon_{T\Phi Q}\cdot \Phi T_0\eta_Q \cdot \Phi \gamma)\] is a coalgebra homomorphism. Indeed, the diagram below commutes by the naturality of $\eta$ and the triangle identity $U\epsilon \cdot \eta U = \id$:
\[
\xymatrix{
Q \ar[rrrr]^\gamma \ar[dd]_{\eta_Q} &&&& T_0 Q \ar[dd]^{T_0\eta_Q} \ar@/_3ex/[ddlll]_{\eta_{T_0 Q}} \ar[dl]_{T_0\eta_Q}\\
&&& T_0U\Phi Q \ar@{=}[dr] \ar[dl]_{\eta_{T_0U\Phi Q}} &\\
U\Phi Q \ar[r]_{U\Phi \gamma} & U\Phi T_0 Q \ar[r]_<<<<<{U\Phi T_0\eta_Q} & U\Phi T_0U\Phi Q = U\Phi UT\Phi Q \ar[rr]_{U\epsilon_{T\Phi Q}}&& UT\Phi Q = T_0 U\Phi Q
}
\]
Proof of (2). Given any coalgebra homomorphism $h: (Q,\gamma)\ra \mathbb{V}(Q',\gamma')$ there exists a unique $\DCat$-morphism $\overline h: \Phi Q \ra Q'$ with $U\overline{h}\cdot \eta_Q = h$ by the universal property of $\eta$. We claim that $\overline{h}$ is a coalgebra homomorphism 
\[\overline{h}: (\Phi Q, \epsilon_{T\Phi Q}\cdot \Phi T_0\eta_Q \cdot \Phi \gamma) \ra (Q',\gamma').\]
Indeed, the lower square in the diagram below commutes when precomposed with $\eta_Q$, from which the equation $\gamma' \cdot \overline{h} = T\overline{h} \circ \epsilon_{T\Phi Q}\cdot \Phi T_0\eta_Q \cdot \Phi \gamma$ follows.
\[
\xymatrix{
Q \ar[rrr]^\gamma \ar[d]_{\eta_Q} \ar@/_8ex/[dd]_h &&& T_0 Q \ar[d]^{T_0\eta_Q} \ar@/^8ex/[dd]^{T_0 h}\\
U\Phi Q \ar[d]_{U\overline{h}} \ar[rrr]_{U(\epsilon_{T\Phi Q}\cdot \Phi T_0\eta_Q \cdot \Phi \gamma)} &&& T_0 U\Phi Q \ar[d]^{T_0U\overline{h}}\\
UQ' \ar[rrr]_{U\gamma'} &&& T_0 UQ'
}
\]
 Now given two coalgebra homomorphisms $h: (Q,\gamma)\ra \mathbb{V}(Q',\gamma')$ and $k: (Q, \gamma) \to \mathbb{V}(Q'', \gamma'')$, the desired zig-zag in $\Coalgfp{T}$ is
$\xymatrix@1@C-.35pc{
Q' & \Phi Q \ar[l]_{\ol h} \ar[r]^-{\ol k} & Q''.
}$\qed
\end{proof}

\begin{cor}\label{cor:ratlift}
Let $\Cat$ be a locally finite variety with a two-element algebra $\two$. Then the rational fixpoint of $T_\Sigma = \two \times \Id^\Sigma: \Cat\ra \Cat$ is carried by the automaton $\Reg_\Sigma$ of Example \ref{exp:tcoalg}(a).  For any locally finitely presentable $T_\Sigma$-coalgebra $(Q,\gamma)$, the unique homomorphism $(Q,\gamma)\ra \rho T_\Sigma$ maps each state $q\in Q$ to its accepted language.
\end{cor}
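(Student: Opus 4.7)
The plan is to apply Theorem \ref{thm:rhotlift} with $T = T_\Sigma: \Cat \to \Cat$ and $T_0 = \{0,1\} \times \Id^\Sigma: \Set \to \Set$, and then invoke Example \ref{exp:ratfix} to identify $\rho T_0$ with the automaton $\Reg_\Sigma$.

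First I would verify that $T_\Sigma$ lifts $T_0$ along the forgetful functor $U: \Cat \to \Set$. Finite products in any variety of (possibly ordered) algebras are computed on underlying sets, so $U$ preserves finite products. Combined with $U\two = \{0,1\}$, which holds because $\two$ is by hypothesis a two-element algebra in $\Cat$, this yields the required natural isomorphism
\[ U T_\Sigma Q \;=\; U(\two \times Q^\Sigma) \;\cong\; U\two \times (UQ)^\Sigma \;=\; \{0,1\} \times (UQ)^\Sigma \;=\; T_0 UQ. \]
Thus Theorem \ref{thm:rhotlift} applies and gives $\mathbb{U}(\rho T_\Sigma) \cong \rho T_0$. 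By Example \ref{exp:ratfix} the right-hand side is the automaton $\Reg_\Sigma$ of Example \ref{exp:tcoalg}(a), so the underlying deterministic automaton of $\rho T_\Sigma$ is precisely $\Reg_\Sigma$.

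For the second claim, let $(Q,\gamma)$ be a \lfp $T_\Sigma$-coalgebra. By Theorem \ref{thm:lfp_indcomp}(b) there is a unique $T_\Sigma$-coalgebra homomorphism $h: (Q,\gamma) \to \rho T_\Sigma$. Its image under $\mathbb{U}$, composed with the isomorphism $\mathbb{U}(\rho T_\Sigma) \cong \Reg_\Sigma$, is a $T_0$-coalgebra homomorphism from the underlying set-coalgebra of $(Q,\gamma)$ to $\Reg_\Sigma$. By the set-level case recalled in Example \ref{exp:ratfix}, any such homomorphism sends every state $q$ to its accepted language $L_q$, as required.

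The only real obstacle is a bit of bookkeeping rather than a conceptual one: in order to present the statement as an identification of $\rho T_\Sigma$ with $\Reg_\Sigma$ \emph{as an automaton}, one must unwind the proof of Theorem \ref{thm:rhotlift} enough to see that the structure map of $\mathbb{U}(\rho T_\Sigma)$ really corresponds, transition-by-transition, to left derivation on $\Reg_\Sigma$ together with the membership-of-$\epsilon$ predicate. This is routine once one notes that $\mathbb{U}$ is defined simply by postcomposing the coalgebra structure with $U$, so the isomorphism $\mathbb{U}(\rho T_\Sigma) \cong \Reg_\Sigma$ is an isomorphism of $T_0$-coalgebras on the nose.
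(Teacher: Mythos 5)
Your proposal is correct and follows essentially the same route as the paper, which likewise applies Theorem \ref{thm:rhotlift} to $T=T_\Sigma$ and $T_0=\{0,1\}\times\Id^\Sigma$ and then invokes Example \ref{exp:ratfix}; the paper simply leaves the lifting condition and the accepted-language claim implicit, which you have spelled out correctly.
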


\begin{proof}
Apply Theorem \ref{thm:rhotlift} to $T=T_\Sigma$ and $T_0 = \{0,1\}\times \Id^\Sigma$. Since $\rho T_0 = \Reg_\Sigma$ by Example \ref{exp:ratfix}, the claim follows. \qed
\end{proof}

Next we will show that the locally finitely presentable $T$-coalgebras arise as a ``free completion'' of the coalgebras with finitely presentable carrier (Theorem \ref{thm:lfpindcomp} below). 

\begin{rem}\label{rem:indcomp}
\begin{enumerate}[(a)]\item Recall that the \emph{free completion under filtered colimits} of a small category $\ACat$ is a full embedding $\ACat\hookrightarrow \Ind{\ACat}$ such that $\Ind{\ACat}$ has filtered colimits and every functor $F: \ACat \ra \BCat$ into a category $\BCat$ with filtered colimits has a finitary extension $\overline{F}: \Ind{\ACat}\ra \BCat$, unique up to natural isomorphim:
\[
\xymatrix{
\ACat \ar@{>->}[r] \ar[dr]_F & \Ind{\ACat} \ar@{-->}[d]^{\overline{F}} \\
& \BCat
}
\]
 This determines $\Ind{\ACat}$ up to equivalence. If $\ACat$ has finite colimits then $\Ind{\ACat}$ is \lfp and $(\Ind{\ACat})_{f} \cong \ACat$. Conversely, every \lfp category $\Cat$ arises in this way: $\Cat\cong\IndC{\Cat_{f}}$.
\item If $\ACat$ is a join-semilattice then $\Ind{\ACat}$ is its ideal completion, see Remark \ref{rem:idcomp}.
\end{enumerate}
\end{rem}

\begin{lem}
  \label{lem:coref}
Let $\BCat$ be a cocomplete category and $J : \ACat \monoto \BCat$ be a small full subcategory of finitely presentable objects closed under finite colimits. Then the unique finitary extension $J^* : \IndC{\ACat} \to \BCat$ forms a full coreflective subcategory.
\end{lem}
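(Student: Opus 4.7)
The plan is to prove the lemma by constructing an explicit right adjoint $R\colon\BCat\to\IndC{\ACat}$ to $J^*$ and showing that $J^*$ is fully faithful. Together these two facts are precisely the statement that $J^*$ exhibits $\IndC{\ACat}$ as a full coreflective subcategory of $\BCat$.

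To construct $R$, I would, for each $B\in\BCat$, form the comma category $\ACat\downarrow B$ whose objects are pairs $(A,f\colon JA\to B)$ with the evident morphisms in $\ACat$. Since $\ACat$ has finite colimits by hypothesis and $J$ preserves them (a finite colimit computed in $\ACat$ remains colimiting in $\BCat$ because $\ACat$ is a full subcategory closed under finite colimits), the comma category $\ACat\downarrow B$ also has finite colimits and is in particular filtered. I then set $RB=\colim_{(A,f)\in\ACat\downarrow B} A$ in $\IndC{\ACat}$, with functoriality in $B$ being clear. To verify $J^*\dashv R$ it suffices, by finitariness of $J^*$, to establish a natural bijection on hom-sets for objects $A\in\ACat$: using the finite presentability of $A$ in $\IndC{\ACat}$ one gets $\IndC{\ACat}(A,RB)\cong\colim_{(A',f')}\ACat(A,A')$, and the map sending the class of $h\colon A\to A'$ to $f'\cdot Jh$ is a bijection onto $\BCat(JA,B)$. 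Surjectivity is witnessed by the pair $(A,g)$ with $\id_A$ for any $g\colon JA\to B$, while injectivity follows by coequalizing parallel pairs inside a suitable coproduct $(A'+A'',[f',f''])$, again invoking that finite colimits in $\ACat$ are computed in $\BCat$. Writing a general $X\in\IndC{\ACat}$ as a filtered colimit $X=\colim_i X_i$ of $\ACat$-objects, finitariness of $J^*$ together with finite presentability of $JX_i$ in $\BCat$ extends this to $\IndC{\ACat}(X,RB)\cong\lim_i\BCat(JX_i,B)\cong\BCat(J^*X,B)$, naturally in both arguments.

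For full faithfulness of $J^*$, writing $X=\colim_i X_i$ and $Y=\colim_j Y_j$ as filtered colimits of $\ACat$-objects, both $\IndC{\ACat}(X,Y)$ and $\BCat(J^*X,J^*Y)$ are computed by the same double-colimit formula $\lim_i\colim_j\ACat(X_i,Y_j)$: on the left by the standard description of morphisms in Ind-completions, and on the right using that each $JX_i$ is finitely presentable in $\BCat$ (as $J$ factors through $\BCat_f$) and that $J$ is fully faithful on $\ACat$. The comparison map is therefore an isomorphism. Equivalently one may verify that the unit $\eta_X\colon X\to RJ^*X$ of the adjunction is invertible, which yields the same conclusion.

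The main obstacle I anticipate is verifying that $\ACat\downarrow B$ is filtered, as this is the one place where the hypothesis on closure under finite colimits is genuinely used---both to make $J$ preserve finite colimits and to ensure that coequalizers exist in the comma category, which is needed for the injectivity argument in the hom-set bijection. Once that is in place, the remaining verifications are routine manipulations of the universal property of the Ind-completion and of finite presentability.
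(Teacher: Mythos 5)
Your proposal is correct, and its core construction coincides with the paper's: the coreflection of $B$ is in both cases the (filtered) colimit over the comma category $\ACat/B$ (your $RB$ computed in $\IndC{\ACat}$ is sent by the finitary $J^*$ to exactly the object $\ol B$ the paper forms in $\BCat$). The difference lies in how the two required facts are verified. For full faithfulness of $J^*$ the paper simply cites Johnstone's theorem on Ind-completions, whereas you reprove it via the double formula $\lim_i\colim_j\ACat(X_i,Y_j)$, using that each $JX_i$ is finitely presentable in $\BCat$; this makes your argument more self-contained at the cost of redoing a standard result. For coreflectivity the paper checks couniversality of $b:\ol B\to B$ directly: existence of the factorization comes from the colimit cocone, and uniqueness from finite presentability of the objects $A_i$ mapped into the filtered colimit $\ol B$. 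You instead exhibit a genuine right adjoint $R$ and verify the hom-set bijection first on objects of $\ACat$, where you need filteredness of $\ACat/B$ (via closure under finite colimits, a fact the paper uses only implicitly when calling the diagram filtered) and a coproduct-plus-coequalizer argument inside $\ACat$ for injectivity, before passing to general objects by writing them as filtered colimits. Both routes are sound; yours isolates exactly where the closure under finite colimits enters (filteredness of $\ACat/B$ and the coequalizer step), while the paper's couniversality check avoids the hom-colimit bookkeeping entirely.
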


\begin{proof}
  By the theorem in Section~VI.1.8 of Johnstone \cite{j82}, we
  know that $J^*$ is a full embedding so that $\IndC\ACat$ can be
  identified with the full subcategory of $\BCat$ given by all
  filtered colimits of objects from $\ACat$. We will show that this
  full subcategory is coreflective. Let $B$ be an object of $\BCat$
  and define $\ol B$ to be the colimit of the filtered diagram
  \[
  \xymatrix@1{
    \ACat /B \ar[r] & \ACat \ar@{ (->}[r] & \BCat,
  }
  \]
  where the first arrow is the canonical projection functor and the
  second one the inclusion functor. We denote the corresponding colimit injections 
  by 
  \[
  \inj_f: A \to \ol B \qquad \text{for every $f: A\to B$ in $\ACat /B$.}
  \]
  Clearly, the objects in $\ACat/B$ form a cocone on the above diagram
  and so we have a unique morphism $b: \ol B \to B$ such that
  \[
  b \cdot \inj_f = f \qquad \text{for every $f: A \to B$ in $\ACat/B$.}
  \]
  We will now prove this morphism $b$ to be couniversal. To this end, let
  $A$ be an object of $\IndC\ACat$, i.\,e.,
  \[
  A = \colim\limits_{i \in I} A_i
  \]
  is a filtered colimit in $\BCat$ of objects from $\ACat$ with
  colimit injections $a_i: A_i \to A$, $i \in I$. Given a morphism $f:
  A \to B$ in $\BCat$, the morphism $f \cdot a_i: A_i \to B$ is an
  object of $\ACat/B$, and for each connecting morphism $a_{i,j}: A_i
  \to A_j$ we have
  \[
  (f \cdot a_j) \cdot a_{i,j} = f \cdot a_i.
  \]
  Thus, $a_{i,j}$ is a morphism in $\ACat /B$ and so 
  \[
  \inj_{f \cdot a_j} \cdot a_{i,j} = \inj_{f\cdot a_i},
  \]
  i.\,e., the morphisms $\inj_{f \cdot a_i}: A_i \to \ol B$ form a
  cocone. So we get a unique $\ol f: A \to \ol B$ such that
  \[
  \ol f \cdot a_i = \inj_{f \cdot a_i} \qquad \text{for every $i \in I$.}
  \]
  Now the following diagram commutes:
  \[
  \xymatrix{
    &
    \ol B
    \ar[r]^-b & B
    \\
    A_i
    \ar[ru]^-{\inj_{f \cdot a_i}}
    \ar[r]_-{a_i}
    &
    A
    \ar[u]_{\ol f}
    \ar[ru]_f
    }
  \]
  Indeed, the outside and left-hand triangle commute, and so the
  right-hand one commutes when precomposed with every $a_i$, $i \in
  I$, whence that triangle commutes since the colimit injections $a_i$
  form a jointly epimorphic family. 

  We still need to show that $\ol f$ is unique such that $b \cdot \ol f =
  f$. So assume that $\ol f: A \to \ol B$ is any such morphism. Fix $i
  \in I$. Then, since $\ol B$ is a filtered colimit and $A_i$ is
  finitely presentable, it follows that there exists some $g: A_i' \to
  B$ in $\ACat/B$ and some morphism $\ol f': A_i \to A_i'$ such that
  the square below commutes:
  \[
  \xymatrix{
    A_i' 
    \ar[r]^-{\inj_g} 
    & 
    \ol B
    \\
    A_i
    \ar[u]^{\ol f '}
    \ar[r]_-{a_i}
    &
    A
    \ar[u]_{\ol f}
    }
  \]
  It follows that $\ol f'$ is a connecting morphism in $\ACat/B$ from
  $f \cdot a_i$ to $g$:
  \[
  g \cdot \ol f' = b \cdot \inj_g \cdot \ol f' = b \cdot \ol f \cdot a_i = f \cdot a_i.
  \]
  Therefore we get $\inj_g \cdot \ol f' = \inj_{f \cdot a_i}$ so that
  \[
  \ol f \cdot a_i = \inj_g \cdot \ol f' = \inj_{f \cdot a_i}, 
  \]
  which determines $\ol f$ uniquely. This completes the proof. \qed
\end{proof}

\begin{thm}\label{thm:lfpindcomp}
 $\Coalglfp{T}$ is the $\Ind$-completion of $\FCoalg{T}$ and forms a coreflective subcategory of $\Coalg{T}$.
\end{thm}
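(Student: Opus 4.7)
The plan is to derive the theorem directly from Lemma \ref{lem:coref} applied to the inclusion $J:\FCoalg{T}\hookrightarrow\Coalg{T}$. To this end, I would first verify that the three hypotheses of the lemma are met in this setting: (i) $\Coalg{T}$ is cocomplete (in fact \lfp) because $T$ is a finitary endofunctor on the \lfp category $\Cat$, and colimits of coalgebras are then computed on the level of carriers in $\Cat$; (ii) every coalgebra whose carrier is finitely presentable in $\Cat$ is a finitely presentable object of $\Coalg{T}$, a standard consequence of the finitarity of $T$ together with the creation of filtered colimits by the forgetful functor $\Coalg{T}\to\Cat$; and (iii) $\FCoalg{T}$ is closed under finite colimits in $\Coalg{T}$, since such colimits are formed in $\Cat$ and $\Cat_f$ is closed under finite colimits. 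Essential smallness of $\FCoalg{T}$ follows from that of $\Cat_f$.

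Granted these facts, Lemma \ref{lem:coref} provides a fully faithful coreflective embedding $J^{*}:\Ind(\FCoalg{T})\hookrightarrow\Coalg{T}$ whose essential image is exactly the closure of $\FCoalg{T}$ under filtered colimits in $\Coalg{T}$. Because filtered colimits in $\Coalg{T}$ are again computed on the level of $\Cat$, this image coincides with the subcategory $\Coalglfp{T}$ introduced in Definition \ref{rem:rat_fix}. This simultaneously identifies $\Coalglfp{T}\simeq\Ind(\FCoalg{T})$ and establishes its coreflectivity in $\Coalg{T}$, giving both assertions of the theorem at once.

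The main technical point is hypothesis (ii): that for a coalgebra $(Q,\gamma)$ with $Q$ finitely presentable in $\Cat$, every morphism $h:(Q,\gamma)\to(D,\delta)$ into a filtered colimit $(D,\delta)=\colim_{i\in I}(D_i,\delta_i)$ in $\Coalg{T}$ factors essentially uniquely through some $(D_i,\delta_i)$. Finite presentability of $Q$ in $\Cat$ yields a factorization of the underlying morphism $Q\to D$ through some colimit injection $D_i\to D$; the finitarity of $T$ (applied to $TD=\colim_i TD_i$) then ensures that, after passing to a sufficiently late $j\geq i$, this factorization is compatible with the coalgebra structures, and a similar argument handles uniqueness. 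This step is routine and part of the standard theory of \lfp categories of coalgebras (see, e.g., \cite{ar94,m10}), so I would cite rather than rederive it.
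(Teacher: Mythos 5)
Your proposal is correct and follows essentially the same route as the paper: apply Lemma \ref{lem:coref} to $\FCoalg{T}\hookrightarrow\Coalg{T}$ after checking cocompleteness, closure of $\FCoalg{T}$ under finite colimits, and finite presentability of coalgebras with finitely presentable carrier, then identify the image of $J^*$ with $\Coalglfp{T}$. The only difference is cosmetic: the paper cites \cite{ap04} for the finite presentability of such coalgebras in $\Coalg{T}$, whereas you sketch the standard finitarity argument and cite general references.
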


\begin{proof}
We apply the previous lemma to $\ACat = \Coalgfp{T}$ and $\BCat = \Coalg{T}$. Then $\BCat$ is clearly cocomplete, and $\ACat$ is closed under finite colimits (since colimits of coalgebras are constructed in the base category and finitely presentable objects are closed under finite colimits). Moreover, as shown in ~\cite{ap04}, every $T$-coalgebra with finitely presentable carrier is a  finitely presentable object of $\Coalg{T}$.

Hence Lemma~\ref{lem:coref} yields that $J^*: \IndC{\Coalgfp{T}}\hra \Coalg{T}$ is a full coreflective subcategory. The definition of $J^*$ is that it takes formal filtered diagrams of objects in $\Coalgfp{T}$ and constructs their colimits. Therefore its image is precisely $\Coalglfp{T}$, which gives the desired equivalence $\Coalglfp{T} \cong \IndC{\FCoalg{T}}$ and that $\Coalglfp{T}$ is coreflective.\qed
\end{proof}

\section{Algebraic and Coalgebraic Recognition}\label{sec:algcoalgrec}

We are ready to present our first take on the local Eilenberg theorem. At the heart of our approach lies the investigation of a duality for our categories of interest (e.g. Stone duality between finite boolean algebras and finite sets) and the induced algebra-coalgebra duality.

\begin{defn}
Two categories $\Cat$ and $\DCat$ are called \emph{predual} if their full subcategories $\Cat_f$ and $\DCat_f$ of finitely presentable objects are dually equivalent, that is, $\Cat_f \cong \DCat_f^{op}$.
\end{defn}

\begin{expl}\label{expl:dualpairs}
The pairs of locally finite varieties listed in the table below are predual.
\begin{center}
\begin{tabular}{|ll|}
\hline\rule[11pt]{0pt}{0pt}
$\Cat\quad\quad\quad$   & $\DCat\quad\quad\quad$ \\
\hline
$\BA$  & $\Set$\\ 
$\DL$ & $\Poset$\\ 
$\JSL$  & $\JSL$\\
$\Vect{\Field_2}\quad$  & $\Vect{\Field_2}$\\
\hline
\end{tabular}
\end{center}
In more detail:
\begin{enumerate}[(a)]
  \item The categories $\BA$ and $\Set$ are predual via Stone duality. The equivalence $\BA_f \xra{\cong} \Set_f^{op}$ assigns to each finite boolean algebra the set of all atoms, and its associated equivalence $\Set_f^{op} \xra{\cong} \BA_f$ sends each finite set to the boolean algebra of all subsets.
\item The categories $\DL$ and $\Poset$ are predual via Birkhoff duality. The equivalence $\mathsf{DL}_{01,f} \xra{\cong} \Poset_f^{op}$ assigns to each finite distributive lattice the subposet of all join-irreducible elements, and its associated equivalence $\Poset_f^{op} \xra{\cong} \mathsf{DL}_{01,f}$ sends each finite poset to the lattice of all down-closed subsets.
\item The category $\JSL$ is self-predual. The equivalence $\mathsf{JSL}_{0,f} \xra{\cong} \mathsf{JSL}_{0,f}^{op}$ sends each finite join-semilattice $X$ to its dual poset $X^{op}$.
\item The category $\Vect{\Field_2}$ is self-predual. The equivalence $\Vect_f{\Field_2} \xra{\cong} (\Vect_f{\Field_2})^{op}$ sends each finite $\Field_2$-vector space $X$ to the dual space $X^* = \hom(X,\Field_2)$ of all linear maps from $X$ to $\Field_2$.
\end{enumerate}
\end{expl}

\begin{defn}
Let $\Cat$ and $\DCat$ be predual categories. Two functors $T: \Cat \ra \Cat$ and $L: \DCat \ra \DCat$ are called \emph{predual} if they restrict to functors $T_f: \Cat_f\ra\Cat_f$ and $L_f: \DCat_f\ra\DCat_f$ and these restrictions are dual, i.e., the following diagram commutes up to natural isomorphism:
\[
\xymatrix{
\DCat_f^{op} \ar[r]^{L_f^{op}}  & \DCat_f^{op}  \\
\Cat_f \ar[u]^{\cong} \ar[r]_{T_f} & \Cat_f \ar[u]_{\cong}  
}
\]
\end{defn}

\begin{assumptions}\label{asm:sec3}
For the remainder of this paper we fix the following data:
\begin{enumerate}[(a)]
\item $\Cat$ is a locally finite variety of algebras and $\DCat$ is a locally finite variety of algebras or ordered algebras. 
\item $\Cat$ and $\DCat$ are predual with equivalence functors
\[ \widehat{(\mathord{-})} : \Cat_f \xra{\cong} \DCat_f^{op} \quad\text{and}\quad \overline{(\mathord{-})}: \DCat_f^{op} \xra{\cong} \Cat_f.\]
\item $T: \Cat\ra \Cat$ and $L:\DCat\ra\DCat$ are predual finitary functors. Moreover, $T$ preserves monomorphisms and intersections and $L$ preserves epimorphisms.
\end{enumerate}
\end{assumptions}

\begin{expl}\label{exp:TL}
The endofunctor $T_\Sigma = \two\times \Id^\Sigma$ of $\Cat$ (see Example \ref{exp:tcoalg}) has the predual endofunctor \[L_\Sigma = \one + \coprod_\Sigma \Id\] of $\DCat$, where $\one = \widehat{\two}$. Clearly both functors are finitary, $T_\Sigma$ preserves monomorphisms and intersections and $L_\Sigma$ preserves epimorphisms.
\end{expl}

\begin{defn}
\label{not:lalg}
An \emph{$L$-algebra} $(A,\alpha)$ consists of a $\DCat$-object $A$ and a $\DCat$-morphism $\alpha:LA\ra A$. A \emph{homomorphism}
\[ h: (A,\alpha) \ra (A',\alpha') \]
of $L$-algebras is a $\DCat$-morphism $h: A\ra A'$ with $h\cdot \alpha = \alpha'\cdot Lh$.
We denote by $\Alg{L}$ the category of $L$-algebras, and by $\FAlg{L}$ the full subcategory of finite $L$-algebras, that is, algebras $(A,\alpha)$ with finite carrier $A$.
\end{defn}

\begin{expl} Algebras for the functor $L_\Sigma$ of Example \ref{exp:TL} correspond to deterministic automata in $\DCat$ with an initial state but without final states. Indeed, by the universal property of the coproduct an $L_\Sigma$-algebra $L_\Sigma A = \one + \coprod_\Sigma A \xra{\alpha} A$ is determined by a morphism $i: \one \ra A$ (specifying an initial state) and morphisms $\alpha_a: A\ra A$ for each $a\in \Sigma$ (specifying the $a$-transitions). Here are two special cases:
\begin{enumerate}[(a)]
  \item If $\Cat=\BA$ and $\DCat = \Set$ then $\one=\widehat{\two}$ is the one-element set since the two-element boolean algebra $\two$ has one atom, so $L_\Sigma$-algebras are precisely the (classical) deterministic automata with an initial state but without final states.
 \item Similarly, if $\Cat=\DL$ and $\DCat = \Poset$ then $\one$ is the one-element poset, and $L_\Sigma$-algebras correspond to \emph{ordered} deterministic automata with an initial state but without final states.
\end{enumerate}
\end{expl}

\begin{rem}\label{rem:algcoalgdual}
The categories $\FCoalg{T}$ and $\Algfp{L}$ are dually equivalent. Indeed, the equivalence functor $\widehat{(\mathord{-})}: \Cat_f \xra{\cong} \DCat_f^{op}$ lifts to an equivalence $\FCoalg{T}\xra{\cong} (\Algfp{L})^{op}$ given by 
\[ (Q\xra{\gamma} TQ) \quad\mapsto\quad  (L\widehat{Q} = \widehat{TQ} \xra{\widehat \gamma} \widehat Q). \]
\end{rem}

\begin{notation}\label{not:order}
\begin{enumerate}
  \item By a \emph{subcoalgebra} of a $T$-coalgebra $(Q,\gamma)$ is meant one represented by a homomorphism $m:(Q',\gamma')\monoto(Q,\gamma)$ with $m$ monic in $\Cat$. Subcoalgebras are ordered as usual: $m\leq m'$ iff $m$  factorizes through $m'$ in $\Coalg{T}$. We denote by $\Sub{\rho T}$
the poset of all subcoalgebras of $\rho T$, and by
$\FPSub{\rho T}$
the subposet of all \emph{finite subcoalgebras} of $\rho T$.
\item
Likewise, a \emph{quotient algebra} of an $L$-algebra is one represented by an epimorphism in $\DCat$. Quotient algebras are ordered by $e\leq e'$ iff $e$  factorizes through $e'$ in $\Alg{L}$. For the initial $L$-algebra $\mu L$, which exists because $L$ is finitary, we have the posets $\Quofp{\mu  L}\seq \Quo(\mu L)$ of all (finite) quotient algebras of $\mu L$.
\end{enumerate}
\end{notation}

\begin{rem}\label{rem:algfact}
\begin{enumerate}[(a)]
\item Since $L$ preserves epimorphisms, $\Alg{ L}$ has a factorization system consisting of homomorphisms carried by epimorphisms and strong monomorphisms in $\DCat$, respectively. Indeed,
  given any homomorphism $h: (A, \alpha) \to (B, \beta)$ of
  $ L$-algebras take its (epi, strong mono)-factorization $h = m \cdot e$ in $\DCat$
  and then use that $ L$ preserves epis and diagonalization to obtain an
  algebra structure on the domain of $m$ such that
  $m$ and $e$ are $ L$-algebra homomorphisms:
  \[
  \xymatrix{
     L A
    \ar@{->>}[d]_{ L e}
    \ar[r]^-\alpha
    &
    A
    \ar@{->>}[d]^e
    \\
     L C
    \ar[d]_{ L m}
    \ar@{-->}[r]
    &
    C
    \ar@{ >->}[d]^m
    \\
     L B
    \ar[r]_-{\beta}
    &
    B
    }
  \]
  Moreover, given a commutative square of $L$-algebra homomorphisms, where $e$ is an epimorphism and $m$ is a strong monomorphism in $\DCat$, the unique diagonal $d$ is easily seen to be a homomorphism of $L$-algebras.
  \[ 
 \xymatrix{
 (A,\alpha) \ar@{->>}[r]^e \ar[d]_f & (B,\beta)\ar[d]^g \ar[dl]^d \\
 (C,\gamma) \ar@{>->}[r]_m & (D,\gamma)
 }
   \]
\item Dually, since $T$ preserves monomorphisms, $\Coalg{T}$ has a factorization system of homomorphisms carried by strong epimorphisms and monomorphisms in $\Cat$.
\end{enumerate}
\end{rem}
Using factorizations, locally finitely presentable coalgebras can be described in terms of subcoalgebras.

\begin{prop}\label{prop:locfin}
\begin{enumerate}[(a)]
\item A $T$-coalgebra is locally finitely presentable iff it is locally finite, i.e., every state is contained in a some finite subcoalgebra.
\item Every subcoalgebra of a locally finite coalgebra is locally finite.
\end{enumerate}
\end{prop}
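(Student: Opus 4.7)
The plan is to reduce both parts to concrete manipulations of subcoalgebras inside $Q$, exploiting that $\Cat$ is a locally finite variety (so every finitely generated subalgebra is finite) together with the standing Assumptions that $T$ is finitary and preserves monomorphisms and intersections.

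For the forward direction of (a), I would start from a filtered colimit presentation $Q = \colim_{i \in I} Q_i$ with each $Q_i$ finite, and factor each colimit injection $q_i \colon Q_i \to Q$ via the (strong epi, mono)-factorization of Remark~\ref{rem:algfact}(b), obtaining a subcoalgebra $Q_i' \monoto Q$. Because strong epimorphisms in a variety are surjections, $Q_i'$ is a quotient of the finite $Q_i$ and so itself finite. Since the forgetful functor from $\Cat$ to $\Set$ preserves filtered colimits, every state of $Q$ lies in the image of some $q_i$, and thereby in the finite subcoalgebra $Q_i'$.

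For the backward direction of (a), assume every state of $Q$ is contained in a finite subcoalgebra. I would show that the poset $P$ of finite subcoalgebras of $Q$ is directed; once this is established, finitarity of $T$ ensures that filtered colimits in $\Coalg{T}$ are computed in $\Cat$, and the canonical cocone exhibits $Q$ as $\colim_{S \in P} S$ because the directed union of the $S$'s exhausts $Q$ by hypothesis. Directedness is the crux: given $S_1, S_2 \in P$, take $A := \langle S_1 \cup S_2 \rangle$, the subalgebra of $Q$ generated by the (finite) underlying set of $S_1 \cup S_2$. Since $\Cat$ is locally finite, $A$ is finite. Since $\gamma \colon Q \to TQ$ is a $\Cat$-morphism, the identity $\gamma(\langle X \rangle) = \langle \gamma(X)\rangle$ holds for $X = S_1 \cup S_2$; moreover $T$ preserves monos, so $TS_1, TS_2$ embed as subalgebras of $TA \monoto TQ$. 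Combining these yields $\gamma(A) = \langle \gamma(S_1 \cup S_2)\rangle \subseteq \langle TS_1 \cup TS_2\rangle \subseteq TA$, which says precisely that $A$ is a finite subcoalgebra containing both $S_1$ and $S_2$.

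For (b), let $S \monoto Q$ be a subcoalgebra of a locally finite $Q$. Given a state $s \in S$ pick a finite subcoalgebra $F \monoto Q$ containing $s$. Because $T$ preserves intersections, the pullback $F \cap S$ is a subcoalgebra of $Q$, hence of $S$; it is finite as it embeds into $F$, and it contains $s$. Thus $S$ is locally finite. The main obstacle I expect is the directedness step in (a): it is the one place where all three pieces of the hypothesis on $T$ and $\Cat$ (local finiteness of $\Cat$, $\gamma$ being an algebra homomorphism, and $T$ preserving monos) must be combined in order to promote the algebraically generated subobject $\langle S_1 \cup S_2\rangle$ to a genuine subcoalgebra.
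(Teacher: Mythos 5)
Your proof is correct and follows the same overall strategy as the paper: for the forward direction of (a) you factorize the colimit injections as in Remark~\ref{rem:algfact}(b) and use that filtered colimits are computed on the level of $\Set$; for the converse you exhibit $Q$ as the directed union of its finite subcoalgebras; and for (b) you intersect with a finite subcoalgebra, using that $T$ preserves intersections --- exactly as in the paper. The only real divergence is the directedness step, which the paper does not spell out: it simply calls the cocone of all finite subcoalgebras ``filtered'', the implicit justification being the join construction used in Proposition~\ref{prop:subquolat}, namely that the join of two finite subcoalgebras is obtained by factorizing the induced morphism out of their (finite) coproduct, which is again a finite subcoalgebra since $\Cat_f$ is closed under finite colimits and quotients. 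You instead prove directedness internally: the subalgebra $\langle S_1\cup S_2\rangle$ of $Q$ is finite by local finiteness of $\Cat$, and the computation $\gamma(\langle S_1\cup S_2\rangle)=\langle\gamma(S_1\cup S_2)\rangle\subseteq TA$ (using that $\gamma$ is a $\Cat$-morphism, that each $S_j$ is a subcoalgebra, and that $T$ preserves monos so the images $TS_j\subseteq TA\subseteq TQ$ make sense) shows it carries a coalgebra structure. Both arguments are sound and use the same hypotheses; yours has the small merit of making explicit a point the paper leaves to the reader, at the cost of working concretely with generated subalgebras rather than with the factorization system.
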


\begin{proof}
(a) Let $(Q,\gamma)$ be a locally finitely presentable coalgebra, i.e., it arises as a filtered $(Q_i,\gamma_i)\xra{c_i} (Q,\gamma)$ ($i\in I$) of finite coalgebras $(Q_i,\gamma_i)$. Since filtered colimits of $T$-coalgebras are formed on the level of $\Cat$ and hence on the level of $\Set$, the maps $c_i$ are jointly surjective. It follows that every state $q\in Q$ is contained in $c_i[Q_i]$ for some $i$, and hence in the subcoalgebra of $(Q,\gamma)$ obtained by factorizing $c_i$ as in Remark \ref{rem:algfact}(b).

Conversely, suppose that every state  is contained in some finite subcoalgebra of $(Q,\gamma)$. Then the filtered cocone $(Q_i,\gamma_i)\monoto (Q,\gamma)$ ($i\in I$) of all finite subcoalgebras of $(Q,\gamma)$ is jointly surjective. This implies that $Q_i \monoto Q$ ($i\in I$) is a filtered colimit in $\Set$ and hence also in $\Cat$ and $\Coalg{T}$.

(b) Let $Q$ be a subcoalgebra of a locally finite coalgebra $Q'$. Then every state $q\in Q$ is contained in some finite subcoalgebra $Q''$ of $Q'$. Since $T$ preserves intersections, $Q''\cap Q$ is a finite subcoalgebra of $Q$ containing $q$, so $Q$ is locally finite.\qed
\end{proof}

\begin{expl}
A coalgebra for the functor $T_\Sigma Q= \two \times Q^\Sigma$ on $\Cat$ is locally finitely presentable iff from every state only finitely many states are reachable by transitions. 
\end{expl}

\begin{prop}\label{prop:subquolat}
$\Sub{\rho T}$ and $\Quo(\mu L)$ are complete lattices, and $\FPSub{\rho T}$
and $\Quofp{\mu  L}$ are join-subsemilattices.
\end{prop}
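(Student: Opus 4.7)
The plan is to address the four claims in turn, exploiting the factorization systems on $\Alg L$ and $\Coalg T$ established in Remark~\ref{rem:algfact} together with the hypothesis that $T$ preserves intersections.

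First I would verify that $\Sub{\rho T}$ is a complete lattice by exhibiting arbitrary meets, since $\rho T$ itself is clearly the top element. Given a family $(m_i\colon Q_i\monoto \rho T)_{i\in I}$ of subcoalgebras, form the intersection $m\colon Q\monoto \rho T$ in $\Cat$. Since $T$ preserves intersections, $Tm$ is the intersection of the family $(Tm_i)$ inside $T(\rho T)$, and each composite $Q\monoto Q_i\xto{\gamma_i} TQ_i\monoto T(\rho T)$ agrees (it is just the restriction of the structure on $\rho T$), so it factors uniquely through $Tm$, yielding the desired coalgebra structure on $Q$. This $Q$ is evidently the infimum of the $Q_i$, so having arbitrary meets and a top, $\Sub{\rho T}$ is a complete lattice.

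Dually, for $\Quo(\mu L)$ the top element is $\id_{\mu L}$ and I would construct arbitrary joins. Given $(e_i\colon \mu L\epito A_i)_{i\in I}$, apply the (epi, strong mono)-factorization from Remark~\ref{rem:algfact}(a) to the tuple $\langle e_i\rangle\colon \mu L\to \prod_i A_i$ in $\Alg L$, obtaining $\mu L\xto{e} A\monoto \prod_i A_i$. Each $e_i=\pi_i\cdot m\cdot e$ factors through $e$, so $e_i\leq e$; and if some $e'\colon \mu L\epito A'$ dominates every $e_i$ via $e_i=g_i\cdot e'$, then $\langle e_i\rangle = \langle g_i\rangle\cdot e'$, and the diagonal fill-in of the factorization system produces a unique $d\colon A'\to A$ with $d\cdot e'=e$, showing $e\leq e'$. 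Hence $e=\bigvee_i e_i$, and combined with the top, this makes $\Quo(\mu L)$ a complete lattice.

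The two join-subsemilattice claims then follow by observing that the binary joins constructed above preserve finiteness. In $\FPSub{\rho T}$, the join of finite subcoalgebras $Q_1, Q_2$ is given by the (strong epi, mono)-factorization of $[m_1,m_2]\colon Q_1+Q_2\to \rho T$ in $\Coalg T$; here $Q_1+Q_2$ is finite because $\Cat$ is a locally finite variety, so coproducts of finitely generated algebras are finitely generated, hence finite, and surjective images inherit finiteness. Dually, in $\Quofp{\mu L}$, the join of finite quotients $e_1,e_2$ is the strong subalgebra $A\monoto A_1\times A_2$, which is finite since $\DCat$ is locally finite and subalgebras of finite algebras are finite. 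The only real subtlety is the universal property of the join in $\Quo(\mu L)$, which rests on the diagonal fill-in; everything else reduces to standard preservation facts for intersections, factorizations and finiteness in locally finite varieties.
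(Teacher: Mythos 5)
Your proposal is correct, but it reaches the completeness of $\Sub{\rho T}$ by a different route than the paper. The paper constructs, for an \emph{arbitrary} family of subcoalgebras $m_i\colon Q_i\monoto\rho T$, their join as the image of $[m_i]\colon\coprod_i Q_i\to\rho T$ under the factorization system lifted to $\Coalg{T}$ (Remark~\ref{rem:algfact}(b)), notes that finite families of finite subcoalgebras yield finite joins, and disposes of $\Quo(\mu L)$ by ``the dual argument'' --- which is exactly your subdirect-product construction via $\langle e_i\rangle\colon \mu L\to\prod_i A_i$ and diagonal fill-in (compare Remark~\ref{rem:subdirprod}); so a single construction delivers both the complete-lattice and the join-subsemilattice claims at once. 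You instead derive completeness of $\Sub{\rho T}$ from arbitrary meets, computed as intersections in $\Cat$ with the coalgebra structure transported along the preserved intersection. This is legitimate under Assumption~\ref{asm:sec3}(c) --- indeed it is the argument the paper itself uses later for $\Subrqc{\rho T_\Sigma}$ in Proposition~\ref{prop:rqcidcomp} --- but note two costs: it requires reading ``preserves intersections'' as preservation of arbitrary (wide) intersections, an assumption the paper's proof of this proposition does not need at all, and it does not spare you the coproduct-image construction, which you must still carry out to see that $\FPSub{\rho T}$ is closed under binary joins. What your route buys is a concrete description of meets in $\Sub{\rho T}$ as literal intersections (useful later in the paper anyway), and a fully spelled-out universal property for joins of quotients, where the paper only gestures at duality; what the paper's route buys is economy. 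One small point worth making explicit in your meet argument: that the inclusions $Q\monoto Q_i$ and the comparison map from any lower bound are coalgebra homomorphisms follows by cancelling the monomorphisms $Tm_i$ and $Tm$, using that $T$ preserves monomorphisms.
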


\begin{proof}
Given a family of subcoalgebras $m_i: (Q_i,\gamma_i)\monoto \rho T$ ($i\in I$) it is easy to see that their join in $\Sub{\rho T}$ is the subcoalgebra $m: (Q,\gamma) \monoto \rho T$ obtained by factorizing the homomorphism $[m_i]: \coprod_i (Q_i,\gamma_i) \ra \rho T$ as in Remark \ref{rem:algfact}(b).
\[
\xymatrix{
\coprod_i (Q_i,\gamma_i) \ar[rr]^{[m_i]} \ar@{->>}[dr] && \rho T\\
& (Q,\gamma) \ar@{>->}[ur]_m  & 
}
\]
If $I$ and all $(Q_i,\gamma_i)$ are finite, then $(Q,\gamma)$ is clearly also finite. This proves that $\Sub{\rho T}$ is a complete lattice and $\FPSub{\rho T}$ is a join-subsemilattice. The corresponding statements about $\Quo(\mu L)$ and $\Quofp{\mu  L}$ are shown by a dual argument.\qed
\end{proof}

\begin{prop}
\label{prop:subquo}
The semilattices $\FPSub{\rho T}$ and $\Quofp{\mu  L}$ are isomorphic. The isomorphism is given by
\[ (m: (Q,\gamma)\monoto \rho T) \quad\mapsto\quad (e: \mu L \epito (\widehat Q, \widehat \gamma))\]
where $e$ is the unique $L$-algebra homomorphism defined by the initiality of $\mu L$.
\end{prop}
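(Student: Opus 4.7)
The plan is to construct $\Phi: \FPSub{\rho T} \to \Quofp{\mu L}$ exactly as in the statement together with an inverse $\Psi$ in the dual manner, relying only on the dual equivalence $\FCoalg T \cong \FAlg L^{op}$ of Remark~\ref{rem:algcoalgdual}, the terminality of $\rho T$ in $\Coalglfp T$ (Theorem~\ref{thm:lfp_indcomp}(b)), and the initiality of $\mu L$. The main obstacle is well-definedness: I must show that the canonical $e = v_Q: \mu L \to \widehat Q$ is genuinely \emph{epic} in $\DCat$ whenever $m:(Q,\gamma) \monoto \rho T$ is monic in $\Cat$, and dually that $u_{\overline A}: (\overline A, \overline \alpha) \to \rho T$ is \emph{monic} in $\Cat$ whenever $e:\mu L \epito (A,\alpha)$ is epic. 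Once this is in place, the remaining checks (mutually inverse, monotone, join-preserving) all reduce to the uniqueness clauses of the two universal properties.

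For well-definedness of $\Phi$, I fix $m:(Q,\gamma)\monoto \rho T$, form the finite $L$-algebra $(\widehat Q,\widehat \gamma)$, and consider $e := v_Q: \mu L \to \widehat Q$ supplied by initiality. I then take its (epi, strong mono) factorization $e = n\cdot p$ in $\Alg L$, available by Remark~\ref{rem:algfact}(a), with $p: \mu L \epito B$ and $n: B \monoto \widehat Q$; here $B$ is finite because $\widehat Q$ is. Dualizing $n$ via $\FAlg L^{op} \cong \FCoalg T$ yields a coalgebra homomorphism $\overline n: Q \to \overline B$ whose underlying $\Cat$-morphism is a surjection. By terminality of $\rho T$ applied to the finite (hence locally finitely presentable) coalgebras $\overline B$ and $Q$, there is a unique $u_{\overline B}: \overline B \to \rho T$, and uniqueness forces $m = u_{\overline B}\cdot \overline n$. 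Left-cancelling the $\Cat$-monic $m$ shows $\overline n$ is monic; being both monic and surjective in the variety $\Cat$, it is an isomorphism, hence $n$ is iso in $\FAlg L$ and $e = p$ is epic, as required.

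A completely dual argument handles $\Psi$: it sends $e:\mu L \epito (A,\alpha)$ to the unique $u_{\overline A}: (\overline A,\overline \alpha) \to \rho T$. I factor $u_{\overline A}$ via (strong epi, mono) in $\Coalg T$ using Remark~\ref{rem:algfact}(b) and dualize the strong-epi part to a strong monomorphism $n'$ in $\FAlg L$ through which the canonical $e = v_A$ must factor by initiality; the diagonal fill-in for the resulting commutative square (with $e$ epic and $n'$ strong mono) produces a retraction of $n'$, so $n'$ is iso and $u_{\overline A}$ is monic. The identities $\Phi\Psi = \id$ and $\Psi\Phi = \id$ are then immediate from uniqueness of the canonical maps. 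Monotonicity is equally short: if $m \leq m'$ in $\FPSub{\rho T}$ via $m = m'\cdot h$, then uniqueness of $v_Q$ gives $v_Q = \widehat h\cdot v_{Q'}$, so $\Phi(m) \leq \Phi(m')$ in $\Quofp{\mu L}$, and symmetrically for $\Psi$. Since an order isomorphism between join-semilattices automatically preserves binary joins, this yields the desired semilattice isomorphism.
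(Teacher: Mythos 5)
Your overall strategy coincides with the paper's: everything is reduced to showing, for the canonical morphisms $m:(Q,\gamma)\to\rho T$ and $e:\mu L\to(\widehat Q,\widehat\gamma)$, that $m$ is monic in $\Cat$ iff $e$ is epic in $\DCat$, and this is attacked by factorizing one of the two maps and dualizing the ``wrong'' half of the factorization; the bookkeeping (mutual inverses, monotonicity, joins) is dispatched exactly as in the paper. The trouble sits in the two dualization steps, which is precisely where the paper is careful and you are not.

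In the forward direction, the factorization $e=n\cdot p$ only tells you that $n$ is a strong monomorphism in $\DCat$. The preduality $\Cat_f\cong\DCat_f^{op}$ can only transport properties formulated inside $\DCat_f$, so before dualizing you must know that $n$ is strongly monic \emph{in $\DCat_f$}; this is the point at which the paper invokes that the embedding $\DCat_f\hookrightarrow\DCat$ preserves epimorphisms (since $\DCat_f$ is closed under finite colimits). Even granting this, what duality gives is that $\overline n$ is a strong epimorphism in $\Cat_f$, not literally a surjection: your ``whose underlying $\Cat$-morphism is a surjection'' hides both of these steps (surjectivity of strong epis in $\Cat_f$ needs a small extra image-factorization argument, or can be avoided entirely by arguing, as the paper does, that $\overline n$ is monic and strongly epic in $\Cat_f$, hence iso; note also that your ``monic $+$ surjective $\Rightarrow$ iso'' is legitimate only because $\Cat$ is a variety of \emph{unordered} algebras). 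In the converse direction there is a genuine gap: the diagonal fill-in you invoke requires $e\perp n'$, i.e.\ that $n'$ be a strong monomorphism in $\DCat$ (or at least orthogonal to the particular epimorphism $e:\mu L\to A$, whose domain is not finitely presentable), whereas dualizing the strong epi part of the factorization of $u_{\overline A}$ only yields strong monicity of $n'$ in $\DCat_f$; in the setting of Section~3 epimorphisms of $\DCat$ are not assumed surjective, so this orthogonality does not follow from what you have established. The conclusion is still within reach by a shorter route, which is the paper's symmetric argument: from $e=n'\cdot v_{\widehat C}$ and $e$ epic one gets that $n'$ is epic in $\DCat$, hence in $\DCat_f$, and a strong monomorphism of $\DCat_f$ that is also epic there is an isomorphism, so $u_{\overline A}$ is monic. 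With these two repairs your proof becomes essentially the paper's.
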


\begin{proof}
Consider any finite $T$-coalgebra $(Q,\gamma)$ and its dual finite $L$-algebra $(\widehat Q, \widehat \gamma)$, see Remark \ref{rem:algcoalgdual}. Since $\rho T$ is the terminal locally finite $T$-coalgebra and $\mu L$ is the initial $L$-algebra, there are unique homomorphisms
\[(Q,\gamma)\xra{m} \rho T\quad\text{and}\quad \mu  L \xra{e} (\widehat Q, \widehat \gamma).\]
We will prove that
\[
\text{$m$ is monic (in $\Cat$)} 
\quad \text{iff} \quad
\text{$e$ is epic (in $\DCat$).}
\]
from which the claim immediately follows. Assume first that $m$ is monic in $\Cat$, and let $e=e_2\cdot e_1$ be the factorization of $e$ as in Remark \ref{rem:algfact}(a):
\[
\xymatrix{
\mu  L \ar[rr]^e \ar@{->>}[dr]_{e_1} &&  (\widehat Q,\widehat \gamma)\\
& (A,\alpha) \ar@{>->}[ur]_{e_2} &
}
\]
Since $e_2$ is injective, the algebra $(A,\alpha)$ is finite. Moreover, the strong $\DCat$-monomorphism $e_2$ is also strongly monic in $\DCat_{f}$ because the full embedding $\DCat_{f}\hookrightarrow \DCat$ preserves epis (since $\DCat_{f}$ is closed under finite colimits). Hence the dual morphism $\overline{e_2}$ is strongly epic in $\Cat_{f}$. Since $\rho T$ is the terminal locally finite $T$-coalgebra and $(\ol{A},\ol{\alpha})$ is finite, there exists a unique coalgebra homomorphism $f$ making the triangle below commute:
\[
\xymatrix{
(\overline{\widehat{Q}},\overline{\widehat{\gamma}}) \cong (Q,\gamma) \ar@{>->}[r]^>>>>>m \ar@{->>}[d]_{\overline{e_2}} & \rho T\\
(\overline A,\overline \alpha) \ar[ur]_f&
}
\]
  By assumption $m$ is monic, so $\overline{e_2}$ is monic in $\Cat$ and hence in $\Cat_{f}$. But $\overline{e_2}$ is also strongly epic in $\Cat_{f}$, and hence an isomorphism (both in $\Cat_f$ and $\Cat$). It follows that $e_2$ is an isomorphism (both in $\DCat_f$ and $\DCat$), so $e=e_2\cdot e_1$ is epic in $\DCat$.   

\noindent The converse direction is proved by a symmetric argument.\qed
\end{proof}

\begin{rem}\label{rem:idcomp}
Recall that the \emph{ideal completion} $\Ideal{A}$ of a join-semilattice $A$ is the complete lattice of all ideals (= join-closed downsets) of $A$ ordered by inclusion. Up to isomorphism $\Ideal{A}$ is characterized as a complete lattice containing $A$ such that:
\begin{enumerate}[(1)]
\item every element of $\Ideal{A}$ is a directed join of elements of $A$, and
\item the elements of $A$ are compact in $\Ideal{A}$: if $x\in A$ lies under a directed join of elements $y_i\in \Ideal{A}$, then $x\leq y_i$ for some $i$.
\end{enumerate}
\end{rem}

\begin{thm}\label{thm:latiso} 
$\Sub{\rho T}$ is the ideal completion of $\Quofp{\mu  L}$.
\end{thm}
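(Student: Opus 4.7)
The plan is to invoke Proposition~\ref{prop:subquo} to replace $\Quofp{\mu L}$ by the isomorphic join-semilattice $\FPSub{\rho T}$, and then verify directly that $\Sub{\rho T}$ satisfies the two abstract characterizing properties of the ideal completion listed in Remark~\ref{rem:idcomp}. By Proposition~\ref{prop:subquolat}, $\Sub{\rho T}$ is already a complete lattice containing $\FPSub{\rho T}$ as a subposet, so only properties (1) and (2) need to be checked.

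For property (1), that every element of $\Sub{\rho T}$ is a directed join of elements of $\FPSub{\rho T}$, I would start from an arbitrary subcoalgebra $m : (Q,\gamma) \monoto \rho T$. By Proposition~\ref{prop:locfin}(b), $(Q,\gamma)$ is locally finite, hence by (the proof of) Proposition~\ref{prop:locfin}(a) it is the filtered colimit of its finite subcoalgebras $(Q_i,\gamma_i) \monoto (Q,\gamma)$. Composing with $m$ gives a directed family of finite subcoalgebras of $\rho T$; by inspecting the construction of joins in Proposition~\ref{prop:subquolat}, a directed join in $\Sub{\rho T}$ is precisely the subcoalgebra carrying the filtered colimit in $\Coalg{T}$, which in this case is $(Q,\gamma)$ itself.

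For property (2), I would show that every finite subcoalgebra $m:(Q,\gamma)\monoto \rho T$ is compact. Suppose $(Q,\gamma)\le \bigvee_{i\in I}(Q_i,\gamma_i)$ in $\Sub{\rho T}$ for some directed family $(Q_i,\gamma_i)\monoto \rho T$. As in the previous paragraph, this directed join is carried by the filtered colimit $\colim_i (Q_i,\gamma_i)$ in $\Coalg{T}$, equipped with its induced embedding into $\rho T$. Since finite coalgebras are finitely presentable objects of $\Coalg{T}$ (as used in the proof of Theorem~\ref{thm:lfpindcomp}, citing~\cite{ap04}), the coalgebra morphism $(Q,\gamma)\to \colim_i (Q_i,\gamma_i)$ induced by $(Q,\gamma)\le \bigvee_i (Q_i,\gamma_i)$ factors through some colimit injection $(Q_i,\gamma_i)\to \colim_i (Q_i,\gamma_i)$. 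The resulting morphism $(Q,\gamma)\to (Q_i,\gamma_i)$ postcomposes with the mono $(Q_i,\gamma_i)\monoto \rho T$ to give $m$, and hence is itself a monomorphism, so $(Q,\gamma)\le (Q_i,\gamma_i)$ in $\Sub{\rho T}$.

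The main technical point, and arguably the only non-routine step, is the identification of directed joins in $\Sub{\rho T}$ with filtered colimits in $\Coalg{T}$ along their canonical embedding into $\rho T$. This follows by taking the (strong epi, mono)-factorization of Proposition~\ref{prop:subquolat} applied to a directed diagram of monomorphisms: because $T$ preserves monomorphisms and intersections (Assumption~\ref{asm:sec3}(c)), the strong epimorphism in this factorization is an isomorphism on such diagrams, yielding the filtered colimit directly. Once this identification is in hand, both (1) and (2) reduce to the abstract properties of filtered colimits and finite presentability already established in the preceding subsection.
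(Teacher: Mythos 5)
Your overall strategy is the paper's: reduce via Proposition~\ref{prop:subquo} to showing that $\Sub{\rho T}$ is the ideal completion of $\FPSub{\rho T}$ and verify conditions (1) and (2) of Remark~\ref{rem:idcomp}. Your treatment of (1) matches the paper's. For (2) you take a slightly different (and legitimate) route: where the paper works element-wise -- the join is a directed union at the level of $\Cat$, finiteness of $Q$ gives a $\Cat$-morphism $f$ with $m = n_i \cdot f$, and one then checks that $f$ is a coalgebra homomorphism using that $T$ preserves monomorphisms -- you instead factor the coalgebra morphism $(Q,\gamma) \to \bigvee_i (Q_i,\gamma_i)$ through a colimit injection using that finite coalgebras are finitely presentable in $\Coalg{T}$ (the fact from \cite{ap04} already used for Theorem~\ref{thm:lfpindcomp}). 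This buys you that the factoring morphism is a coalgebra homomorphism for free, at the price of needing the identification of the directed join with the filtered colimit in $\Coalg{T}$.

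That identification is exactly where your justification does not hold up as written. You claim it follows ``because $T$ preserves monomorphisms and intersections, the strong epimorphism in this factorization is an isomorphism on such diagrams.'' Taken literally this is false: the strong epimorphism produced by Proposition~\ref{prop:subquolat} is the map $\coprod_i (Q_i,\gamma_i) \epito \bigvee_i (Q_i,\gamma_i)$, which is essentially never an isomorphism for a directed family with overlaps. Even read charitably -- as the claim that the canonical map $\colim_i (Q_i,\gamma_i) \to \rho T$ is monic, so that its epi part is invertible -- preservation of monomorphisms and intersections by $T$ is not what makes this true. The correct argument is the one the paper uses in its proof of (2): filtered colimits in $\Coalg{T}$ are formed on the level of $\Cat$ and hence of $\Set$ (the forgetful functors are finitary), a directed colimit of subobjects along inclusions in $\Set$ is their union, so the comparison map $\colim_i Q_i \to \rho T$ is injective, hence monic in $\Cat$; since the colimit injections are jointly surjective and surjections are strong epis in the variety $\Cat$, the composite $\coprod_i Q_i \epito \colim_i Q_i \monoto \rho T$ is precisely the (strong epi, mono)-factorization of $[n_i]$, identifying $\colim_i Q_i$ with $\bigvee_i Q_i$. (Preservation of monomorphisms by $T$ enters only through Remark~\ref{rem:algfact}(b), to have the lifted factorization system on $\Coalg{T}$ in the first place; preservation of intersections is not needed here.) With the justification repaired in this way, both your (1) and (2) go through.
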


\begin{proof}
Since $\FPSub{\rho T} \cong \Quofp{\mu L}$ by Proposition \ref{prop:subquo}, it suffices to prove that $\Sub{\rho T}$ (which forms a complete lattice by Proposition \ref{prop:subquolat}) is the ideal completion of $\FPSub{\mu L}$. To this end we verify the properties (1) and (2) of Remark \ref{rem:idcomp}.

(1) We need to prove that every subcoalgebra $m: (Q,\gamma)\monoto \rho T$ is a directed join of subcoalgebras in $\FPSub{\rho T}$. The coalgebra $(Q,\gamma)$ is locally finite, being a subcoalgebra of the locally finite coalgebra $\rho T$ (see Proposition \ref{prop:locfin}), and hence a filtered colimit $c_i: (Q_i,\gamma_i)\ra(Q,\gamma)$ ($i\in I$) of coalgebras in $\Coalgfp{T}$. Factorize each $c_i$ as in Remark \ref{rem:algfact}(b):
\[
\xymatrix{
(Q_i,\gamma_i) \ar@{->>}[dr]_{e_i} \ar[rr]^{c_i} && (Q,\gamma) \ar@{>->}[r]^m & \rho T\\
& (Q_i', \gamma_i') \ar@{>->}[ur]_{m_i} &
}
\]
Then $n_i=m\cdot m_i$ ($i\in I$) is a directed set in $\FPSub{\rho T}$. Since $(c_i)$ is colimit cocone, the morphism $[c_i]: \coprod_{i\in I} Q_i\ra Q$ is a strong epimorphism in $\Cat$. This implies that $\bigcup m_i = id_Q$, hence $\bigcup n_i = m$.

(2) We show that every finite subcoalgebra $m: (Q,\gamma)\monoto \rho T$ is compact in $\Sub{\rho T}$. So let $n=\bigcup_{i\in I} n_i$ be a directed union of subcoalgebras $n_i: (Q_i,\gamma_i)\hra \rho T$ in $\Sub{\rho T}$. Then in $\Cat$ we also have a directed union $n=\bigcup n_i$ because coproducts in $\Coalg{T}$ are formed on the level of $\Cat$. Indeed, $\bigcup n_i$ is formed from $[n_i]$ by an image factorization, see the proof of Proposition \ref{prop:subquolat}, and recall from Remark \ref{rem:algfact} that image factorizations in $\Coalg{T}$ are the liftings of (strong epi, mono)-factorizations in $\Cat$.

Since $Q$ is finite, from $m\seq n$ is follows that $m\seq n_i$ for some $i\in I$. That is, there exists a morphism $f: Q\ra Q_i$ in $\Cat$ with $m=n_i\cdot f$. It remains to verify that $f$ is a $T$-coalgebra homomorphism:
\[
\xymatrix{
Q \ar[r]^\gamma \ar@{>->}[d]_f & TQ \ar@{>->}[d]^{Tf} \\
Q_i \ar@{>->}[d]_{n_i} \ar[r]^{\gamma_i} & TQ_i \ar@{>->}[d]^{Tn_i} \\
\rho T \ar[r]_{r} & T(\rho T)
}
\]
Since the lower square and the outer square commute, it follows that the upper square commutes when composed with $Tn_i$. By assumption $T$ preserves monomorphisms, so we can conclude that upper square commutes.\qed
\end{proof}

\begin{expl}
Let $T_\Sigma: \BA \ra \BA$ and $L_\Sigma: \Set\ra\Set$ as in Example \ref{exp:TL}. The rational fixpoint of $T_\Sigma$ is the boolean algebra $\Reg_\Sigma$ with transitions $L \xra{a} a^{-1}L$, see Corollary \ref{cor:ratlift}, and the initial algebra of $L_\Sigma$ is the automaton $\Sigma^*$ with initial state $\epsilon$ and transitions $w \xra{a} wa$ for $a\in \Sigma$. Hence the previous theorem gives a one-to-one correspondence between
\begin{enumerate}[(i)]
\item sets of regular languages over $\Sigma$ closed under boolean operations and left derivatives, and
\item ideals of $\Quofp{\Sigma^*}$, i.e., sets of quotient automata of $\Sigma^*$ closed under quotients and joins.
\end{enumerate}  
This correspondence is refined in the following section.
\end{expl}

\section{The Local Eilenberg Theorem}\label{sec:locvar}

In this section we establish our main result, the generalized local Eilenberg theorem. We continue to work under the Assumptions \ref{asm:sec3}  -- that is, a locally finite variety of algebras $\Cat$ and a predual locally finite variety of (possibly ordered) algebras $\DCat$ are given -- and restrict our attention to deterministic automata, modeled as coalgebras and algebras for the predual functors 
 \[ T_\Sigma=\two\times \Id^\Sigma:\Cat\ra\Cat \quad\text{and}\quad  L_\Sigma = \one + \coprod_\Sigma \Id:\DCat\ra\DCat,\] 
respectively. Here $\two=\{0,1\}$ is a fixed two-element algebra in $\Cat$ and $\one = \widehat{\two}$ is its dual algebra in $\DCat$.
 
 The crucial step towards Eilenberg's theorem is to prove that the isomorphism 
\[ \FPSub{\rho T_\Sigma} \cong \Quofp{\mu L_\Sigma} \]
of Proposition \ref{prop:subquo} restricts to one between the finite subcoalgebras of $\rho T_\Sigma$ closed under right derivatives and the finite quotient algebras of $\mu L_\Sigma$ whose transitions are induced by a monoid structure. To this end we will characterize right derivatives and monoids from a categorical perspective and show that they are dual concepts (Sections \ref{sec:rightder} and \ref{sec:monoids}). The general local Eilenberg theorem is proved in Section \ref{sec:proofloceil}.

\subsection{Right derivatives}\label{sec:rightder}

The closure of the regular languages under right derivatives is usually proved via the following automata construction: suppose a deterministic $\Sigma$-automaton in $\Set$ (with states $Q$ and final states $F\seq Q$) accepts a language $L\seq \Sigma^*$. Then given $w\in\Sigma^*$ replace the set of final states by
\[ F' = \{q\in Q: q \xra{w} q' \text{ for some } q'\in F \}.\]
The resulting automaton accepts the right derivative $Lw^{-1} = \{u\in \Sigma^*: uw\in L\}$ of $L$. This construction generalizes to arbitrary $T_\Sigma$-coalgebras:

\begin{notation}\label{not:rqc} $T_\Sigma$-coalgebras $Q\xra{\gamma} \two\times Q^\Sigma$ are represented as triples 
\[(Q,\gamma_a: Q\ra Q, f:Q\ra\two),\]
see Example \ref{exp:tcoalg}. For each $T_\Sigma$-coalgebra $Q=(Q,\gamma_a,f)$ and $w\in\Sigma^*$ we put
\[ Q_w := (Q,\gamma_a, f\cdot \gamma_w)\]
where, as usual, $\gamma_w = \gamma_{a_n}\cdot \cdots \cdot \gamma_{a_1}$ for $w=a_1\cdots a_n$.
\end{notation}

\begin{rem}\label{rem:rqcmor}
A $\Cat$-morphism $h: Q\ra Q'$  is a $T_\Sigma$-coalgebra homomorphism $h: (Q,\gamma_a,f)\ra (Q',\gamma_a',f')$ iff the following diagram commutes for all $a\in\Sigma$:
\[
\xymatrix{
Q \ar[r]^{\gamma_a} \ar[d]_h & Q \ar[d]^h \ar[dr]^f & \\
Q' \ar[r]_{\gamma_a'} & Q' \ar[r]_{f'} & \two 
}
\]
In this case also the square below commutes, which implies that $h$ is a $T_\Sigma$-coalgebra homomorphism $h: Q_w\ra Q'_w$ for all $w\in\Sigma^*$.
\[
\xymatrix{
Q \ar[r]^{\gamma_w} \ar[d]_h & Q \ar[d]^h  \\
Q' \ar[r]_{\gamma_w'} & Q'
}
\]
\end{rem}

\begin{proposition}\label{prop:rqc2}
A subcoalgebra $Q$ of $\rho T_\Sigma$ is closed under right derivatives (i.e., $L\in Q$ implies $Lw^{-1}\in Q$ for each $w\in\Sigma^*$) iff there exists a coalgebra homomorphism from $Q_w$ to $Q$ for each $w\in\Sigma^*$.
\end{proposition}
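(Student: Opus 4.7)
My plan is to leverage the terminality of $\rho T_\Sigma$ among locally finite coalgebras (Theorem~\ref{thm:lfp_indcomp}(b)) to reduce both directions to a statement about a canonical coalgebra homomorphism $h_w : Q_w \to \rho T_\Sigma$. First I would check that $Q_w$ is locally finite: it has the same $\Cat$-carrier and the same transition maps $\gamma_a$ as $Q$, so a $\Cat$-subobject of $Q$ is closed under the transitions of $Q$ iff it is closed under those of $Q_w$. Hence the subcoalgebras of $Q$ and of $Q_w$ coincide as $\Cat$-subobjects, and since $Q$ is locally finite by Proposition~\ref{prop:locfin}(b) (being a subcoalgebra of the locally finite coalgebra $\rho T_\Sigma$), so is $Q_w$. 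Theorem~\ref{thm:lfp_indcomp}(b) then yields a unique coalgebra homomorphism $h_w : Q_w \to \rho T_\Sigma$. Unfolding the modified final-state map $f \cdot \gamma_w$ of $Q_w$ and using the description of $\rho T_\Sigma = \Reg_\Sigma$ via accepted languages (Corollary~\ref{cor:ratlift}), one verifies directly that for each state $L \in Q$,
\[
h_w(L) \;=\; \{\, u \in \Sigma^* : uw \in L\,\} \;=\; Lw^{-1}.
\]

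For the forward direction, assume $Q$ is closed under right derivatives. Then by the above calculation, the image of $h_w$ lies in the subalgebra represented by $m : Q \monoto \rho T_\Sigma$, and since $\Cat$ is a variety of algebras (where subalgebras are precisely the image-closed subobjects), $h_w$ factors uniquely as $h_w = m \cdot h'_w$ for some $\Cat$-morphism $h'_w : Q_w \to Q$. This $h'_w$ is automatically a coalgebra homomorphism: its putative naturality squares for each $\gamma_a$ and for the final-state map commute after post-composition with the morphisms $Tm$ resp.\ $m$ — because $h_w$ and $m$ are themselves coalgebra homomorphisms — and $Tm$ is monic since $T = T_\Sigma$ preserves monomorphisms (Assumption~\ref{asm:sec3}(c)), so the squares commute outright.

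Conversely, given any coalgebra homomorphism $h'_w : Q_w \to Q$, the composite $m \cdot h'_w : Q_w \to \rho T_\Sigma$ is again a coalgebra homomorphism and hence, by uniqueness, equals $h_w$. Thus for every state $L \in Q$ we have $Lw^{-1} = h_w(L) = m(h'_w(L))$, which lies in (the image of) $Q$, establishing closure under right derivatives. The only mildly technical point is the accepted-language calculation identifying $h_w(L)$ with $Lw^{-1}$; everything else is straightforward bookkeeping with the factorization system on $\Coalg{T_\Sigma}$ (Remark~\ref{rem:algfact}(b)) and the universal property of the rational fixpoint.
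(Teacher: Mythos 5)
Your proof is correct, and one half of it diverges from the paper in an interesting way. The direction ``homomorphisms $Q_w\to Q$ imply closure'' is essentially the paper's argument: compose with $m$, invoke terminality of $\rho T_\Sigma$ among locally finitely presentable coalgebras and the accepted-language description of Corollary~\ref{cor:ratlift}, and read off that the languages accepted by $Q_w$ are exactly the right derivatives $Lw^{-1}$, $L\in Q$. For the other direction, however, the paper does not factor each $h_w\colon Q_w\to\rho T_\Sigma$ through $m$ individually; instead it forms the coproduct $\coprod_{w\in\Sigma^*}Q_w$, takes the image $\widetilde{Q}\monoto\rho T_\Sigma$ of the unique homomorphism via the factorization system of Remark~\ref{rem:algfact}(b), shows $\widetilde{Q}=\bigvee_w\widetilde{Q_w}\subseteq Q$ by closure and $Q=Q_\epsilon\subseteq\widetilde{Q}$, and obtains the homomorphism as the composite $Q_w\to\coprod_w Q_w\epito\widetilde{Q}=Q$. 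Your route — note that the image of $h_w$ lands in the subobject $m$, factor $h_w=m\cdot h_w'$ using that monos in the variety $\Cat$ are injective homomorphisms, and then cancel the mono ($m$, or $T_\Sigma m$) to see that $h_w'$ is a coalgebra homomorphism — is more direct: it avoids the coproduct and the join computation in $\Sub{\rho T_\Sigma}$ entirely, at the price of leaning a bit more explicitly on the concrete description of subobjects in a variety (which the paper's proof also uses, via set-theoretic inclusions of subcoalgebras, so nothing is lost). Two small remarks: your closing appeal to the factorization system of Remark~\ref{rem:algfact}(b) is not actually needed in your argument, since you never form images in $\Coalg{T_\Sigma}$; and your observation that $Q$ and $Q_w$ have the same subcoalgebras (hence $Q_w$ is locally finite) is exactly the justification the paper leaves implicit in its ``Clearly $Q_w$ is locally finite''.
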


\begin{proof}
A subcoalgebra $Q$ of $\rho T_\Sigma$ is up to isomorphism a set of regular languages over $\Sigma$ carrying a $\Cat$-algebraic structure and closed under left derivatives, see Corollary \ref{cor:ratlift}.
The ordering of subcoalgebras is induced by inclusion of sets, i.e.,\ $Q \leq Q'$ in $\Sub{\rho T_\Sigma}$ iff $Q \subseteq Q'$.

Suppose that a coalgebra homomorphism $\alpha_w : Q_w \to Q$ exists for each $w \in \Sigma^*$. The languages accepted by $Q_w$ are precisely $\{ L w^{-1} : L \in Q \}$ because we have moved the final states backwards along all $w$-paths. The morphism $\alpha_w$ assigns to each state of $Q_w$ its accepted language, so  $\{ L w^{-1} : L \in Q \} \subseteq Q$. Hence $Q$ is closed under right derivatives.

Conversely suppose that $Q \monoto \rho T_\Sigma$ is closed under right derivatives. Clearly $Q_w$ is locally finite for each $w\in \Sigma^*$, so there are unique homomorphisms $Q_w\ra \rho T_\Sigma$ and $\coprod_w Q_w \ra \rho T_\Sigma$. Factorize them as in Remark \ref{rem:algfact}(b):
\[
Q_w \epito {\widetilde{Q_w}} \monoto \rho T_\Sigma
\qquad\qquad
\coprod_{w \in \Sigma^*} Q_w \epito {\widetilde{Q}} \monoto \rho T_\Sigma
\]
It is now easy to see (using the factorization system) that
\[
{\widetilde{Q}} = \Lor \{ {\widetilde{Q_w}} : w \in \Sigma^* \}
\]
in $\Sub{\rho T_\Sigma}$. Since $Q$ is closed under right derivatives we have $\widetilde{Q_w}=\{ Lw^{-1} : L \in Q \} \subseteq Q$ and hence $\widetilde{Q} \subseteq Q$. Since $Q = Q_\epsilon$ the reverse inclusion holds, so $Q = \widetilde{Q}$. Therefore we obtain a coalgebra homomorphism $Q_w \xra{in_w} \coprod_{w \in \Sigma^*} Q_w \epito \widetilde{Q} = Q$.\qed
\end{proof}

\begin{lemma}
\label{lem:finsub_in_finrqc}
Every finite subcoalgebra of $\rho T_\Sigma$ is contained in a  finite subcoalgebra of $\rho T_\Sigma$ closed under right derivatives.
\end{lemma}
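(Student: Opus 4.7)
The plan is to construct $\widetilde Q$ as the image in $\rho T_\Sigma$ of a finite coproduct of the shifted coalgebras $Q_w$ from Notation~\ref{not:rqc}. Writing the given subcoalgebra as $Q = (Q, \gamma_a, f)$, I would first observe that the family $\{Q_w : w\in \Sigma^*\}$ contains only finitely many distinct coalgebras: each $Q_w$ is completely determined by the $\Cat$-morphism $f\cdot \gamma_w : Q \to \two$, and since $\Cat$ is a locally finite variety the hom-set $\Cat(Q, \two)$ is finite. Pick a finite set $W \seq \Sigma^*$ with $\epsilon \in W$ and $\{Q_w : w \in W\} = \{Q_w : w\in \Sigma^*\}$, form the finite coproduct $C = \coprod_{w \in W} Q_w$ in $\Coalg{T_\Sigma}$, and factor the unique coalgebra homomorphism $C \to \rho T_\Sigma$ (guaranteed by Theorem~\ref{thm:lfp_indcomp}, since $C$ is locally finite) via Remark~\ref{rem:algfact}(b) as $e : C \epito \widetilde Q$ followed by $m : \widetilde Q \monoto \rho T_\Sigma$. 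Then $\widetilde Q$ is a finite subcoalgebra of $\rho T_\Sigma$, and the $w=\epsilon$ coproduct injection $Q = Q_\epsilon \hookrightarrow C$ composed with $e$ is monic (since its composite with $m$ equals the given inclusion $Q\hookrightarrow \rho T_\Sigma$), which gives $Q \leq \widetilde Q$.

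Closure under right derivatives is the heart of the argument. By Proposition~\ref{prop:rqc2} it suffices, for each $v \in \Sigma^*$, to build a coalgebra homomorphism $\widetilde Q_v \to \widetilde Q$. Using $(Q_w)_v = Q_{vw}$ (Notation~\ref{not:rqc}) together with the fact that the shift construction commutes with coproducts (only the final-state map is altered), we get $C_v = \coprod_{w \in W} Q_{vw}$. By the choice of $W$, for each $w \in W$ some $\sigma_v(w) \in W$ satisfies $Q_{vw} = Q_{\sigma_v(w)}$, and these identifications assemble into a coalgebra homomorphism $\phi_v : C_v \to C$ that is summand-wise the identity on $Q$. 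Next, $(\rho T_\Sigma)_v$ shares the underlying $\Cat$-object, transitions, and hence subcoalgebras of $\rho T_\Sigma$, so it is locally finite and by terminality there is a unique coalgebra homomorphism $d_v : (\rho T_\Sigma)_v \to \rho T_\Sigma$. Since $C_v$ is itself finite, the two parallel coalgebra homomorphisms
\[ m \cdot e \cdot \phi_v, \quad d_v \cdot m_v \cdot e_v\ :\ C_v \to \rho T_\Sigma \]
coincide by uniqueness of coalgebra homomorphisms $C_v \to \rho T_\Sigma$, where $e_v$, $m_v$ denote $e$, $m$ regarded between the correspondingly shifted coalgebras (their underlying $\Cat$-morphisms being unchanged). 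This furnishes the commutative square
\[
\xymatrix{
C_v \ar@{->>}[r]^{e_v} \ar[d]_{e \cdot \phi_v} & \widetilde Q_v \ar[d]^{d_v \cdot m_v} \\
\widetilde Q \ar@{>->}[r]_m & \rho T_\Sigma
}
\]
in $\Coalg{T_\Sigma}$ with $e_v$ a strong epimorphism (same underlying $\Cat$-morphism as $e$) and $m$ a monomorphism, and the diagonal fill-in of Remark~\ref{rem:algfact}(b) supplies the desired coalgebra morphism $\widetilde Q_v \to \widetilde Q$.

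The main subtlety I expect concerns the bookkeeping between a coalgebra $Q'$ and its shifted version $Q'_v$: the two share underlying $\Cat$-object, transitions, and hence subcoalgebras, so local finiteness and the (strong) epi/mono classes of $\Cat$-morphisms transfer freely between the two, yet each candidate $\Cat$-morphism must be verified to respect the altered final-state map to qualify as a coalgebra homomorphism between the shifted versions. Once this is sorted out (and in particular once $\phi_v$ is confirmed to be a coalgebra homomorphism and the uniqueness argument on $C_v \to \rho T_\Sigma$ is in place), the diagonal fill-in does the rest.
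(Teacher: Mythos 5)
Your proof is correct and follows essentially the same route as the paper's: choose a finite representative set $W$ of shifts (the paper bounds them via finiteness of $\Cat(Q,Q)$, you via $\Cat(Q,\two)$ — both valid in a locally finite variety), factorize $\coprod_{w\in W}Q_w\to\rho T_\Sigma$ to get $\widetilde Q\supseteq Q$, and then use $(Q_w)_v=Q_{vw}$, the re-indexing map into $C$, finality of $\rho T_\Sigma$, and diagonal fill-in to obtain $\widetilde Q_v\to\widetilde Q$, concluding by Proposition~\ref{prop:rqc2}. The only (harmless) deviation is your detour through $(\rho T_\Sigma)_v$ and $d_v$, where the paper simply takes the unique final morphism $\widetilde Q_v\to\rho T_\Sigma$ directly.
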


\begin{proof}
Let $Q \monoto \rho T_\Sigma$ be an finite subcoalgebra of $\rho T_\Sigma$. Since $\Cat(Q,Q)$ is finite, there exists a finite set $W\seq \Sigma^*$ of words such that for every $u \in \Sigma^*$ there exists $w \in W$ with $Q_u = Q_w$. Then the coalgebra $\coprod_{w \in W} Q_w$ is finite  because the coproduct is constructed on the level of $\Cat$, and $\Cat_f$ is closed under finite colimits as $\Cat$ is locally finite. Factorize the unique homomorphism  $\coprod_{w\in W} Q_w\ra \rho T_\Sigma$ as is Remark \ref{rem:algfact}(b):
\[
\xymatrix@=0pt{
\coprod_{w \in W} Q_w \ar@{>>}[rrr]^-e &&& \; {Q'} \; \ar@{>->}[rrr]^-m &&& \rho T_\Sigma\\
&&&&&&&\\
Q=Q_\epsilon \ar[uu]^{in_w} \ar@/_2ex/@{>-->}[uurrr] \ar@{>->}@/_3ex/[uurrrrrr] &&&&&&&
}
\]
The coalgebra ${Q'}$ is clearly also finite. Moreover, if $w$ is chosen such that $Q=Q_\epsilon=Q_w$, the above diagram yields $Q\monoto Q'$.

It remains to show that ${Q'}$ is closed under right derivatives. First observe that, for each $w \in \Sigma^*$, one has
\[
(\coprod_{w' \in W} Q_{w'})_w
= \coprod_{w' \in W} (Q_{w'})_w
= \coprod_{w' \in W} Q_{ww'}
\]
because $(-)_w$ commutes with coproducts and $f \circ \gamma_{w'} \circ \gamma_w = f \circ \gamma_{ww'}$. Consider the diagram of coalgebra homomorphisms
\[
\xymatrix{
\coprod_{w' \in W} Q_{ww'} = (\coprod_{w' \in W} Q_{w'})_w
\ar@{->>}[r]^-{e}
\ar[d]_-h
&
Q_w'
\ar@{->}[dd]^-\beta
\ar@{-->}[ddl]^{\alpha_w}
\\
\coprod_{w' \in W} Q_{w'}
\ar@{->>}[d]_e
\\
Q'
\ar@{>->}[r]_m
&
\rho T_\Sigma
}
\]
The strong epic $e$ and mono $m$ were defined above, $\beta$ is the final morphism, the topmost $e$ exists by Remark \ref{rem:rqcmor}, and $h$ exists because each $Q_{ww'}$ equals some $Q_v$ with $v \in W$. Hence the square commutes by finality, and diagonal fill-in gives a coalgebra homomorphism $\alpha_w: Q'_w\ra Q'$. Since $w$ was an arbitrary word we deduce from Proposition \ref{prop:rqc2} that $Q'$ is closed under right derivatives.
\qed
\end{proof}

\begin{cor}\label{cor:finsub_in_finrqc}
Let $Q\monoto Q' \monoto \rho T_\Sigma$ be subcoalgebras where $Q$ is finite and $Q'$ is closed under right derivatives. Then $Q$ is contained in a finite subcolgebra of $Q'$ closed under right derivatives.  
\end{cor}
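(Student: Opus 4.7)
The plan is to reduce this corollary directly to Lemma \ref{lem:finsub_in_finrqc}, observing that the finite right-derivative-closed subcoalgebra of $\rho T_\Sigma$ produced by that lemma must actually land inside $Q'$, thanks to the hypothesis that $Q'$ is itself closed under right derivatives.

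First, I would apply Lemma \ref{lem:finsub_in_finrqc} to the finite subcoalgebra $Q \monoto \rho T_\Sigma$ to obtain a finite subcoalgebra $m'' : Q'' \monoto \rho T_\Sigma$ closed under right derivatives with $Q \seq Q''$. Recall from the proof of that lemma that $Q''$ is constructed as the (strong epi, mono)-image of the homomorphism $\coprod_{w \in W} Q_w \to \rho T_\Sigma$ for a suitable finite $W \seq \Sigma^*$.

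The next step is to show that $Q'' \seq Q'$ in $\Sub{\rho T_\Sigma}$. Under the identification of $\rho T_\Sigma$ with $\Reg_\Sigma$ (Corollary \ref{cor:ratlift}), the unique coalgebra homomorphism $Q_w \to \rho T_\Sigma$ sends a state $q \in Q$ to the right derivative $L_q w^{-1}$, where $L_q$ is the language accepted by $q$ in $Q$. Since $Q \seq Q'$ we have $L_q \in Q'$, and since $Q'$ is closed under right derivatives, also $L_q w^{-1} \in Q'$. Consequently each homomorphism $Q_w \to \rho T_\Sigma$ factorizes through the subcoalgebra $Q' \monoto \rho T_\Sigma$, and hence so does their cotuple $\coprod_{w \in W} Q_w \to \rho T_\Sigma$. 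This yields a commutative square
\[
\xymatrix{
\coprod_{w \in W} Q_w \ar@{->>}[r] \ar[d] & Q'' \ar@{>->}[d]^{m''} \\
Q' \ar@{>->}[r] & \rho T_\Sigma
}
\]
in $\Coalg{T_\Sigma}$ whose left leg is a strong epimorphism and whose right leg is a monomorphism (in $\Cat$). The diagonal fill-in from Remark \ref{rem:algfact}(b) then produces a coalgebra homomorphism $Q'' \to Q'$ that is automatically monic (since its composite with $Q' \monoto \rho T_\Sigma$ is the monomorphism $m''$), witnessing $Q'' \seq Q'$.

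Combining this with $Q \seq Q''$ from Lemma \ref{lem:finsub_in_finrqc}, we conclude that $Q''$ is a finite subcoalgebra of $Q'$ closed under right derivatives containing $Q$, as required. There is no substantial obstacle: all the nontrivial work has already been done in Lemma \ref{lem:finsub_in_finrqc}, and the only additional ingredient is the observation that closure of $Q'$ under right derivatives forces the construction of Lemma \ref{lem:finsub_in_finrqc} to stay inside $Q'$.
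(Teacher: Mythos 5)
Your proof is correct, but it takes a genuinely different route from the paper's. The paper's own argument is a two-liner: it invokes Lemma~\ref{lem:finsub_in_finrqc} only through its statement to get a finite right-derivative-closed $Q''\monoto \rho T_\Sigma$ containing $Q$, and then simply intersects, using the standing assumption that $T_\Sigma$ preserves intersections (Assumptions~\ref{asm:sec3}(c)) to see that $Q''\cap Q'$ is a subcoalgebra of $Q'$ with all the desired properties. You instead re-open the proof of Lemma~\ref{lem:finsub_in_finrqc} and use the concrete description of $\rho T_\Sigma$ as $\Reg_\Sigma$ (Corollary~\ref{cor:ratlift}) to show that the lemma's construction already lands inside $Q'$: since each homomorphism $Q_w\ra \rho T_\Sigma$ has image $\{Lw^{-1}: L\in Q\}\seq Q'$, the cotuple factorizes through $Q'\monoto\rho T_\Sigma$ and diagonal fill-in gives $Q''\monoto Q'$. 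This buys a slightly stronger conclusion (the specific $Q''$ produced by the lemma is itself contained in $Q'$, with no need for intersection preservation of $T_\Sigma$ at this point), but at the cost of depending on the internal construction of the lemma rather than its statement, and of one step you leave implicit: that the factorizations $Q_w\ra Q'$, and hence the cotuple $\coprod_{w\in W}Q_w\ra Q'$, are coalgebra homomorphisms rather than mere $\Cat$-morphisms. This does hold, by the standard argument that a $\Cat$-morphism whose composite with a coalgebra monomorphism is a coalgebra homomorphism is itself one (using that $T_\Sigma$ preserves monomorphisms, exactly as at the end of the proof of Theorem~\ref{thm:latiso}), so it is a routine gloss rather than a gap -- but it should be said, since the diagonal fill-in of Remark~\ref{rem:algfact}(b) is applied in $\Coalg{T_\Sigma}$ and therefore needs both legs of your square to live there.
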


\begin{proof}
By the previous lemma there exists a finite subcoalgebra $Q''\monoto \rho T_\Sigma$ that contains $Q$ and is closed under right derivatives. Since $T_\Sigma$ preserves intersections,  $Q''\cap Q'$ is a subcoalgebra of $Q'$ with the desired properties.
\end{proof}

\begin{notation}
$\Subrqc{\rho T_\Sigma}$ and $\FPSubrqc{\rho T_\Sigma}$ are the posets of (finite) subcoalgebras of $\rho T_\Sigma$ closed under right derivatives. 
\end{notation}

\begin{proposition}\label{prop:rqcidcomp}
$\Subrqc{\rho T_\Sigma}$ is the ideal completion of $\FPSubrqc{\rho T_\Sigma}$.
\end{proposition}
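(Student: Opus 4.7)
The plan is to verify the two characterizing properties of the ideal completion listed in Remark \ref{rem:idcomp}, namely that every element of $\Subrqc{\rho T_\Sigma}$ is a directed join of elements from $\FPSubrqc{\rho T_\Sigma}$, and that elements of the latter are compact in the former. This mirrors the proof of Theorem \ref{thm:latiso}, with the crucial additional tool being Corollary \ref{cor:finsub_in_finrqc}.

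For condition (1), given $Q \in \Subrqc{\rho T_\Sigma}$, I would first invoke Proposition \ref{prop:locfin} to conclude that $Q$ is locally finite, hence (as shown in the proof of Theorem \ref{thm:latiso}) the directed join in $\Sub{\rho T_\Sigma}$ of those $P \in \FPSub{\rho T_\Sigma}$ with $P \leq Q$. By Corollary \ref{cor:finsub_in_finrqc}, each such $P$ is contained in some $P' \in \FPSubrqc{\rho T_\Sigma}$ with $P' \leq Q$. This exhibits $\{P' \in \FPSubrqc{\rho T_\Sigma} : P' \leq Q\}$ as a cofinal subset of a directed family whose join is $Q$; directedness of the enlarged family follows since the join $P_1' \vee P_2'$ in $\Sub{\rho T_\Sigma}$ is still a finite subcoalgebra of $Q$, which can then be enlarged again by Corollary \ref{cor:finsub_in_finrqc}.

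For condition (2), given $Q \in \FPSubrqc{\rho T_\Sigma}$ and a directed family $(Q_i)_{i \in I}$ in $\Subrqc{\rho T_\Sigma}$ with $Q \leq \bigvee_i Q_i$, I would reduce to the already-established compactness of $Q$ inside $\Sub{\rho T_\Sigma}$ (proved as part of Theorem \ref{thm:latiso}). For this it suffices to observe that the inclusion $\Subrqc{\rho T_\Sigma} \hookrightarrow \Sub{\rho T_\Sigma}$ preserves directed joins: the join in $\Sub{\rho T_\Sigma}$ of a directed family is formed as a directed union at the level of $\Cat$ and hence of $\Set$ (as in the proof of Proposition \ref{prop:locfin}(a)), and if each $Q_i$ is closed under right derivatives then so is $\bigcup_i Q_i$, since $L \in \bigcup_i Q_i$ gives $L \in Q_{i_0}$ for some $i_0$ and thus $Lw^{-1} \in Q_{i_0} \subseteq \bigcup_i Q_i$. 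Therefore $Q \leq Q_i$ in $\Sub{\rho T_\Sigma}$ for some $i$, and this inequality persists in $\Subrqc{\rho T_\Sigma}$.

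The main obstacle I anticipate is the clean statement that directed joins in $\Subrqc{\rho T_\Sigma}$ agree with those in $\Sub{\rho T_\Sigma}$; this rests on the set-theoretic description of directed joins in the ambient poset, which in turn uses that $\Cat$ is locally finite and that filtered colimits of $T_\Sigma$-coalgebras descend to jointly surjective families in $\Set$. Once this is in hand, both conditions of Remark \ref{rem:idcomp} follow by the argument above.
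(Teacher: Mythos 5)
Your proposal is correct and follows essentially the same route as the paper: condition (1) via local finiteness of subcoalgebras together with Corollary \ref{cor:finsub_in_finrqc}, which makes the finite right-derivative-closed subcoalgebras cofinal among all finite subcoalgebras of $Q$, and condition (2) via the observation that directed joins in $\Subrqc{\rho T_\Sigma}$ are directed unions agreeing with those in $\Sub{\rho T_\Sigma}$, so compactness descends from Theorem \ref{thm:latiso}. The only point the paper states explicitly that your plan leaves implicit is that $\Subrqc{\rho T_\Sigma}$ is a complete lattice (its meet is set-theoretic intersection, since $T_\Sigma$ preserves intersections), which Remark \ref{rem:idcomp} requires of an ideal completion.
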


\begin{proof}
Since $T_\Sigma$ preserves intersections, $\Subrqc{\rho T_\Sigma}$ forms a complete lattice whose meet is set-theoretic intersection. It remains to check the conditions (1) and (2) of Remark \ref{rem:idcomp}.
For (2), just note that directed joins are directed unions of subcoalgebras, so that every finite subcoalgebra closed under right derivatives is compact. For (1)  we use that every subcoalgebra $Q \monoto \rho T_\Sigma$ is locally finite, see Proposition \ref{prop:locfin}, and hence arises as the directed union of its finite subcoalgebras. But by Corollary \ref{cor:finsub_in_finrqc} the poset of all finite subcoalgebras of $Q$ contains the ones closed unter right derivatives as a final subposet, so $Q$ is also the directed union of its  finite subcoalgebras closed under right derivatives.
\qed
\end{proof}

\subsection{$\boldsymbol{\DCat}$-Monoids} \label{sec:monoids}
By Proposition \ref{prop:rqc2} closure under right derivatives of a subcoalgebra $Q\monoto \rho T_\Sigma$ is characterized  by the existence of $T_\Sigma$-coalgebra homomorphisms $Q_w \ra Q$. In this section we investigate the dual $L_\Sigma$-algebra homomorphisms $\widehat{Q}\ra \widehat{Q_w}$  and show that they define a monoid structure on $\widehat{Q}$. This requires the following additional assumptions on the variety $\DCat$:

\begin{assumptions}\label{asm:geneilenberg} For the rest of Section \ref{sec:locvar} we assume that 
\begin{enumerate}[(a)]
\item epimorphisms in $\DCat$ are surjective; 
\item for any two algebras $A$ and $B$ in $\DCat$, the set $[A,B]$  of homomorphisms from $A$ to $B$  is an algebra in $\DCat$ with the pointwise algebraic structure, i.e., a subalgebra of the power $B^{\und{A}} = \prod_{x\in A} B$;
\item $\one = \widehat \two$ is a free $\DCat$-algebra on the one-element set $1$, that is, 
\[ \one \cong \Psi 1 \]
for the left adjoint $\Psi: \Set\ra \DCat$ to the forgetful functor $\DCat\ra\Set$.
\end{enumerate}
\end{assumptions}

\begin{rem}\label{rem:dmonoidal}
Here is a more categorical view of Assumption \ref{asm:geneilenberg}(b). Given algebras $A$, $B$ and $C$ in $\DCat$, a \emph{bimorphism} is a function $f: A\times B \ra C$ such that $f(a,\mathord{-}): B\ra C$ and $f(\mathord{-},b): A\ra C$ are $\DCat$-morphisms for all $a\in A$ and $b\in B$. A \emph{tensor product} of $A$ and $B$ is a universal bimorphism $t: A\times B \ra A\otimes B$, i.e., for every bimorphism $f: A\times B \ra C$ there is a unique $\DCat$-morphism $f'$ making the diagram below commute.
\[
\xymatrix{
A\times B \ar[dr]_f \ar[r]^t & A\otimes B \ar@{-->}[d]^{f'}\\
& C
}
\]
The tensor product exists in every variety $\DCat$ and turns   it into a symmetric monoidal category $(\DCat,\otimes,\Psi 1)$, see \cite{BN1976}. Assumption \ref{asm:geneilenberg} (b) then states precisely that $\DCat$ is monoidal closed. 
\end{rem}

\begin{expl}
The varieties $\DCat=\Set,\Poset,\JSL,\Vect{\Field_2}$ of Example \ref{expl:dualpairs} meet the Assumptions \ref{asm:geneilenberg}.
\end{expl}

\begin{rem}\label{rem:ev}
\begin{enumerate}[(1)]
\item For all algebras $A$ and $B$ in $\DCat$ and $x\in A$ let $\ev_x$ be the composite
\[  [A,B]\monoto B^{\und{A}} \xra{\pi_x} B.\] 
The morphism $\ev_x$ is evaluation at $x$, i.e., 
\[\ev_x(f) = f(x)\quad \text{ for all } f\in [A,B].\]
\item For any two $\DCat$-morphisms $f:B\ra B'$ and $g: A \ra A'$, the maps 
\[c_f: [A,B]\ra [A,B']\quad\text{and}\quad c_g': [A,B]\ra [A',B]\]
given by composition with $f$ and $g$, respectively, are $\DCat$-morphisms.
\end{enumerate}
\end{rem}
The assumption that $\DCat$ is monoidal closed gives rise to inductive definition and proof principles that we shall use extensively.

\begin{definition}[Inductive Extension Principle]\label{def:ext}
Let $(g_i: A\ra B)_{i\in I}$ be a set-indexed family of morphisms between two fixed $\DCat$-objects $A$ and $B$. Its \emph{inductive extension} is the family $(g_x: A\ra B)_{x\in \Psi I}$ defined as follows:
\begin{enumerate}[(1)]
\item Extend the function $g: I \ra [A,B]$, $i\mapsto g_i$, to a $\DCat$-morphism $\overline g: \Psi I \ra[A,B]$:
\[
\xymatrix{
I\ar[d]_{\eta} \ar[dr]^-{g} &\\
\Psi I \ar@{-->}[r]_<<<<<{\overline g} & [A,B]
}
\]
\item Put $g_x := \overline g(x)$ for all $x\in\Psi I$.
\end{enumerate}
\end{definition}

\begin{lemma}\label{lem:ext}
\begin{enumerate}[(a)]
\item In Definition \ref{def:ext} we have $g_{\eta i}=g_i$ for all $i\in I$. 
\item For all sets $I$ and $\DCat$-objects $A$, the family $(\ev_x : [\Psi I, A]\ra A)_{x\in \und{\Psi I}}$ is the inductive extension of $(\ev_{\eta i} : [\Psi I,A]\ra A)_{i\in I}$.
\end{enumerate}
\end{lemma}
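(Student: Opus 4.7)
Part (a) will be immediate from the defining property of the extension. By construction, $\overline g : \Psi I \to [A,B]$ is the unique $\DCat$-morphism with $\overline g \cdot \eta = g$, i.e.\ $\overline g(\eta i) = g_i$ for all $i \in I$. Unfolding the notation in Definition~\ref{def:ext}, we have $g_{\eta i} = \overline g(\eta i) = g_i$, as required.

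For part (b), the plan is to exhibit the map $\phi : \Psi I \to [[\Psi I,A],A]$, $x \mapsto \ev_x$, as a $\DCat$-morphism; then, since it evidently satisfies $\phi(\eta i) = \ev_{\eta i}$, the uniqueness of the extension $\overline{h}$ of the family $h(i) := \ev_{\eta i}$ will force $\phi = \overline h$, and by Definition~\ref{def:ext} this exactly says that $(\ev_x)_{x\in \und{\Psi I}}$ is the inductive extension of $(\ev_{\eta i})_{i \in I}$.

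The only non-trivial step is therefore to verify that $\phi$ is indeed a $\DCat$-morphism. By Assumption~\ref{asm:geneilenberg}(b), $[[\Psi I,A],A]$ is a subalgebra of the power $A^{\und{[\Psi I,A]}}$. So it suffices to check two things. First, $\phi$ lands in the subalgebra: this requires each $\ev_x : [\Psi I,A] \to A$ to be a $\DCat$-morphism, which is precisely Remark~\ref{rem:ev}(1). Second, viewed as a map into $A^{\und{[\Psi I,A]}}$, $\phi$ is a $\DCat$-morphism: since the product structure is pointwise, this reduces to showing that each composite $\pi_f \cdot \phi : \Psi I \to A$ is a $\DCat$-morphism, for $f \in [\Psi I,A]$. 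But $\pi_f \cdot \phi$ sends $x$ to $\ev_x(f) = f(x)$, so this composite is just $f$ itself, which is a $\DCat$-morphism by assumption.

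I expect no real obstacle here; the argument is a direct unwinding of the definitions together with the pointwise nature of the algebraic structure on hom-objects guaranteed by Assumption~\ref{asm:geneilenberg}(b). The slight conceptual point worth highlighting is that $\phi$ is the familiar ``double dual'' unit transposed from the evaluation $[\Psi I,A] \otimes \Psi I \to A$ that exists in the closed monoidal category $(\DCat,\otimes,\Psi 1)$ of Remark~\ref{rem:dmonoidal}; the elementary argument above simply avoids invoking this extra machinery.\qed
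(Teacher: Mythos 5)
Your proposal is correct. Part (a) coincides with the paper's argument (the paper merely writes the identity $g_{\eta i}=\ol g(\eta i)=g_i$ out pointwise via $\ev_x$). For part (b) you take a genuinely different, slightly more conceptual route: you verify directly that the assignment $x\mapsto \ev_x$ is carried by a $\DCat$-morphism $\Psi I \to [[\Psi I,A],A]$ — using that $[[\Psi I,A],A]$ is a subalgebra of the power $A^{\und{[\Psi I,A]}}$ (Assumption \ref{asm:geneilenberg}(b)), that the component of this map at $f$ is just $\und f$, and that the forgetful functor preserves products — and then invoke uniqueness of the free extension along $\eta$ to conclude it must equal $\ol g$. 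The paper instead never checks morphism-hood of any new map: it takes the extension $\ol g$ of $(\ev_{\eta i})$, which is a morphism by construction, proves the auxiliary identity $\ev_f\cdot\ol g=f$ by testing on generators, and then reads off $g_x=\ev_x$ pointwise. Your version buys the cleaner ``transpose of evaluation'' picture and avoids the auxiliary identity; the paper's version buys a purely elementwise calculation that also transfers verbatim to the abstract framework of Section \ref{sec:oldasm}, where your corestriction step (a morphism into the power that factors setwise through $[A,B]$ corestricts to a morphism into $[A,B]$) is exactly the extra condition 3(f)(iv) imposed there — in the locally finite variety setting assumed in the lemma this step is unproblematic, so there is no gap.
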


\begin{proof}
\begin{enumerate}[(a)]
  \item For all $x\in A$ we have:
\begin{align*}
\und{g_{\eta i}}(x) &= \und{\ev_{x}}(g_{\eta i}) & \text{def. } & \ev\\
&= \und{\ev_{x}}(\und{\overline g}(\eta i)) & \text{def. } & g_{\eta i}\\
&= \und{\ev_{x}}(g_i) & \text{def. } & \ol g\\
&= \und{g_i}(x) & \text{def. } & \ev
\end{align*}
\item Extend the function $g: I\ra \und{[[\Psi I, A],A]}$, $i\mapsto \ev_{\eta i}$, to a $\DCat$-morphism $\bar g$:
\[
\xymatrix{
I\ar[d]_{\eta} \ar[dr]^-{g} &\\
\und{\Psi I} \ar[r]_-{\und{\overline g}} & \und{[[\Psi I,A],A]}
}
\]
We need to show that the extended family consists of $g_x = \ev_x$ for all $x\in\und{\Psi I}$. To this end, we first prove the equation
\[ \ev_f\cdot \ol g = f \quad\text{for all } f: \Psi I\ra A.\]
It suffices to prove $\und{\ev_f\cdot \ol g}\o\eta = \und{f}\o\eta$, and indeed we have for all $i\in I$:
\begin{align*}
\und{\ev_f}(\und{\ol g}(\eta i)) &= \und{\ev_f}(gi) & \text{def. } & \ol g\\
&= \und{\ev_f}(\ev_{\eta i}) & \text{def. } & g\\
&= \und{\ev_{\eta i}}(f) & \text{def. } & \ev\\
&= \und{f}(\eta i) & \text{def. } & \ev
\end{align*}
Therefore, for all $x\in\und{\Psi I}$ and $f: \Psi I\ra A$,
\begin{align*}
\und{g_x}(f) &= \und{\ev_f}(g_x) & \text{def. }  \ev\\
&= \und{\ev_f}(\und{\ol g}(x)) & \text{def. }  g_x\\
&= \und{f}(x) & \text{eqn. above}\\
&= \und{\ev_x}(f) & \text{def. }  \ev
\end{align*}
so $g_x=\ev_x$ as claimed. \qed
\end{enumerate}
\end{proof}

\begin{lemma}[Inductive Proof Principle]\label{lem:indproof}
Let $f,f'$, $h,h'$ and $g_i, g_i'$ ($i\in I$) be $\DCat$-morphisms as in the diagram $(\ast)$ below.
\[ 
\xymatrix@=10pt{
&  B \ar@{}[ddr]|{(\ast)} \ar[r]^{g_i} & C \ar[dr]^h & && &  B \ar@{}[ddr]|{(\ast\ast)} \ar[r]^{g_x} & C \ar[dr]^h & \\
A \ar[ur]^f \ar[dr]_{f'} & & & D && A \ar[ur]^f \ar[dr]_{f'} & & & D \\
& B' \ar[r]_{g_i'} & C' \ar[ur]_{h'} & && & B' \ar[r]_{g_x'} & C' \ar[ur]_{h'} & 
}
 \]
If $(\ast)$ commutes for all $i\in I$, then $(\ast\ast)$ commutes for all $x\in\Psi I$.
\end{lemma}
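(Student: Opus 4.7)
The plan is to exploit the universal property of $\Psi I$ as the free $\DCat$-algebra on $I$, applied to the hom-algebra $[A,D]$. Specifically, I will exhibit two $\DCat$-morphisms $\phi, \phi' : \Psi I \to [A,D]$ which agree on the generators $\eta(I)$, and then appeal to the freeness of $\Psi I$ to conclude that $\phi = \phi'$, which is exactly the claim of $(\ast\ast)$.

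For the construction, recall from the Inductive Extension Principle that $g_x = \overline{g}(x)$ where $\overline{g} : \Psi I \to [B,C]$ is the unique extension of $i \mapsto g_i$; likewise for $g_x' = \overline{g'}(x)$. Using Remark \ref{rem:ev}(2), post-composition with $h$ and pre-composition with $f$ give $\DCat$-morphisms $c_h : [B,C] \to [B,D]$ and $c_f' : [B,D] \to [A,D]$, so the composite
\[
\phi \;:=\; c_f' \cdot c_h \cdot \overline g \;:\; \Psi I \to [A,D],
\qquad
\phi(x) = h \cdot g_x \cdot f,
\]
is a $\DCat$-morphism, and analogously
\[
\phi' \;:=\; c_{f'}' \cdot c_{h'} \cdot \overline{g'} \;:\; \Psi I \to [A,D],
\qquad
\phi'(x) = h' \cdot g_x' \cdot f',
\]
is a $\DCat$-morphism.

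Next, I compare $\phi$ and $\phi'$ on the generators. By Lemma \ref{lem:ext}(a) we have $g_{\eta i} = g_i$ and $g_{\eta i}' = g_i'$, and hypothesis $(\ast)$ says that $h \cdot g_i \cdot f = h' \cdot g_i' \cdot f'$ for every $i \in I$. Hence $\phi \cdot \eta = \phi' \cdot \eta$ as functions $I \to \und{[A,D]}$. By the universal property of $\eta : I \to \Psi I$ (freeness), there is a unique $\DCat$-morphism $\Psi I \to [A,D]$ extending this common function, so $\phi = \phi'$.

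Finally, Assumption \ref{asm:geneilenberg}(b) ensures that $[A,D]$ is a subalgebra of $D^{\und{A}}$, so equality of elements of $[A,D]$ is equality of the underlying $\DCat$-morphisms $A \to D$. Therefore, evaluating $\phi = \phi'$ at an arbitrary $x \in \Psi I$ yields $h \cdot g_x \cdot f = h' \cdot g_x' \cdot f'$, which is commutativity of $(\ast\ast)$. The only subtlety worth double-checking is that $c_h$ and $c_f'$ are genuinely $\DCat$-morphisms (not just set maps), but this is precisely the content of Remark \ref{rem:ev}(2); no further obstacle arises.
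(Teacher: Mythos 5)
Your proof is correct and follows essentially the same route as the paper: both form the $\DCat$-morphisms $c_f'\cdot c_h\cdot\overline g$ and $c_{f'}'\cdot c_{h'}\cdot\overline{g'}$ into $[A,D]$, show they agree on the generators via hypothesis $(\ast)$, invoke the universal property of $\eta: I \to \Psi I$ to conclude they are equal, and then evaluate at $x\in\Psi I$. The argument is sound; the only cosmetic difference is that the paper checks the generator equality by direct element computation rather than packaging it as $\phi\cdot\eta=\phi'\cdot\eta$.
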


\begin{proof}
Suppose that $h\cdot g_i\cdot f = h'\cdot g_i' \cdot f'$ for all $i\in I$. We first prove the equation
\[ c_f'\cdot c_h \cdot \ol g = c_{f'}'\cdot c_{h'} \cdot \ol{g'} \]
where $c_h$, $c_{h'}$, $c_f'$ and $c_{f'}'$ are the $\DCat$-morphisms of Remark \ref{rem:ev}(b) and $\ol g, \ol {g'}: \Psi I \ra [B,C]$ are as in Definition \ref{def:ext}. Indeed, we have for all $i\in I$:
\begin{align*}
\und{c_f'}(\und{c_h}(\und{\ol g}(\eta i))) &= \und{c_f'}(\und{c_h}(g_i)) & \text{def. } \ol g\\
&= h\cdot g_i \cdot f & \text{def. } c_h, c_f'\\
&= h'\cdot g_i' \cdot f' & \text{assumption}\\
&= \und{c_{f'}'}(\und{c_{h'}}(\und{\ol{g'}}(\eta i))) & \text{compute backwards}
\end{align*}
We conclude that, for all $x\in\Psi I$,
\begin{align*}
h\cdot g_x \cdot f &= \und{c_f'}(\und{c_h}(g_x)) & \text{def. } c_h, c_f'\\
&=  \und{c_f'}(\und{c_h}(\und{\overline g}(x))) & \text{def. } g_x\\
&=  \und{c_{f'}'}(\und{c_{h'}}(\und{\overline {g'}}(x))) & \text{eqn. above}\\
&= h'\cdot g_x' \cdot f' & \text{compute backwards} \tag*{\qed}
\end{align*}
\end{proof}

\begin{definition}\label{def:bimonoid}
A \emph{$\DCat-$monoid} $(A,\circ,i)$ consists of an algebra $A$ in $\DCat$, a bimorphism $\circ: A\times A\ra A$ and an element $i: 1\ra A$ subject to the usual monoid axioms, i.e., the multiplication $\circ$ is associative and $i$ is its unit. A \emph{morphism} of $\DCat$-monoids
\[ h: (A,\circ,i) \ra (A',\circ',i') \]
  is a $\DCat$-morphism $h:A\ra A'$ that is also a monoid morphism, i.e, it preserves the multiplication and the unit. We denote by $\Bim$ the category of $\DCat$-monoids, and by $\Bimfp$ the full subcategory of all finite $\DCat$-monoids.
\end{definition}

\begin{rem}
$\DCat$-monoids are precisely the monoid objects in the monoidal category $(\DCat,\otimes,\Psi 1)$, see Remark \ref{rem:dmonoidal}.
\end{rem}

\begin{expl}
$\DCat$-monoids in $\DCat=\Set,\Poset,\JSL,\Vect{\Field_2}$ correspond to monoids, ordered monoids, idempotent semirings and $\Field_2$-algebras, respectively.
\end{expl}

\begin{lemma}\label{lem:bimprops}
\begin{enumerate}[(a)]
\item $\Bim$ is complete and $\Bimfp$ is finitely complete, with limits formed on the level of $\DCat$.
\item The (epi, strong mono)-factorization system of $\DCat$ lifts to $\Bim$.
\end{enumerate}
\end{lemma}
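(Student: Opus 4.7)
Part (a) amounts to showing that the forgetful functor $U:\Bim\ra\DCat$ creates limits. Given a small diagram $D:\mathcal{J}\ra\Bim$, I would form $A := \limit UD$ in $\DCat$ with projections $\pi_j: A \ra UD(j)$ and equip $A$ with a multiplication and unit whose components along the $\pi_j$ match those of $D(j)$. Concretely, the maps $\circ_{D(j)}\cdot(\pi_j\times\pi_j): A\times A\ra UD(j)$ form a cone of bimorphisms, equivalently a cone of $\DCat$-morphisms $A\otimes A \ra UD(j)$, and by the universal property of the limit they factor uniquely through a $\DCat$-morphism $A\otimes A\ra A$, i.e.\ a bimorphism $\circ_A: A\times A\ra A$. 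The unit $\Psi 1\ra A$ is obtained analogously from the cone $(i_{D(j)})_j$. The monoid axioms on $A$ follow from those on each $D(j)$ by uniqueness, and the universal property of $A$ as a limit in $\Bim$ is then routine. For $\Bimfp$, the additional point is that $\DCat_f$ is closed under finite limits in $\DCat$: in a locally finite variety, finite products of finite algebras are finite, and equalizers are subalgebras of finite algebras, hence finite.

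For part (b), suppose $h:(A,\circ_A,i_A)\ra (B,\circ_B,i_B)$ is a $\DCat$-monoid morphism with (epi, strong mono)-factorization $h = m\cdot e$ in $\DCat$. The plan is to lift $C$ to a $\DCat$-monoid making $e$ and $m$ monoid morphisms. I would set $i_C := e\cdot i_A$ and obtain $\circ_C$ by diagonal fill-in. The key observation is that $e\otimes e: A\otimes A\ra C\otimes C$ is epic: by Assumption~\ref{asm:geneilenberg}(a) the morphism $e$ is surjective, hence a regular epi in the variety $\DCat$, and this is preserved by the left adjoints $A\otimes(-)$ and $(-)\otimes C$ arising from monoidal closedness, so $e\otimes e$ is a composite of two regular epis. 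Diagonal fill-in applied to the outer rectangle
\[
\xymatrix@=18pt{
A\otimes A \ar@{->>}[d]_{e\otimes e} \ar[r]^-{\circ_A} & A \ar@{->>}[d]^{e} \\
C\otimes C \ar@{-->}[r]^-{\circ_C} \ar[d]_{m\otimes m} & C \ar@{>->}[d]^{m} \\
B\otimes B \ar[r]_-{\circ_B} & B
}
\]
then yields a unique $\circ_C$ making both inner squares commute, which states that $e$ and $m$ preserve multiplication.

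The main obstacle is to verify that $(C,\circ_C,i_C)$ is indeed a $\DCat$-monoid and that the factorization system then lifts from $\DCat$ to $\Bim$. For associativity, the two composites $\circ_C\cdot(\circ_C\otimes\id)$ and $\circ_C\cdot(\id\otimes\circ_C)$ agree after precomposition with $e\otimes e\otimes e$ by associativity in $A$ combined with $\circ_C\cdot(e\otimes e) = e\cdot\circ_A$; since $e\otimes e\otimes e$ is epi by the same adjointness argument, cancellation yields the identity on $C\otimes C\otimes C$. The unit laws are analogous. Finally, given any commuting square of $\DCat$-monoid morphisms whose horizontals are a $\DCat$-epi and a $\DCat$-strong mono respectively, the unique $\DCat$-diagonal $d$ preserves multiplication and unit automatically, because both $d\cdot\circ_B = \circ_C\cdot(d\otimes d)$ and $d\cdot i_B = i_C$ become valid after postcomposition with the monomorphism $m$ on both sides.
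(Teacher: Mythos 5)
Your proof is correct, and for part (a) it is essentially the paper's argument (form the limit in $\DCat$ and induce multiplication and unit by the universal property; the paper phrases the mediating maps at the level of underlying sets, you phrase them via $A\otimes A$, which is equivalent). For part (b) your route is genuinely different from the paper's. The paper argues element-wise: since $\und{e}$ is surjective and $\und{m}$ injective, the underlying \emph{set} of $M_0$ inherits a unique monoid structure making $e$ and $m$ monoid morphisms, and the only remaining issue --- that each translation $x\star-$ is a $\DCat$-morphism --- is settled by a diagonal fill-in applied, for each element $x$, to the square built from the translations $y\bullet-$ and $x'\bullet'-$. You instead perform a single ``global'' diagonal fill-in at the level of $C\otimes C$, which buys a cleaner, element-free construction (and you also record the diagonalization property inside $\Bim$, which the paper leaves implicit), at the price of needing monoidal closedness to know that $e\otimes e$ is epic; the paper's argument only uses surjectivity of $e$ and injectivity of $m$.

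One small repair to your epi-ness argument: the step ``$e$ is surjective, hence a regular epi in the variety $\DCat$'' is fine when $\DCat$ is a variety of algebras, but fails in the ordered case (in $\Poset$, a surjective monotone map such as the identity-carried map from the two-element discrete poset to the two-element chain is epi but not regular epi), and $\Poset$ is one of the intended instances. Fortunately you do not need regularity at all: the functors $A\otimes(-)$ and $(-)\otimes C$ are left adjoints by Assumption \ref{asm:geneilenberg}(b), and left adjoints preserve epimorphisms (via the adjunction bijection on hom-sets), so $e\otimes e$ is a composite of epis and your diagonal fill-in against the strong mono $m$ goes through unchanged. With that one-line adjustment the proof is complete.
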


\begin{proof}
\begin{enumerate}[(a)]
\item Let $V:\Bim\ra\DCat$ denote the forgetful functor. Given a diagram $D: \mathcal{S}\ra \Bim$ where $D_s=(A_s,\circ_s, i_s)$ for $s\in\mathcal{S}$, form the limit cone $(L\xra{p_s} VD_s)_{s\in\mathsf{S}}$ in $\DCat$ (which is also a limit cone in $\Set$ as limits in $\DCat$ are formed one the level of underlying sets). Since \[(\und{L}\times\und{L} \xra{\und{p_s}\times\und{p_s}} \und{VD_s}\times \und{VD_s} \xra{\circ_s} \und{VD_s})_{s\in\mathcal{S}}\quad\text{and}\quad (1\xra{i_s} \und{VD_s})_{s\in\mathcal{S}}\]
are also cones (in $\Set$) over $\und{VD}$, there are unique mediating maps $\circ: \und{L}\times\und{L}\ra \und{L}$ and $i: 1\ra\und{L}$. It is now easy to verify that $(L,\circ,i)$ is $\DCat$-monoid and that $((L,\circ, i)\ra D_s)_{s\in\mathcal{S}}$ forms a limit cone over $D$ in $\Bim$. This proves the completeness of $\Bim$. The proof that $\Bimfp$ is finitely complete is identical -- just start with a finite diagram $D$ and use that $\DCat_{f}$ is closed under finite limits.
\item Let $f : (M,\bullet,i) \to (M',\bullet',i')$ be a morphism of $\DCat$-monoids and $M \xra{e} M_0 \xra{m} M'$ be its (epi, strong mono)-factorization in $\DCat$. Then $\und{m}$ is injective and $\und{e}$ is surjective by Assumption \ref{asm:geneilenberg}(a). Consequently, there exists a unique monoid structure $\star$ and $i_0$ on $\und{M_0}$ for which $\und{e}$ and $\und{m}$ are monoid morphisms. All that needs proving is that for every element $x \in \und{M_0}$ the function $x \star -$ is an endomorphism of $M_0$ in $\DCat$ (and similarly for $- \star x$). Let $d' : M' \to M'$ be the $\DCat$-morphism
\[
\und{d'} = x' \bullet' -
\qquad
\text{for $x' = \und{m}(x)$.}
\]
Since $\und{e}$ is surjective, we have $y \in \und{M}$ with $\und{e}(y) = x$ and we denote by $d : M \to M$ the $\DCat$-morphism $\und{d} = y \bullet -$. Then $f \cdot d = d' \cdot f$ because, for all $z\in M$,
\[
\begin{array}{rcl@{\qquad\qquad}p{5cm}}
  \und{f \cdot d}(z) & = & \und{f}(y \bullet z) & def.~$d$\\
  & = & \und{f}(y) \bullet' \und f (z) & $\und f$ monoid morphism \\
  & = & x' \bullet' \und f (z) & since $\und f(y) = x'$ \\
  & = & \und{d'}(\und f (z)) & def.~$d'$ \\
  & = & \und{d' \cdot f}(z).
\end{array}
\]
The unique diagonal fill in
\[
\xymatrix@=10pt{
M \ar[d]_d \ar[rr]^e && M_0 \ar[d]^m \ar@{-->}[ddll]_{d_0}
\\
M \ar[d]_e && M' \ar[d]^{d'}
\\
M_0 \ar[rr]_m && M'
}
\]
yields a $\DCat$-morphism $d_0$ with $\und{d_0} = x \star -$. Indeed, given $p \in \und{M_0}$, choose $q \in \und{M}$ such that $p = \und{e}(q)$. Then:
\[
\und{d_0}(p) = \und{d_0 \cdot e}(q) = \und{e} \cdot \und{d}(q) = \und{e}(y \bullet q) = \und{e}(y) \star \und{e}(q) = x \star p. \tag*{\qed}
\]
\end{enumerate}
\end{proof}

Our next goal is to show that the free monoid $(\Sigma^*,\o,\epsilon)$ on $\Sigma$ in $\Set$ extends to a free $\DCat$-monoid $(\Psi \Sigma^*, \bullet, \eta\epsilon)$ on $\Sigma$.

\begin{definition}	
For every word $w\in\Sigma^*$ the endomaps $w\cdot -$ and $-\cdot w$ of $\Sigma^*$ yield unique $\DCat$-endomorphisms $l_w$ and $r_w$ of $\Psi\Sigma^*$ making the squares below commute:
\[
\xymatrix{
\Sigma^* \ar[r]^{w \cdot -} \ar[d]_{\eta} & \Sigma^* \ar[d]^{\eta} & \Sigma^* \ar[r]^{-\cdot w} \ar[d]_{\eta} & \Sigma^* \ar[d]^{\eta}\\
\und{\PS} \ar[r]_{\und{l_w}} & \und{\PS} & \und{\PS} \ar[r]_{\und{r_w}} & \und{\PS} 
}
\]
Let $l_x, r_x: \PS\ra \PS$ ($x\in \und{\PS}$) be the arrows obtained by inductively extending the families $(l_w)_{w\in\Sigma^*}$ and $(r_w)_{w\in\Sigma^*}$, respectively.
\end{definition}

\begin{lemma}\label{lem:lrprops} For all $x,y\in \VPS$ the following equations hold:
\begin{enumerate}[(a)]
\item $r_x\cdot l_y = l_y \cdot r_x$
\item $\und{r_y}(x)=\und{l_x}(y)$
\end{enumerate}
\end{lemma}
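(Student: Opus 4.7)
The plan is to reduce both equations to the base case where $x$ and $y$ come from $\Sigma^*$ via $\eta$, where the identities are ultimately just associativity of word concatenation, and then to lift via the Inductive Proof Principle (Lemma~\ref{lem:indproof}) together with the universal property of $\PS = \Psi\Sigma^*$.

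For (a), I would first prove the base case $r_w \cdot l_v = l_v \cdot r_w$ for all $v, w \in \Sigma^*$ as $\DCat$-morphisms $\PS \to \PS$. By freeness of $\PS$ over $\Sigma^*$ it suffices to check equality after precomposing with $\eta$, where both composites yield the function $u \mapsto \eta(vuw)$ via the defining squares of $l_v$ and $r_w$. Two successive applications of Lemma~\ref{lem:indproof} then finish (a). First, fix $w \in \Sigma^*$ and take $g_v = g_v' := l_v$ with $f = r_w$, $f' = \id$, $h = \id$, $h' = r_w$; the inductive extensions (in the sense of Definition~\ref{def:ext}) are $g_y = g_y' = l_y$, and the base case gives the hypothesis, so the conclusion is $r_w \cdot l_y = l_y \cdot r_w$ for all $y \in \VPS$. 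Second, fix such a $y$ and take $g_w = g_w' := r_w$ with $f = l_y$, $f' = \id$, $h = \id$, $h' = l_y$; the just-proved identity becomes the hypothesis and Lemma~\ref{lem:indproof} yields $r_x \cdot l_y = l_y \cdot r_x$ for every $x \in \VPS$.

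For (b), fix $y \in \VPS$ and note that both maps $x \mapsto \und{r_y}(x)$ and $x \mapsto \und{l_x}(y)$ are $\DCat$-morphisms $\PS \to \PS$: the first is $r_y$ itself, and the second factors as $\ev_y \cdot \overline{l}$, where $\overline{l} : \PS \to [\PS,\PS]$, $x \mapsto l_x$, is the extension morphism from Definition~\ref{def:ext} and $\ev_y$ is a $\DCat$-morphism by Remark~\ref{rem:ev}. By freeness of $\PS$ their equality reduces to the instance $x = \eta w$, which by Lemma~\ref{lem:ext}(a) amounts to $\und{r_y}(\eta w) = \und{l_w}(y)$ for all $y \in \VPS$ and $w \in \Sigma^*$. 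Fixing $w$ and applying the same kind of argument to $y$ (again both sides are $\DCat$-morphisms in $y$, the second now being $l_w$ and the first being $\ev_{\eta w} \cdot \overline{r}$), we reduce further to $y = \eta v$, where $\und{r_v}(\eta w) = \eta(wv) = \und{l_w}(\eta v)$ by the defining diagrams.

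The main subtlety lies in verifying that the maps $x \mapsto \und{l_x}(y)$ and $y \mapsto \und{r_y}(\eta w)$ are genuine $\DCat$-morphisms; this relies on the hom-objects $[\PS,\PS]$ living in $\DCat$ with each evaluation $\ev_z$ a $\DCat$-morphism, which is exactly Assumption~\ref{asm:geneilenberg}(b) together with Remark~\ref{rem:ev}. Once these factorizations through $\overline{l}$ and $\overline{r}$ are recognised, both parts become a routine combination of the Inductive Proof Principle and the freeness of $\Psi\Sigma^*$ on $\Sigma^*$.
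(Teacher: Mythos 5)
Your proof is correct and follows essentially the same route as the paper: part (a) is the same base-case check on generators followed by two applications of the Inductive Proof Principle (merely extending in the opposite order), and part (b) rests on the same key factorization of $x \mapsto \und{l_x}(y)$ as $\ev_y \cdot \ol{l}$. The only cosmetic difference is that in (b) you handle the $y$-direction by a second freeness argument through $\ol{r}$, where the paper instead invokes Lemma~\ref{lem:ext}(b) together with the induction principle; both are valid and of equal weight.
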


\begin{proof}
\begin{enumerate}[(a)]
\item Consider the diagrams below:
\[
\xymatrix{
\PS \ar [r]^{r_w} \ar[d]_{l_v} & \PS \ar[d]^{l_v} & \PS \ar [r]^{r_x} \ar[d]_{l_v} & \PS \ar[d]^{l_v} & \PS \ar [r]^{r_x} \ar[d]_{l_y} & \PS \ar[d]^{l_y} & \\
\PS \ar[r]_{r_w} & \PS & \PS \ar[r]_{r_x} & \PS & \PS \ar[r]_{r_x} & \PS
}
\]
The left square commutes for all $v,w\in\Sigma^*$: indeed, for all $u\in\Sigma^*$ we have
\[ \und{l_v}(\und{r_w}(\eta u)) = \eta(vuw)  = \und{r_w}(\und{l_v}(\eta u)) \]
by the definition of $l_v$ and $r_w$. By induction it follows that the middle square commutes for all $v\in\Sigma^*$ and $x\in\VPS$. Using induction again we conclude that the right square commutes for all $x,y\in\VPS$.
\item We first prove that the following diagram commutes for all $y\in \VPS$ (where $\ol l$ is defined as shown in Definition~\ref{def:ext}(1)):
\[
\xymatrix{
\PS \ar[r]^-{\ol l} \ar@{=}[d] & [\Psi\Sigma^*,\Psi\Sigma^*] \ar[d]^{\ev_y}\\
\Psi\Sigma^* \ar[r]_-{r_y} & \Psi\Sigma^*
}
\]
By Lemma \ref{lem:ext}(b) and the induction principle it suffices to prove that it commutes for $y=\eta w$ where $w\in\Sigma^*$. In fact, we have for all $v\in\Sigma^*$:
\begin{align*}
\und{\ev_{\eta w}}(\und{\ol l}(\eta v)) &= \und{\ev_{\eta w}}(l_v) & \text{def. } \ol l\\
&= \und{l_v}(\eta w) & \text{def. } \ev\\
&= \eta(vw) & \text{def. } l_v\\
&= \und{r_w}(\eta v) & \text{def. } r_w\\
&= \und{r_{\eta w}}(\eta v) & \text{Lemma \ref{lem:ext}(a)}
\end{align*}
and therefore $\ev_{\eta w}\cdot \ol l = r_{\eta w}$. It follows that, for all $x,y\in\VPS$:
\begin{align*}
\und{r_y}(x) &= \und{\ev_y}(\und{\ol l}(x)) & \text{diagram above}\\
&= \und{\ev_y}(l_x) & \text{def. } l_x\\
&= \und{l_x}(y) &\text{def. } \ev \tag*{\qed}
\end{align*} 
\end{enumerate}
\end{proof}

\begin{definition}
We define a multiplication on $\VPS$ as follows:
\[ x\bullet y := \und{r_y}(x) = \und{l_x}(y) \quad\text{ for all } x,y\in \VPS.\]
\end{definition}

\begin{proposition}\label{prop:freebim}
$(\PS,\bullet,\eta\epsilon)$ is the free $\DCat$-monoid on $\Sigma$: for any $\DCat$-monoid $(A,\circ,i)$ and any function $f: \Sigma\ra \und{A}$, there is a unique extension to a $\DCat$-monoid morphism $\overline f: \PS\ra A$.
\[
\xymatrix{
\Sigma^* \ar[r]^-\eta & \VPS \ar@{-->}[d]^{\und{\overline f}}\\
\Sigma \ar@{ >->}[u] \ar[r]_f  & \und{A}
}
\]
\end{proposition}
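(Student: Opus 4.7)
The plan is first to verify that $(\PS, \bullet, \eta\epsilon)$ is indeed a $\DCat$-monoid, and then to construct $\overline f$ by combining the universal property of $\Sigma^*$ as the free monoid on $\Sigma$ in $\Set$ with the universal property of $\PS$ as the free $\DCat$-algebra on $\Sigma^*$.

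For the $\DCat$-monoid structure on $\PS$, bilinearity of $\bullet$ is immediate from the definition, since $x \bullet - = \und{l_x}$ and $- \bullet y = \und{r_y}$ are $\DCat$-morphisms. Associativity reduces to
\[
(x \bullet y) \bullet z = \und{r_z}(\und{l_x}(y)) = \und{l_x}(\und{r_z}(y)) = x \bullet (y \bullet z),
\]
which follows directly from Lemma \ref{lem:lrprops}(a). For the unit law, I would first check that $l_\epsilon = r_\epsilon = \id_{\PS}$: each of these $\DCat$-endomorphisms fixes every $\eta u$ by the defining squares with $w = \epsilon$, so the universal property of $\PS$ forces them to equal $\id_\PS$. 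Then Lemma \ref{lem:ext}(a) gives $l_{\eta\epsilon} = l_\epsilon = \id$ and $r_{\eta\epsilon} = r_\epsilon = \id$, so $\eta\epsilon \bullet x = x = x \bullet \eta\epsilon$ for every $x$.

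For the universal property, let $f_0 : \Sigma^* \to \und{A}$ be the unique monoid morphism extending $f$ (from $\Sigma^*$ being the free monoid on $\Sigma$ in $\Set$), and let $\overline f : \PS \to A$ be the unique $\DCat$-morphism with $\und{\overline f} \cdot \eta = f_0$ (free algebra property). Preservation of the unit is immediate: $\und{\overline f}(\eta\epsilon) = f_0(\epsilon) = i$. The essential point is multiplicativity: for every $x \in \VPS$ I need the $\DCat$-morphisms $\overline f \cdot l_x$ and $(\overline f(x) \circ {-}) \cdot \overline f$ from $\PS$ to $A$ to agree (the second factor being a $\DCat$-morphism because $A$ is a $\DCat$-monoid). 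Applying the inductive proof principle of Lemma \ref{lem:indproof} to the family $(l_w)_{w \in \Sigma^*}$, it suffices to verify the equation for $x = \eta u$ with $u \in \Sigma^*$, and there both sides, being $\DCat$-morphisms out of $\PS$, may be compared by precomposing with $\eta$, where
\[
\und{\overline f}(\und{l_u}(\eta v)) = f_0(uv) = f_0(u) \circ f_0(v) = \overline f(\eta u) \circ \und{\overline f}(\eta v)
\]
since $f_0$ is a monoid morphism. Uniqueness follows from the observation that $\eta : \Sigma^* \to \und{\PS}$ is itself a monoid morphism (as $\eta u \bullet \eta v = \und{r_v}(\eta u) = \eta(uv)$), so any $\DCat$-monoid morphism $g : \PS \to A$ extending $f$ must satisfy $\und{g} \cdot \eta = f_0$, whence $g = \overline f$ by the free algebra property.

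The main obstacle I anticipate is simply the bookkeeping: because $\bullet$ itself is defined by the two-step inductive extension of Definition \ref{def:ext}, proving an identity involving $x \bullet y$ for all $x, y \in \VPS$ requires carefully nesting the two available principles, namely the universal property of $\PS$ as free $\DCat$-algebra on $\Sigma^*$ (for equalities between single $\DCat$-morphisms) and the inductive proof principle of Lemma \ref{lem:indproof} (for parametrized families of $\DCat$-morphisms).
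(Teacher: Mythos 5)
Your monoid-structure verification, the construction of $\ol f$ from the two universal properties, the unit computation, and the explicit uniqueness argument (which the paper's proof leaves implicit) are all correct and essentially follow the paper's route. The step that does not go through as written is the appeal to Lemma \ref{lem:indproof} for multiplicativity. That lemma lets you pass from indices $i\in I$ to all $x\in\Psi I$ only when \emph{both} sides of the equation depend on $x$ through inductive extensions in the sense of Definition \ref{def:ext}. In your equation $\ol f\cdot l_x=(\und{\ol f}(x)\circ-)\cdot\ol f$ the left side does, but on the right the dependence on $x$ enters through the element $\und{\ol f}(x)$, i.e.\ through the assignment $x\mapsto \und{\ol f}(x)\circ-$, which is not given as the inductive extension of the family $(f_0(u)\circ-)_{u\in\Sigma^*}$. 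To license your application you must first show that this assignment is a $\DCat$-morphism $\PS\ra[A,A]$ — for instance by currying $\circ$: by Assumption \ref{asm:geneilenberg}(b) the map $a\mapsto(a\circ-)$ is a $\DCat$-morphism $A\ra[A,A]$, since its components $-\circ b$ are morphisms and $[A,A]$ is a subalgebra of $A^{\und{A}}$, and then compose with $\ol f$; uniqueness of extensions along $\eta$ then identifies it with the inductive extension. Without some such step, Lemma \ref{lem:indproof} only yields $\ol f\cdot l_x=g'_x\cdot\ol f$ for the abstract extension $g'_x$ of $(f_0(u)\circ-)_u$, which you have not related to $\und{\ol f}(x)\circ-$.

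The paper sidesteps exactly this issue by different bookkeeping: it first proves $\ol f\cdot r_w=(-\circ\und{\ol f}(\eta w))\cdot\ol f$ for all $w\in\Sigma^*$ by checking on the generators $\eta v$, and then, for each \emph{fixed} $x\in\VPS$, compares the two $\DCat$-morphisms $\ol f\cdot l_x$ and $(\und{\ol f}(x)\circ-)\cdot\ol f$ directly on the generators $\eta w$, using Lemma \ref{lem:lrprops}(b) together with Lemma \ref{lem:ext}(a) to rewrite $\und{l_x}(\eta w)=\und{r_w}(x)$ and feed in the first square; no induction principle is needed at that point, and the only families ever extended inductively are $(l_w)$ and $(r_w)$ themselves — this is precisely what Lemma \ref{lem:lrprops}(b) is for. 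Either repair (the currying observation above, or the paper's reshuffling of which argument sits on the generators) closes your gap; the remainder of your proof stands as is.
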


\begin{proof}
We first show that $(\PS,\bullet,\eta\epsilon)$ is a $\DCat$-monoid. Indeed:
\begin{enumerate}[(1)]
\item $\bullet$ is associative: for all $x,y,z\in\VPS$ we have
\[x\bullet (y\bullet z) = \und{l_x}(\und{r_z}(y)) = \und{r_z}(\und{l_x}(y)) = (x\bullet y)\bullet z\]
using the definition of $\bullet$ and Lemma \ref{lem:lrprops}(a).
\item $\eta \epsilon$ is the neutral element: for all $x\in\VPS$ we have
\[x\bullet \eta \epsilon = \und{r_{\eta\epsilon}}(x) =\und{r_\epsilon}(x) = \und{\id}(x) = x\]
and symmetrically $\eta\epsilon \bullet x = x$.
\item $\bullet$ is a $\DCat$-bimorphism since, for all $x\in\VPS$, the functions $l_x = x\bullet -$ and $r_x = -\bullet x$ are $\DCat$-morphisms.
\end{enumerate}
It remains to verify the universal property. Given $f$ as in the diagram above, one can first extend $f$ to a monoid morphism $f': \Sigma^*\ra \und{A}$ (using that $\Sigma^*$ is the free monoid on $\Sigma$) and then extend $f'$ to a $\DCat$-morphism $\ol f:\Psi\Sigma^*\ra A$ with $\und{\ol f}\cdot \eta = f'$ (by the universal property of $\eta$). We only need to verify that $\und{\ol f}$ is a monoid morphism. Firstly, $\und{\ol f}$ preserves the unit:
\begin{align*}
\und{\ol f}(\eta \epsilon) &= f'(\epsilon) & \text{def. } \ol f\\
&= i & f' \text{ monoid morphism}
\end{align*}
To prove that $\und{\ol f}$ also preserves the multiplication, consider the squares below  where $l_x', r_w': A\ra A$ are the $\DCat$-morphisms  $\und{l_x'}=\und{\ol f}(x)\circ -$ and $\und{r_w'}=-\circ \und{\ol f}(\eta w)$.
\[
\xymatrix{
\PS \ar[r]^{r_w} \ar[d]_{\ol f} & \PS \ar[d]^{\ol f} & \PS \ar[r]^{l_x} \ar[d]_{\ol f} & \PS \ar[d]^{\ol f} \\
A \ar[r]_{r_w'} & A & A \ar[r]_{l_x'} & A
}
\]
The left square commutes for all $w\in\Sigma^*$ because, for all $v\in\Sigma^*$,
\begin{align*}
 \und{\ol f}(\und{r_w}(\eta v)) &= \und{\ol f}(\eta(vw)) & \text{def. } r_w\\
&= f'(vw) & \text{def. } \ol f\\
&= f'(v)\circ f'(w) & f' \text{ monoid morphism}\\
&= \und{\ol f}(\eta v)\circ \und{\ol f}(\eta w) & \text{def. } \ol f\\
&= \und{r_{w}'}(\und{\ol f}(\eta v)) & \text{def. } r_w'
\end{align*}
Then also the right square commutes for all $x\in\VPS$ because, for all $w\in\Sigma^*$,
\begin{align*}
\und{\ol f}(\und{l_x}(\eta w)) &= \und{\ol f}(\und{r_w}(x)) & \text{Lemma \ref{lem:lrprops}(b)}\\
&= \und{r_w'}(\und{\ol f}(x)) & \text{left square}\\
&= \und{\ol f}(x)\circ \und{\ol f}(\eta w) & \text{def. } r_w'\\
&= \und{l_x'}(\und{\ol f}(\eta w)) & \text{def. } l_x' 
\end{align*}
We conclude that, for all $x,y\in\VPS$,
\begin{align*}
\und{\ol f}(x\bullet y) &= \und{\ol f}(\und{l_x}(y)) & \text{def. } \bullet\\
&= \und{l_x'}(\und{\ol f}(y)) & \text{right square}\\
&= \und{\ol f}(x)\circ \und{\ol f}(y) & \text{def. } l_x'\tag*{\qed}
\end{align*}
\end{proof}

\begin{expl}
\begin{enumerate}[(a)]
\item For $\DCat=\Set$ or $\Poset$ we have  $\PS=\Sigma^*$ (discretely ordered in the case $\DCat=\Poset$). The monoid multiplication is concatenation of words, and the unit is $\epsilon$.
\item For $\DCat=\JSL$ we have $\PS=\FPow \Sigma^*$, the semilattice of all finite languages over $\Sigma$ w.r.t. union. The monoid multiplication is concatenation of languages, and the unit is the language $\{\epsilon\}$.
\item For $\DCat=\Vect{\Field_2}$ we have $\PS=\Pow \Sigma^*$, the vector space of all finite languages over $\Sigma$ where vector addition is symmetric difference. The monoid multiplication is the $\Field_2$-weighted concatenation  $L\otimes L'$ of languages, (i.e., $L \otimes L'$ consists of all words $w$ having an odd number of factorizations $w=uu'$ with $u\in L$ and $u'\in L'$), and the unit is again $\{\epsilon\}$.
\end{enumerate}
\end{expl}

\begin{definition}\begin{enumerate}[(a)]
\item A \emph{$\Sigma$-generated $\DCat$-monoid} is a quotient of $\PS$, i.e., a $\DCat$-monoid morphism $e: \Psi\Sigma^*\epito A$ with $e$ epic in $\DCat$. A \emph{morphism} between two $\Sigma$-generated $\DCat$-monoids $e: \PS\epito A$ and $e': \PS\ra A'$ is a generator-preserving $\DCat$-monoid morphism $f: A\ra A'$, i.e., $f\cdot e=e'$.
\item We denote by $\SBim$ the poset of all $\Sigma$-generated $\DCat$-monoids under the usual quotient ordering (see Notation \ref{not:order}), and by $\SBimfp$ the subposet of all $\Sigma$-generated finite $\DCat$-monoids.
\end{enumerate}
\end{definition}

\begin{rem}\label{rem:subdirprod}
The poset $\SBim$ is a complete lattice -- the join of a family $e_i: \PS\epito A_i$ ($i\in I$) of $\Sigma$-generated $\DCat$-monoids is their \emph{subdirect product} $S$,  obtained by forming their product in $\Bim$ and the (strong epi, mono)-factorization of the morphism $\langle e_i\rangle$:
\[  
\xymatrix{
\PS \ar@{->>}[d] \ar[drr]^{\langle e_i\rangle} \ar@{->>}[rr]^{e_i} && A_i\\
 S \ar@{>->}[rr] && \prod A_i \ar[u]_{\pi_i}
}
\]
Indeed, this follows from the fact that $\Bim$ is complete and inherits the factorization system of $\DCat$, see Lemma \ref{lem:bimprops}. Analogously, since $\Bimfp$ is finitely complete, $\SBimfp$ is a join-semilattice, in fact a join-subsemilattice of $\SBim$.
\end{rem}
$\Sigma$-generated monoids are closely related to algebras for the functor $L_\Sigma$.

\begin{notation}\label{not:mon}
In analogy to Notation \ref{not:rqc} we represent $L_\Sigma$-algebras $\one + \coprod_\Sigma A \xra{\alpha} A$  as triples
\[ A = (A, \alpha_a: A\ra A, i: \one\to A).\]
For any $L_\Sigma$-algebra $A=(A,\alpha_a,i)$ and $w\in\Sigma^*$ we put
\[ A_w := (A,\alpha_a, \alpha_w\cdot i)\]
where $ \alpha_w = \alpha_{a_n}\cdot \cdots \cdot \alpha_{a_1}$ for  $w=a_1\cdots a_n\in\Sigma^*$.
\end{notation}

\begin{rem}\label{rem:algmor}
Dually to Remark  \ref{rem:rqcmor}, a $\DCat$-morphism $h: A\ra A'$  is an $L_\Sigma$-algebra homomorphism $h: (A,\alpha_a,i)\ra (A',\alpha_a',i')$ iff the following diagram commutes for all $a\in\Sigma$:
\[
\xymatrix{
\one \ar[r]^i \ar[dr]_{i'} & A \ar[r]^{\alpha_a} \ar[d]_h & A \ar[d]^h  \\
& A' \ar[r]_{\alpha_a'} & A'
}
\]
In this case also the square below commutes, which implies that $h$ is $L_\Sigma$-coalgebra homomorphism $h: A_w\ra A'_w$ for all $w\in\Sigma^*$.
\[
\xymatrix{
A \ar[r]^{\alpha_w} \ar[d]_h & A \ar[d]^h  \\
A' \ar[r]_{\alpha_w'} & A'
}
\]
\end{rem}

\begin{rem}\label{rem:bimtoalg}
Every $\Sigma$-generated $\DCat$-monoid $e:\Psi\Sigma^*\epito (A,\circ,i)$ induces an $ L_\Sigma$-algebra $\tl A=(A,\alpha_a,\overline i)$ where $\alpha_a = -\circ \und{e}(\eta a): A\ra A$ and $\overline i: \Psi 1 = \one\ra A$ is the free extension of $i:1\ra \und{A}$. 
 In particular, for $e=\id$ we obtain
 \[\tl{\Psi\Sigma^*}=(\Psi\Sigma^*,r_a,\overline{\eta \epsilon}).\]
 Clearly every generator-preserving morphism $f:A\ra B$ of $\Sigma$-generated $\DCat$-monoids is also an $L_\Sigma$-algebra homomorphism $f: \tl A\ra \tl B$.
\end{rem}

\begin{proposition}\label{prop:initalg}
$\tl{\Psi\Sigma^*}=\mu  L_\Sigma$.
\end{proposition}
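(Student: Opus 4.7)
The plan is to show that for any $L_\Sigma$-algebra $(A,\alpha_a,i)$ there exists a unique $L_\Sigma$-algebra homomorphism $h\colon \tl{\PS}\to (A,\alpha_a,i)$. Since $\PS$ is the free $\DCat$-algebra on the set $\Sigma^*$, every $\DCat$-morphism out of $\PS$ is uniquely determined by its underlying function on $\Sigma^*$. Mimicking the standard proof that $\Sigma^*$ is the initial $(1+\Sigma\times\mathord{-})$-algebra in $\Set$, I first define a function $f\colon \Sigma^*\to \und{A}$ by recursion on word length: set $f(\epsilon) := \und{i}(*)$, where $*$ denotes the element of $\und{\one}=\und{\Psi 1}$ in the image of the unit at $1$, and $f(wa) := \und{\alpha_a}(f(w))$ for $a\in\Sigma$ and $w\in\Sigma^*$. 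Then let $h\colon \PS\to A$ be the unique $\DCat$-morphism with $\und{h}\o\eta = f$.

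To verify that $h$ is an $L_\Sigma$-algebra homomorphism, by Remark \ref{rem:algmor} one must check that $h\o\overline{\eta\epsilon}=i$ and $h\o r_a=\alpha_a\o h$ for every $a\in\Sigma$. The first identity holds because $\one=\Psi 1$ is free on one generator, so the two parallel $\DCat$-morphisms $\one\to A$ coincide as soon as they agree at the generator; and indeed $\und{h}(\und{\overline{\eta\epsilon}}(*))=\und{h}(\eta\epsilon)=f(\epsilon)=\und{i}(*)$. For the second identity, both sides are $\DCat$-morphisms $\PS\to A$, hence coincide iff their composites with $\eta\colon\Sigma^*\to\und{\PS}$ do. Using that $\und{r_a}(\eta w)=\eta(wa)$ by definition of $r_a$ as the free extension of right multiplication by $a$, one computes $\und{h}(\und{r_a}(\eta w))=\und{h}(\eta(wa))=f(wa)=\und{\alpha_a}(f(w))=\und{\alpha_a}(\und{h}(\eta w))$ for every $w\in\Sigma^*$, as required.

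Uniqueness falls out of the same recursive characterization: any $L_\Sigma$-algebra homomorphism $h'\colon \PS\to A$ must satisfy $\und{h'}(\eta\epsilon)=\und{i}(*)=f(\epsilon)$ and $\und{h'}(\eta(wa))=\und{\alpha_a}(\und{h'}(\eta w))$ for all $w\in\Sigma^*$, $a\in\Sigma$, so by induction on word length $\und{h'}\o\eta = f=\und{h}\o\eta$, whence $h'=h$ by the universal property of $\eta$. No real obstacle is anticipated: the entire argument is a straightforward transport of the $\Set$-level initiality of $\Sigma^*$ through the free/forgetful adjunction $\Psi\dashv U$, combined with the freeness of $\one$ on a single generator to handle the initial-state component; the inductive extension machinery of Section~\ref{sec:monoids} is not needed here because all verifications reduce to the basic case of elements $\eta w\in\VPS$.
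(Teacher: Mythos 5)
Your proof is correct, but it takes a different route from the paper. The paper argues abstractly: it introduces the set functor $L_\Sigma^0=1+\Id^\Sigma$, notes that $\Psi$ (being a left adjoint with $\Psi 1=\one$, hence preserving coproducts and satisfying $\Psi L_\Sigma^0\cong L_\Sigma\Psi$) carries the initial chain of $L_\Sigma^0$ to the initial chain of $L_\Sigma$ and preserves its colimit, concludes $\Psi(\mu L_\Sigma^0)=\mu L_\Sigma$, and then identifies the transported structure maps $\Psi(-\cdot a)=r_a$ and $\ol{\eta\epsilon}$ to get $\mu L_\Sigma=\tl{\PS}$. You instead verify initiality of $\tl{\PS}=(\PS,r_a,\ol{\eta\epsilon})$ directly: define $f\colon\Sigma^*\to\und{A}$ by recursion on word length, freely extend to $h\colon\PS\to A$, check the two homomorphism equations of Remark~\ref{rem:algmor} on generators (using $\und{r_a}(\eta w)=\eta(wa)$ and freeness of $\one=\Psi 1$), and get uniqueness by induction on words plus the universal property of $\eta$. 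Your argument is more elementary and self-contained: it needs only the universal properties of the free algebras $\Psi\Sigma^*$ and $\Psi 1$ and ordinary induction, and indeed avoids both the initial-chain construction and the inductive-extension machinery of Section~\ref{sec:monoids}. The paper's argument is shorter on the page and more reusable -- the same ``left adjoint transports the initial algebra'' pattern works for other signature functors without any element-level computation -- and it delivers the identification of $\mu L_\Sigma$ with a concretely described algebra in one stroke rather than via a hand-rolled recursion. Both proofs are valid; yours is a legitimate elementary alternative.
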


\begin{proof}
Consider the functor $ L_\Sigma^0=1+\Id^\Sigma$ on $\Set$. Since $ L_\Sigma^0$ and $ L_\Sigma$ are finitary, their initial algebras arise as colimits of the respective initial chains:
\[
\xymatrix@=10pt{
\emptyset \ar[r] & 1 \ar[r] & 1 + \coprod_\Sigma 1 \ar[r] & \dots
& \qquad &
0 \ar[r] & \one \ar[r] & \one + \coprod_\Sigma \one \ar[r] & \dots
}
\]
Now $\Psi$ preserves colimits (being a left adjoint) and $\one = \Psi 1$, so $\Psi$ maps the initial chain of $ L_\Sigma^0$ to the initial chain of $ L_\Sigma$ and preserves the colimit of the left chain, which implies that $\Psi(\mu  L_\Sigma^0)=\mu L_\Sigma$. Since $\mu  L_\Sigma^0=(\Sigma^*,-\cdot a,\epsilon)$ we have  $\mu  L_\Sigma=(\Psi\Sigma^*, \Psi(-\cdot a), \ol{\eta\epsilon})$. Moreover $\Psi(-\cdot a)=r_a$ by the definition of $r_a$, so 
\[\mu  L_\Sigma=(\Psi\Sigma^*, r_a, \ol{\eta\epsilon}) = \tl{\Psi\Sigma^*}. \tag*{\qed}\]
\end{proof}

\begin{prop}\label{prop:monalgiso}
$\SBim$ is a sublattice of $\Quo(\mu L_\Sigma)$, and $\SBimfp$ is a subsemilattice of $\Quofp{\mu L_\Sigma}$.
\end{prop}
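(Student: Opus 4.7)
The plan is to define a poset embedding $\Phi\colon \SBim \hookrightarrow \Quo(\mu L_\Sigma)$ via $(e\colon \PS \epito A) \mapsto (e\colon \tl{\PS} \epito \tl A)$, using Remark~\ref{rem:bimtoalg} for the induced $L_\Sigma$-algebra structure on $A$ and Proposition~\ref{prop:initalg} to identify $\tl{\PS} = \mu L_\Sigma$. Order-preservation is clear, since any $\DCat$-monoid morphism between $\Sigma$-generated monoids is automatically an $L_\Sigma$-algebra morphism under Remark~\ref{rem:bimtoalg}. For order-reflection and injectivity, suppose $\Phi(e) \leq \Phi(e')$ in $\Quo(\mu L_\Sigma)$, so $e = f\cdot e'$ for some $L_\Sigma$-algebra morphism $f$. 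Then $f$ is automatically a $\DCat$-monoid morphism: first, $\und{f}(i_{A'}) = \und{f}(\und{e'}(\eta\epsilon)) = \und{e}(\eta\epsilon) = i_A$; second, using surjectivity of $\und{e'}$ (Assumption~\ref{asm:geneilenberg}(a)) to write any $x,y\in A'$ as $\und{e'}(x'),\und{e'}(y')$, I compute $\und{f}(x\circ' y) = \und{f}(\und{e'}(x'\bullet y')) = \und{e}(x'\bullet y') = \und{e}(x')\circ \und{e}(y') = \und{f}(x)\circ \und{f}(y)$.

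For join preservation, Remark~\ref{rem:subdirprod} says the join of $e_i\colon\PS \epito A_i$ in $\SBim$ is their subdirect product, obtained by the (epi, strong mono)-factorization of $\langle e_i\rangle\colon \PS \to \prod_i A_i$ in $\Bim$, which lifts the $\DCat$-factorization by Lemma~\ref{lem:bimprops}(b). Dually to Proposition~\ref{prop:subquolat}, the join in $\Quo(\mu L_\Sigma)$ arises analogously from $\langle e_i\rangle\colon \mu L_\Sigma \to \prod_i \tl{A_i}$ in $\Alg L_\Sigma$, which again lifts the $\DCat$-factorization by Remark~\ref{rem:algfact}(a). Since products in both $\Bim$ and $\Alg L_\Sigma$ have the same underlying $\DCat$-object, and the two morphisms $\langle e_i\rangle$ coincide at the $\DCat$-level, the image object $S$ is the same, and its induced $\DCat$-monoid and $L_\Sigma$-algebra structures correspond under $\Phi$.

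The main obstacle is meet preservation. Let $P$ be the meet in $\Quo(\mu L_\Sigma)$ of two monoid quotients $e_1, e_2\colon \PS \epito A_1, A_2$, i.e.\ the pushout of $e_1, e_2$ in $\Alg L_\Sigma$, with quotient map $e\colon \mu L_\Sigma \epito P$; let $\sim$ denote its kernel pair, the smallest $L_\Sigma$-algebra congruence on $\PS$ containing $\sim_{e_1}$ and $\sim_{e_2}$. The claim is that $\sim$ is already a $\DCat$-monoid congruence. To prove this, I would consider the subrelation
\[
\sim' \;:=\; \bigl\{\,(x,y) \in \mathord{\sim} \;:\; (l_z(x), l_z(y)),\; (r_z(x), r_z(y)) \in \mathord{\sim} \text{ for all } z \in \PS\,\bigr\}.
\]
Since each $\sim_{e_i}$ is a monoid congruence ($e_i$ being a monoid morphism), both $\sim_{e_1}$ and $\sim_{e_2}$ sit inside $\sim'$. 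Moreover $\sim'$ is itself an $L_\Sigma$-algebra congruence: closure under $\DCat$-operations follows because $l_z, r_z$ are $\DCat$-morphisms, and closure under $\alpha_a$ uses the identities $l_z\cdot\alpha_a = \alpha_a\cdot l_z$ and $r_z\cdot\alpha_a = r_{\eta a\bullet z}$, both immediate from $\alpha_a(x) = x\bullet\eta a$ and associativity of $\bullet$. Minimality of $\sim$ then forces $\sim' = \mathord{\sim}$, so $\sim$ is stable under all left and right multiplications, hence a monoid congruence. Consequently, the multiplication of $\PS$ descends along $e$ to a $\DCat$-monoid structure on $P$ making $e$ a monoid morphism, so $P\in\SBim$ is a lower bound of $e_1, e_2$, and by order-reflection $\Phi$ it is the meet in $\SBim$.

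The finite subsemilattice claim is then immediate: finite monoids map under $\Phi$ to finite $L_\Sigma$-algebras, and the subdirect product of finitely many finite $\Sigma$-generated monoids embeds into a finite product and hence remains finite since $\DCat$ is locally finite.
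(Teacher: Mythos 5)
Your first two paragraphs match the paper's argument. The paper's own proof of Proposition~\ref{prop:monalgiso} consists exactly of your first step: it shows that $\tl{(-)}$ is injective, monotone and order-reflecting, with the same computation you give (lift elements along the surjective quotient, use that $e$, $e'$ are monoid morphisms, and get unit preservation from the $L_\Sigma$-algebra structure). Your second paragraph, on joins, is a correct elaboration of what the paper leaves implicit: the subdirect product of Remark~\ref{rem:subdirprod} and the join in $\Quo(\mu L_\Sigma)$ (dual of Proposition~\ref{prop:subquolat}) are both obtained by lifting the same factorization of the same $\DCat$-morphism $\langle e_i\rangle$, so they coincide; this is all that is used later, since only the join-semilattice structure of $\SBimfp$ enters the ideal-completion argument.

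Your third paragraph (meets) goes beyond what the paper proves, and as written it has a gap. You identify the meet with the pushout in $\Alg{L_\Sigma}$ and then argue entirely in terms of kernel-pair congruences: you form the congruence $\sim$ generated by $\sim_{e_1}$ and $\sim_{e_2}$ and show it is stable under $l_z$ and $r_z$. This is fine when $\DCat$ is an ordinary variety (e.g.\ $\Set$, $\JSL$, $\Vect{\Field_2}$), but the standing hypotheses allow $\DCat$ to be a variety of \emph{ordered} algebras, and $\DCat=\Poset$ is one of the paper's main instances. There, quotients are surjective morphisms that need not be order-reflecting, so a quotient of $\mu L_\Sigma$ is \emph{not} determined by its kernel pair (two distinct quotients can share the same kernel pair), and the pushout carries an order that your relation $\sim$ does not see. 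Consequently, showing that $\sim$ is a monoid congruence does not yet show that $x\bullet-$ and $-\bullet x$ descend to \emph{monotone} (i.e.\ $\DCat$-) endomorphisms of the pushout; one would have to redo the argument with admissible preorders (kernel preorders) in place of congruences, or argue categorically with the factorization system as in Lemma~\ref{lem:bimprops}(b). Note also that the paper itself never verifies meet-closure and never uses it: despite the word ``sublattice'', only the order-embedding and the compatibility of joins are established and needed, so you could either drop the meet argument or repair it along the lines just indicated.
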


\begin{proof}
Remark \ref{rem:bimtoalg} and Proposition \ref{prop:initalg} show that every (finite) $\Sigma$-generated $\DCat$-monoid $e: \Psi\Sigma^* \epito A$ induces a (finite) quotient algebra $e: \mu L_\Sigma \epito \widetilde A$ of $\mu L_\Sigma$ carried by the same morphism $e$. We show that the map $\tl{(-)}: \SBim \ra \Quo(\mu L_\Sigma)$ and its restriction $\tl{(-)}: \SBimfp \ra \Quofp{\mu L_\Sigma}$ are order-embeddings. Clearly $\tl{(-)}$ is injective and monotone. 
 To show that $\tl{(-)}$ it is order-reflecting, consider a commutative diagram as below, where $e$ and $e'$ are  $\DCat$-monoid morphisms and $f: \widetilde{A} \ra \widetilde{A'}$ is an $L_\Sigma$-algebra homomorphism.
\[
\xymatrix{
&\mu L_\Sigma=\PS \ar@{->>}[dl]_e \ar@{->>}[dr]^{e'} &\\
A \ar[rr]_f && A' 
}
\]
 We need to show that $\und{f}$ is a monoid morphism. Let $x',y'\in \und{A}$ and choose $x,y\in\und{\PS}$ with $e(x)=x'$ and $e(y)=y'$, using that $e$ is surjective by Assumption \ref{asm:geneilenberg}(a). Then
\[ \und{f}(x'\circ y') = \und{f}(\und{e}(x\bullet y)) = \und{e'}(x\bullet y) = \und{e'}x\circ' \und{e'}y = \und{fe}(x)\circ' \und{fe}(y) = \und{f}x'\circ' \und{f}y'\]
and moreover $\und{f}$ preserves the unit because $f$ is an $ L_\Sigma$-algebra homomorphism.  \qed
\end{proof}
Hence $\SBim$ is isomorphic to a sublattice of  $\Quo(\mu L_\Sigma)$. The following proposition characterizes its elements in terms of $L_\Sigma$-algebra homomorphisms.

\begin{proposition}\label{prop:bimtoalg}
A quotient algebra $e: \mu L_\Sigma \epito (A,\alpha_a,i)$ of $\mu L_\Sigma$ is induced by a $\Sigma$-generated $\DCat$-monoid if and only if there exists an $L_\Sigma$-algebra homomorphism from $A$ to $A_w$ for each $w\in\Sigma^*$.
\end{proposition}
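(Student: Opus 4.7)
The plan is to prove both directions; the forward implication is immediate while essentially all the work lies on the converse. For the forward direction, assume $e\colon \PS \epito (A,\circ,i)$ is a $\Sigma$-generated $\DCat$-monoid and define $h_w\colon A\to A$ as left multiplication by $\und{e}(\eta w)$, i.e.\ $h_w(y) := \und{e}(\eta w)\circ y$. Since $\circ$ is a bimorphism, $h_w$ is a $\DCat$-morphism. Associativity of $\circ$ gives $h_w\cdot \alpha_a = \alpha_a\cdot h_w$ for the induced transitions $\alpha_a = -\circ\und{e}(\eta a)$, and unitality gives $h_w(i) = \und{e}(\eta w) = \alpha_w(i)$, so $h_w\colon A\to A_w$ is an $L_\Sigma$-algebra homomorphism.

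For the converse, suppose that for every $w\in\Sigma^*$ some $L_\Sigma$-algebra homomorphism $h_w\colon A \to A_w$ exists. Extend the family $(h_w)_{w\in\Sigma^*}$ to a $\DCat$-morphism $\ol h\colon \PS \to [A,A]$ via the Inductive Extension Principle (Definition \ref{def:ext}), writing $h_x := \ol h(x)$ for $x\in\VPS$. Applying the Inductive Proof Principle (Lemma \ref{lem:indproof}) to the identity $h_w \cdot \alpha_a = \alpha_a \cdot h_w$ propagates commutation with each $\alpha_a$ to all $h_x$, and a routine check on the generators $\eta(\Sigma^*)$ yields $h_x(i) = \und{e}(x)$ for every $x \in \VPS$, equivalently $\ev_i\cdot \ol h = e$.

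The crux is to define a multiplication on $A$. Combining the commutation property with $h_x(i) = \und{e}(x)$ gives
\[ h_x(\und{e}(\eta z)) = \alpha_z(h_x(i)) = \alpha_z(\und{e}(x)) \qquad \text{for every } z\in\Sigma^*, \]
which depends on $x$ only through $\und{e}(x)$. Since $\{\und{e}(\eta z) : z\in\Sigma^*\}$ $\DCat$-generates $A$ and each $h_x$ is a $\DCat$-morphism, $h_x$ itself depends only on $\und{e}(x)$, so $\ol h$ factors as $\phi\cdot e$ for a unique $\DCat$-morphism $\phi\colon A\to [A,A]$. By monoidal closedness of $\DCat$ (Assumption \ref{asm:geneilenberg}(b)), $\phi$ corresponds to a bimorphism $\circ\colon A\times A\to A$ satisfying $\und{e}(x)\circ y = h_x(y)$. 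Establishing this factorization of $\ol h$ through $e$ is the main obstacle, relying crucially on both the commutation identity and the generation property.

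The remaining verifications are routine. The element $i$ is a two-sided unit because $h_\epsilon = \id_A$ by uniqueness of $L_\Sigma$-algebra homomorphisms $A \to A_\epsilon = A$. Associativity and the claim that $e$ becomes a $\DCat$-monoid morphism both follow from the identity $h_x\cdot h_y = h_{x\bullet y}$ for all $x,y\in\VPS$, which I would prove by two applications of the Inductive Proof Principle starting from the base case $h_v\cdot h_w = h_{vw}$ for $v,w\in\Sigma^*$ (itself an instance of the uniqueness of $L_\Sigma$-algebra homomorphisms $A \to A_{vw}$). Finally, the $L_\Sigma$-algebra derived from $(A,\circ,i)$ via Remark \ref{rem:bimtoalg} recovers the original: writing $y = \und{e}(z)$, one has $y\circ \und{e}(\eta a) = \und{e}(z\bullet\eta a) = \und{e}(r_a(z)) = \alpha_a(y)$ because $e$ is an $L_\Sigma$-algebra homomorphism.
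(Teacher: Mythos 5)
Most of your argument tracks the paper's proof of this proposition: the forward direction (left multiplication by $\und{e}(\eta w)$) is exactly the paper's, and in the converse the inductive extension of $(h_w)$, the commutation with the $\alpha_a$, and the identity $h_x(i)=\und{e}(x)$ all match. Your uniqueness device is a genuine simplification and is correct: since $A$ is a quotient of the initial algebra, any two $L_\Sigma$-algebra homomorphisms $A\to A_w$ become equal after precomposing with $e$ (initiality of $\mu L_\Sigma$) and hence are equal because $e$ is epic; this cleanly gives $h_\epsilon=\id_A$ and the base case $h_v\cdot h_w=h_{vw}$, which the paper instead handles by explicit computation.

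There is, however, a gap at exactly the step you call the crux. From ``$h_x$ depends only on $\und{e}(x)$'' you conclude that $\ol h$ factors as $\phi\cdot e$ with $\phi\colon A\to[A,A]$ a \emph{$\DCat$-morphism}. That inference requires the surjection $e$ to be an effective (regular) epimorphism, i.e.\ that a morphism constant on the fibres of $\und{e}$ factors through $e$ inside $\DCat$. This is true in varieties of algebras, but the proposition is stated under Assumptions \ref{asm:sec3}, where $\DCat$ may be a variety of \emph{ordered} algebras (and the $\DL$/$\Poset$ instance is one of the main applications). In ordered varieties surjective homomorphisms are epic but not regular in general: in $\Poset$ the identity-carried map from the discrete two-element poset onto the two-chain is a surjective epimorphism through which the identity of the discrete poset factors set-theoretically but not monotonically. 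So in the ordered case the existence of $\phi$ as a $\DCat$-morphism does not follow from what you proved; it needs an extra argument, e.g.\ that every element of $A$ is a $\DCat$-term in the generators $\und{e}(\eta z)=\und{\alpha_z}(i)$ together with $h_x(\und{\alpha_z}(i))=\und{\alpha_z}(\und{e}(x))$, which makes $\phi$ monotone because the $\alpha_z$ and all term operations are. The paper sidesteps the issue entirely: it also inductively extends the family $(\alpha_w)$, proves $\und{e}(x\bullet y)=\und{\alpha_y}(\und{e}(x))$, and proves $\und{e}(x\bullet y)=\und{\beta_x}(\und{e}(y))$ via the homomorphism $l_x\colon \mu L_\Sigma\to(\mu L_\Sigma)_x$ and initiality; then $x'\circ y':=\und{e}(x\bullet y)$ is well defined from both sides and its partial maps are the already-given $\DCat$-morphisms $\alpha_y$ and $\beta_x$, so $\circ$ is a bimorphism without ever requiring the transpose $A\to[A,A]$ to be a morphism. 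Either patch your factorization step along these lines or switch to the paper's two-sided well-definedness argument; as written, the proof is only complete for unordered $\DCat$.
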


\begin{proof}
($\Ra$) Suppose that  $e: \mu L_\Sigma \epito (A,\alpha_a, i)$ is induced by some $\Sigma$-generated $\DCat$-monoid $e: \PS \epito (A,\circ,i')$, that is, $\alpha_a=-\circ \und{e}(\eta a)$ and $i=\overline{i'}$. For each $w\in\Sigma^*$ consider the $\DCat$-morphism 
\[\beta_w = \und{e}(\eta w)\circ - : A\ra A.\] We claim that $\beta_w$ is an $ L_\Sigma$-algebra homomorphism
\[ \beta_w: (A,\alpha_a, i) \ra  (A,\alpha_a,\alpha_w \cdot i),\]
that is, a homomorphism $\beta_w: A\ra A_w$. In fact, we have for each $x\in \und{A}$ and $a\in\Sigma$:
\begin{align*} 
\und{\beta_w}(\und{\alpha_a}(x)) &= \und{e}(\eta w)\circ(x\circ \und{e}(\eta a)) &\text{def. } \beta_w,~\alpha_a\\
&= (\und{e}(\eta w)\circ x)\circ \und{e}(\eta a) & \text{associativity}\\
&= \und{\alpha_a}(\und{\beta_w}(x)) & \text{def. } \beta_w,~\alpha_a
\end{align*}
so $\beta_w\cdot \alpha_a = \alpha_a\o\beta_w$. To prove preservation of initial states (i.e., $\beta_w\cdot i=\alpha_w\cdot i$) it suffices to prove $\und{\beta_w\cdot  i}\cdot \eta_1=\und{\alpha_w\cdot  i}\o\eta_1$ where $\eta_1: 1\ra \und{\Psi 1} = \und{\one}$ is the unit. We compute:
\begin{align*}
\und{\beta_w\cdot i}\cdot \eta_1(*)&= \und{\beta_w}(i') & i = \ol{i'}\\
&= \und{e}(\eta w)\circ i' & \text{def. } \beta_w\\
&= \und{e}(\eta w) & i' \text{ neutral element of } A\\
&= \und{e}(\eta \epsilon \bullet \eta w) & \eta \epsilon \text{ neutral element of } \PS\\
&= \und{e}(\und{r_w}(\eta \epsilon)) & \text{def. } \bullet\\
&= \und{\alpha_w}(\und{e}(\eta \epsilon))& e \text{ } L_\Sigma\text{-algebra homomorphism}\\
&= \und{\alpha_w}(i') & e \text{ monoid morphism}\\
&= \und{\alpha_w}(\und{ i}(\eta_1(*))) &  i = \overline{i'}\\
&= \und{\alpha_w\cdot  i}\o\eta_1(*)
\end{align*}
Hence $\beta_w: A\ra  A_w$ is an $L_\Sigma$-algebra homomorphism as claimed.

($\La$) Suppose that an $L_\Sigma$-algebra homomorphism $\beta_w: A\ra A_w$ is given for each $w\in\Sigma^*$, and let $(\beta_x: A\ra A)_{x\in\VPS}$ and  $(\alpha_x: A\ra A)_{x\in\VPS}$ be the inductive extensions of the families $(\beta_w: A\ra A)_{w\in\Sigma^*}$ and  $(\alpha_w: A\ra A)_{w\in\Sigma}$ of $\DCat$-morphisms.
\begin{enumerate}[(1)]
\item For all $x\in\VPS$, let $A_x:=(A,\alpha_a, \alpha_x\cdot i)$. We claim that $\beta_x: A\ra A_x$ is an $ L_\Sigma$-algebra homomorphism, which means that the following squares commute:
\[
\xymatrix{
A \ar[r]^{\beta_x} \ar[d]_{\alpha_a} & A \ar[d]^{\alpha_a} & \one \ar[r]^i \ar[d]_{i} & A \ar[d]^{\beta_x}\\
A \ar[r]_{\beta_x} & A & A \ar[r]_{\alpha_x} & A
}
\]
Indeed, they clearly commute if $x=\eta(w)$ for some $w\in\Sigma^*$ because $\beta_{\eta w}=\beta_w: A\ra A_w$ is an $ L_\Sigma$-algebra homomorphism, and therefore they commute for all $x$ by the induction principle (Lemma \ref{lem:indproof}).
\item  We prove the equation
\[
\und{e}(x\bullet y) = \und{\alpha_y}(\und{e}(x)) \quad \text{for all } x,y\in\VPS,
\]
where $\bullet$ is the multiplication of the free $\DCat$-monoid $\PS = \mu L_\Sigma$. Observe first that the following diagram commutes for all $y\in\PS$:
\[
\xymatrix{
\PS \ar[r]^{r_y} \ar[d]_{e} & \PS \ar[d]^{e} \\
A \ar[r]_{\alpha_y} & A
}
\]
In fact, it commutes if $y=\eta w$ for some $w\in\Sigma^*$ because $e$ is an $ L_\Sigma$-algebra homomorphism, so it commutes for all $y$ by induction. Therefore
\[ \und{e}(x\bullet y) = \und{e}(\und{r_y}(x)) = \und{\alpha_y}(\und{e}(x)).\]
\item We prove the equation
\[ \und{e}(x\bullet y) = \und{\beta_x}(\und{e}(y)) \quad \text{for all }x,y\in\VPS.\]
First note that $l_x: \mu  L_\Sigma \ra (\mu L_\Sigma)_x$ is an $ L_\Sigma$-algebra homomorphism: we have $l_x \cdot r_a = r_a \cdot l_x$ by Lemma~\ref{lem:lrprops}(a) and $l_x \cdot \ol{\eta\epsilon} = r_x \cdot \ol{\eta\epsilon}$ because
\[
\und{l_x \cdot \ol{\eta\eps}} \cdot \eta_1(*) = \und {l_x}(\eta\eps) = x \bullet \eta \eps = \eta \eps \bullet x = 
\und{r_x}(\eta\eps) = \und{rx \cdot \ol{\eta\eps}} \cdot \eta_1(*). 
\]

 Since also $\beta_x: A\ra A_x$ is an $ L_\Sigma$-algebra homomorphism by (1), the following diagram commutes by initiality of $\mlhs$:
\[
\xymatrix{
\mu L_\Sigma \ar[r]^-{l_x} \ar[d]_e & (\mu L_\Sigma)_x \ar[d]^e \\
A \ar[r]_{\beta_x} & A_x
}
\]
Therefore
\[ \und{e}(x\bullet y) = \und{e}(\und{l_x}(y)) = \und{\beta_x}(\und{e}(y)) .\]
\item We define the desired monoid structure on $\und{A}$. The unit is $i\cdot \eta_1(\ast)\in A$, and the multiplication is given as follows: for all $x',y'\in \und{A}$, choose $x,y\in \VPS$ with $x'=\und{e}(x)$ and $y'=\und{e}(y)$ (using that $\und{e}$ is surjective by Assumption \ref{asm:geneilenberg}(a)) and put
\[ x'\circ y' := \und{e}(x\bullet y).\]
We need to prove that $x'\circ y'$ is well-defined, i.e., independent of the choice of $x$ and $y$. In fact, by (2) above, $x'\circ y'$ is independent of the choice of $x$ and moreover $- \circ y' = \alpha_y$. Analogously (3) states that $x'\circ y'$ is independent of the choice of $y$ and that $x' \circ - = \beta_x$. It follows that $\circ: A\times A \ra A$ is a well-defined $\DCat$-bimorphism, and by definition we have $e(x\bullet y) = e(x)\circ e(y)$ for all $x,y\in \PS$. The associative and unit laws for $\circ$ hold in $A$ because they hold in in $\PS$ and $\und{e}$ is surjective, concluding the proof that $(A,\circ, i\cdot \eta_1(\ast))$ is a $\DCat$-monoid. Moreover, $\und{e}$ is clearly a monoid morphism \[e: \PS\epito (A,\circ,i\cdot \eta_1(\ast)),\] and the quotient algebra of $\mu L_\Sigma$ it induces is precisely $(A,\alpha_a,i)$. For the latter we need to show $- \circ e(\eta a) = \alpha_a$ for all $a\in \Sigma$. Given $x'\in A$, we choose $x\in \PS$ with $e(x)=x'$ and compute 
\[ x' \circ e(\eta a) = e(x\bullet \eta a) = e(r_a(x)) = \alpha_a(e(x)) = \alpha_a(x'), \]
using the definitions of $\circ$ and $\bullet$ and the fact the $e$ is an $L_\Sigma$-algebra homomorphism.\qed
\end{enumerate}
\end{proof}

\subsection{The Local Eilenberg Theorem} \label{sec:proofloceil}

We can now put our (co-)algebraic characterizations of right derivatives and monoids together to prove the general local Eilenberg theorem. The key result is

\begin{prop}\label{prop:loceilfin}
The semilattices $\FPSubrqc{\rho T_\Sigma}$ and $\SBimfp$ are isomorphic.
\end{prop}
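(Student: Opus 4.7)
The plan is to assemble the isomorphism from the ingredients already established, by showing that the bijection of Proposition~\ref{prop:subquo} restricts correctly. Recall that Proposition~\ref{prop:subquo} applied to $T_\Sigma$ and $L_\Sigma$ gives an isomorphism of semilattices
\[
\FPSub{\rho T_\Sigma} \;\cong\; \Quofp{\mu L_\Sigma},
\qquad
(m: (Q,\gamma)\monoto \rho T_\Sigma) \;\mapsto\; (e: \mu L_\Sigma \epito (\widehat Q,\widehat\gamma)),
\]
where the dual algebra $(\widehat Q,\widehat \gamma)$ is obtained by applying the predual equivalence $\widehat{(-)}\colon \Cat_f \xra{\cong} \DCat_f^{op}$. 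The task is to identify the image of the subposet $\FPSubrqc{\rho T_\Sigma}$ under this isomorphism as precisely $\SBimfp$ (viewed, via Proposition~\ref{prop:monalgiso}, as a subsemilattice of $\Quofp{\mu L_\Sigma}$).

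The first step, and the only real computation, is to verify that the duality commutes with the $(-)_w$ constructions of Notations~\ref{not:rqc} and~\ref{not:mon}: namely, for a finite $T_\Sigma$-coalgebra $Q=(Q,\gamma_a,f)$ and any $w\in\Sigma^*$, the dual $L_\Sigma$-algebra of $Q_w=(Q,\gamma_a,f\cdot\gamma_w)$ is
\[
\widehat{Q_w} \;=\; (\widehat Q)_w \;=\; (\widehat Q,\, \widehat{\gamma_a},\, \widehat{\gamma_w}\cdot \widehat f).
\]
This follows directly from contravariance of $\widehat{(-)}$ together with the observation that the dual of the $T_\Sigma$-coalgebra structure decomposes the transition morphisms $\gamma_a$ into the $L_\Sigma$-algebra transition morphisms, while $f$ becomes the initial-state morphism $\widehat f\colon \one\to\widehat Q$. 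Consequently, $\Cat$-morphisms $h\colon Q_w\to Q$ that are $T_\Sigma$-coalgebra homomorphisms correspond bijectively, under $\widehat{(-)}$, to $\DCat$-morphisms $\widehat h\colon \widehat Q\to(\widehat Q)_w$ that are $L_\Sigma$-algebra homomorphisms.

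With this in hand, the chain of equivalences is immediate. For a finite subcoalgebra $m\colon Q\monoto\rho T_\Sigma$ with dual quotient $e\colon \mu L_\Sigma\epito \widehat Q$:
\begin{align*}
Q \text{ is closed under right derivatives}
&\iff \exists \text{ coalg.\ hom.\ } Q_w\to Q \text{ for all } w\in\Sigma^*
& \text{(Prop.~\ref{prop:rqc2})} \\
&\iff \exists \text{ alg.\ hom.\ } \widehat Q\to (\widehat Q)_w \text{ for all } w\in\Sigma^*
& \text{(duality)} \\
&\iff e \text{ is induced by a $\Sigma$-generated $\DCat$-monoid}
& \text{(Prop.~\ref{prop:bimtoalg}).}
\end{align*}
By Proposition~\ref{prop:monalgiso} the set of such quotients of $\mu L_\Sigma$ is (via $\widetilde{(-)}$) exactly $\SBimfp$ as a join-subsemilattice of $\Quofp{\mu L_\Sigma}$. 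Composing yields the desired isomorphism of semilattices, carrying $m\colon Q\monoto\rho T_\Sigma$ to the $\Sigma$-generated $\DCat$-monoid structure on $\widehat Q$ whose induced $L_\Sigma$-algebra is $(\widehat Q,\widehat\gamma)$.

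The only delicate point is the bookkeeping in the first step: one must check that under the equivalence $\Cat_f\cong\DCat_f^{op}$ the twisted coalgebra $Q_w$ really is sent to $(\widehat Q)_w$, both for the transition part (trivial, since $\gamma_a$ is unchanged) and for the accept/initial-state part (where contravariance turns the composite $f\cdot\gamma_w$ into $\widehat{\gamma_w}\cdot\widehat f$). Once this is established, the remaining steps are essentially bookkeeping: everything reduces to previously proved propositions, and the map is visibly a bijection that preserves and reflects the order and finite joins, since both sides inherit their lattice structure from taking images of morphisms out of $\coprod_i Q_i$ respectively $\mu L_\Sigma$.
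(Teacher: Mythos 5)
Your proof follows essentially the same route as the paper: restrict the isomorphism of Proposition~\ref{prop:subquo} by combining Proposition~\ref{prop:rqc2} on the coalgebra side with Propositions~\ref{prop:monalgiso} and~\ref{prop:bimtoalg} on the algebra side, after checking how duality interacts with the $(-)_w$ constructions. One detail in your ``delicate point'' is off, though: under the paper's conventions ($\gamma_w=\gamma_{a_n}\cdots\gamma_{a_1}$ and $\alpha_w=\alpha_{a_n}\cdots\alpha_{a_1}$ for $w=a_1\cdots a_n$), the contravariant equivalence reverses composition, so $\widehat{\gamma_w}=\widehat{\gamma_{a_1}}\cdots\widehat{\gamma_{a_n}}$, which is $\alpha_{w^r}$ for the reversed word $w^r$, not $\alpha_w$. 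Hence $\widehat{Q_w}=(\widehat Q)_{w^r}$ rather than $(\widehat Q)_w$ as you claim, and coalgebra homomorphisms $Q_w\to Q$ dualize to algebra homomorphisms $\widehat Q\to(\widehat Q)_{w^r}$. This slip is harmless here, because both Proposition~\ref{prop:rqc2} and Proposition~\ref{prop:bimtoalg} quantify over \emph{all} $w\in\Sigma^*$ and $w\mapsto w^r$ is a bijection on $\Sigma^*$, so your chain of equivalences survives verbatim once the reversal is inserted --- which is exactly how the paper handles it.
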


\begin{proof}
We show that the isomorphism $\FPSub{\rho T_\Sigma}\cong \Quofp{\rho T_\Sigma}$ of Proposition \ref{prop:subquo} restricts to an isomorphism  $\FPSubrqc{\rho T_\Sigma} \cong \SBimfp$. Indeed, a finite subcoalgebra $(Q,\gamma_a,f)\monoto \rho T_\Sigma$ is closed under right derivatives iff a $T_\Sigma$ coalgebra homomorphism $Q_w \ra Q$ exists for each $w\in \Sigma^*$ (see Proposition \ref{prop:rqc2}). By  Remark \ref{rem:algcoalgdual} the $L_\Sigma$-algebra dual to $Q=(Q,\gamma_a,f)$ is $\widehat{Q}=(\widehat{Q},\widehat{\gamma_a},\widehat{f})$, and the one dual to $Q_w = (Q,\gamma_a,f\cdot \gamma_w)$  is \[\widehat{Q_w}= (\widehat{Q}, \widehat{\gamma_a}, \widehat{\gamma_{w}}\cdot \widehat{f}) = \widehat{Q}_{w^r}\]where $w^r$ is the reversed word of $w$. Indeed, for $w=a_1\ldots a_n$ we have $\widehat{\gamma_w} = \widehat{\gamma_{a_1}} \cdot \cdots \cdot \widehat{\gamma_{a_n}}$. Hence by duality  an $L_\Sigma$-algebra homomorphism $\widehat{Q} \ra \widehat{Q}_{w^r}$ exists for each $w\in\Sigma^*$, which by Proposition \ref{prop:monalgiso} and \ref{prop:bimtoalg}  means precisely that $\widehat{Q} \in \SBimfp$.\qed
\end{proof}

\begin{definition}
A \emph{local variety of regular languages in $\Cat$} is a subcoalgebra of $\rho T_\Sigma$ closed under right derivatives. 
\end{definition}

\begin{expl}
\begin{enumerate}[(a)]
\item A \emph{local variety of regular languages} is a set of regular languages over $\Sigma$ closed under the boolean operations and derivatives and containing $\emptyset$. This is the case $\Cat=\BA$.
\item A \emph{local lattice variety of regular languages} is a set of regular languages over $\Sigma$ closed under union, intersection and derivatives and containing $\emptyset$ and $\Sigma^*$. This is the case $\Cat=\DL$.
\item A \emph{local semilattice variety of regular languages} is a set of regular languages over $\Sigma$ closed under union and derivatives and containing $\emptyset$. This is the case $\Cat=\JSL$.
\item A \emph{local linear variety of regular languages} is a set of regular languages over $\Sigma$ closed under symmetric difference and derivatives and containing $\emptyset$. This is the case $\Cat=\Vect{\Field_2}$.
\end{enumerate}
\end{expl}

\begin{definition}
A \emph{local pseudovariety of $\DCat$-monoids} is a set of finite $\Sigma$-generated $\DCat$-monoids closed under subdirect products and quotients, i.e., an ideal in the join-semilattice $\SBimfp$.
\end{definition}

This leads to the main result of this paper. For convenience, we recall all assumptions used so far in the statement of the theorem.

\begin{theorem}[General Local Eilenberg Theorem]\label{thm:loceil}
 Let $\Cat$ and $\DCat$ be predual locally finite varieties of algebras, where the algebras in $\DCat$ are possibly ordered. Suppose further that  $\DCat$ is monoidal closed w.r.t. tensor product, epimorphisms in $\DCat$ are surjective, and the free algebra in $\DCat$ on one generator is dual to a two-element algebra in $\Cat$. Then there is a lattice isomorphism
\[\text{local varieties of regular languages in $\Cat$} ~\cong \text{
 local pseudovarieties of $\DCat$-monoids.}\]

\end{theorem}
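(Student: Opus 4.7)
The plan is to assemble the theorem directly from the three principal results already established in the preceding subsections, viewing both sides of the claimed isomorphism as ideal completions of a common finite semilattice. More precisely, by definition a local variety of regular languages in $\Cat$ is exactly an element of $\Subrqc{\rho T_\Sigma}$, and a local pseudovariety of $\DCat$-monoids is by definition an ideal of the join-semilattice $\SBimfp$, i.e., an element of $\Ideal{\SBimfp}$. So I would first record the chain
\[
\Subrqc{\rho T_\Sigma} \;\cong\; \Ideal{\FPSubrqc{\rho T_\Sigma}} \;\cong\; \Ideal{\SBimfp},
\]
where the first isomorphism is Proposition \ref{prop:rqcidcomp} and the second is obtained by functorially applying the ideal completion to the semilattice isomorphism $\FPSubrqc{\rho T_\Sigma} \cong \SBimfp$ of Proposition \ref{prop:loceilfin}.

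The only thing left to verify is that this composite bijection is indeed a lattice isomorphism, not merely an order isomorphism. Since a bijection between complete lattices that preserves and reflects the order automatically preserves arbitrary joins and meets, I would just check monotonicity in both directions. Monotonicity of $\Subrqc{\rho T_\Sigma} \cong \Ideal{\FPSubrqc{\rho T_\Sigma}}$ is immediate from the description in Remark \ref{rem:idcomp}: a subcoalgebra corresponds to the ideal of its finite subcoalgebras closed under right derivatives (using Corollary \ref{cor:finsub_in_finrqc} for cofinality). Monotonicity of $\Ideal{\FPSubrqc{\rho T_\Sigma}} \cong \Ideal{\SBimfp}$ follows because the underlying semilattice isomorphism of Proposition \ref{prop:loceilfin} is order-preserving: a finite subcoalgebra $Q \monoto Q'$ of $\rho T_\Sigma$ corresponds to a finite $L_\Sigma$-algebra quotient $\mu L_\Sigma \epito \widehat{Q'} \epito \widehat{Q}$, and via Proposition \ref{prop:monalgiso} this is exactly the quotient ordering in $\SBimfp$.

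I do not expect any real obstacle, since the substantive content of the theorem has been packaged into the earlier lemmas. The main technical subtlety is really hidden in Proposition \ref{prop:loceilfin}, where the closure of a subcoalgebra $Q$ under right derivatives is dualized into the existence of $L_\Sigma$-algebra homomorphisms $\widehat Q \to \widehat Q_{w^{r}}$ for every $w \in \Sigma^*$, and then into the existence of a $\DCat$-monoid structure on $\widehat Q$ via Proposition \ref{prop:bimtoalg}; the word-reversal appearing there is harmless because $\{w^r : w \in \Sigma^*\} = \Sigma^*$. Once these identifications are in place, the theorem is simply the statement that two descriptions of the same ideal completion agree, and I would conclude with a brief remark that instantiating $(\Cat,\DCat)$ to the four predual pairs of Example \ref{expl:dualpairs} yields the four local Eilenberg correspondences mentioned in the introduction.
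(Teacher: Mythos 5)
Your proposal is correct and follows essentially the same route as the paper's own proof: both assemble the theorem from Proposition \ref{prop:loceilfin} (the finite-level isomorphism $\FPSubrqc{\rho T_\Sigma}\cong\SBimfp$) and Proposition \ref{prop:rqcidcomp} (identifying $\Subrqc{\rho T_\Sigma}$ as the ideal completion of $\FPSubrqc{\rho T_\Sigma}$), then unfold the definitions of local variety and local pseudovariety. Your extra checks on monotonicity and the word-reversal point are fine but add nothing beyond what the paper's argument already implicitly contains.
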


\begin{proof}
By Proposition \ref{prop:loceilfin} we have a semilattice isomorphism  
\[\FPSubrqc{\rho T_\Sigma} \cong \SBimfp\] 
Taking ideal completions on both sides yields a complete lattice isomorphism
\[\Ideal{\FPSubrqc{\rho T_\Sigma}} \cong \Ideal{\SBimfp}\] 
The ideals of the join-semilattice $\SBimfp$ are by definition precisely the pseudovarieties of $\DCat$-monoids. Moreover, Proposition \ref{prop:rqcidcomp} yields
\[\Ideal{\FPSubrqc{\rho T_\Sigma}} \cong \Subrqc{\rho T_\Sigma}, \]
 and the elements of $\Subrqc{\rho T_\Sigma}$ are by definition precisely the local varieties of regular languages in $\Cat$.\qed
\end{proof}

\begin{cor} By instantiating Theorem \ref{thm:loceil} to the categories of Example \ref{expl:dualpairs} we obtain the following lattice isomorphisms:
\end{cor}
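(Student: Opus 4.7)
The plan is to show that each of the four pairs $(\Cat,\DCat)$ listed in Example \ref{expl:dualpairs} satisfies the hypotheses of Theorem \ref{thm:loceil}, and then to read off the statement of the theorem in each concrete case. Local finiteness of the varieties $\BA$, $\DL$, $\JSL$, $\Vect{\Field_2}$, $\Set$ and $\Poset$ is standard, and preduality was noted in Example \ref{expl:dualpairs}. The remaining conditions -- monoidal closure of $\DCat$ under the tensor product, surjectivity of epimorphisms, and the dualities $\widehat{\two} \cong \Psi 1$ -- are classical and can be checked one by one: in $\Set$ and $\Poset$ the tensor product coincides with the cartesian product and $\Psi 1$ is the one-point (ordered) set, dual to the two-element boolean algebra/lattice whose unique atom (resp.\ unique join-irreducible) is $1$; in $\JSL$ and $\Vect{\Field_2}$ self-duality on finite objects sends $\Psi 1$ (the two-element semilattice, resp.\ the one-dimensional space $\Field_2$) to itself, which also carries the structure of the two-element object of $\Cat$. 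Surjectivity of epimorphisms is immediate in all four cases.

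Next I would identify $\DCat$-monoids in each case, as was already done in the main text: monoids in $\Set$, ordered monoids in $\Poset$, idempotent semirings in $\JSL$ (multiplication distributes over binary join since it is bilinear), and associative $\Field_2$-algebras (with unit) in $\Vect{\Field_2}$. Likewise I would unpack the meaning of a subcoalgebra of $\rho T_\Sigma$ closed under right derivatives in $\Cat$: by Corollary \ref{cor:ratlift} this is a subset of $\Reg_\Sigma$ that carries the $\Cat$-algebraic structure inherited from $\Reg_\Sigma$ and is closed under both left and right derivatives. Specializing the required algebraic operations yields the four notions of local variety listed just before Theorem \ref{thm:loceil}: closure under boolean operations ($\Cat=\BA$), under finite lattice operations ($\Cat=\DL$), under finite joins ($\Cat=\JSL$), and under symmetric difference ($\Cat=\Vect{\Field_2}$).

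Finally, applying Theorem \ref{thm:loceil} to each of the four predual pairs gives the desired lattice isomorphisms; presenting them as a table is the cleanest format:
\begin{center}
\begin{tabular}{|ll|}
\hline
\rule[11pt]{0pt}{0pt}local varieties of regular languages in $\Cat$ & local pseudovarieties of $\DCat$-monoids\\
\hline
local varieties of regular languages & local pseudovarieties of monoids\\
local lattice varieties of regular languages & local pseudovarieties of ordered monoids\\
local semilattice varieties of regular languages\quad & local pseudovarieties of idempotent semirings\\
local linear varieties of regular languages & local pseudovarieties of $\Field_2$-algebras\\
\hline
\end{tabular}
\end{center}

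The only step that could pose any genuine difficulty is the verification of the duality $\widehat{\two}\cong\Psi 1$ in the last two (self-dual) cases, because one must check that under the self-duality functor the chosen two-element algebra is indeed sent to the free algebra on one generator; this is a short computation using the explicit descriptions of the equivalences in Example \ref{expl:dualpairs}(c)--(d). Everything else is bookkeeping, so the corollary follows at once from Theorem \ref{thm:loceil}.\qed
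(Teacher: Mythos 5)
Your proposal is correct and follows essentially the same route as the paper: the paper gives no separate argument for the corollary beyond instantiating Theorem \ref{thm:loceil}, since the needed verifications (preduality in Example \ref{expl:dualpairs}, the Assumptions \ref{asm:geneilenberg} for $\Set$, $\Poset$, $\JSL$, $\Vect{\Field_2}$, the identification of $\DCat$-monoids, and the unpacking of local varieties in $\Cat$) are exactly the ones recorded in the surrounding examples, which you reassemble here. Your resulting table agrees with the paper's.
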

\begin{center}
\begin{tabular}{|lllll|}
\hline\rule[11pt]{0pt}{0pt}
$\Cat$ & $\DCat$ & local varieties of regular languages &$\cong$& local pseudovarieties of \dots \\
\hline
$\BA$&$\Set$ &local varieties &$\cong$&  monoids\\
$\DL$&$\Poset$ &local lattice varieties &$\cong$&  ordered monoids\\
$\JSL$&$\JSL$ &local semilattice varieties  &$\cong$& idempotent semirings\\
$\Vect{\Field_2}$&$\Vect{\Field_2}$ & local linear varieties &$\cong$& $\Field_2$-algebras\\
\hline
\end{tabular}
\end{center}
The first two isomorphisms were proved in \cite{ggp08, ggp10}, the last two are new local Eilenberg correspondences.

\section{Profinite Monoids}
\label{sec:dual}
In \cite{ggp08,ggp10} Gehrke, Grigorieff and Pin demonstrated that the boolean algebra $\Reg_\Sigma$, equipped with left and right derivatives, is dual to the free $\Sigma$-generated profinite monoid. In this section we generalize this result to our categorical setting (Assumptions \ref{asm:sec3}). For this purpose we will introduce below a category $\DCath$ that is dually equivalent (rather than just predual) to $\Cat$, and arises as a ``profinite'' completion of $\DCat_f$.

\begin{defn} \begin{enumerate}[(a)]
\item Dually to Definition \ref{def:lfp}, an object $X$ of a category $\BCat$ is called \emph{cofinitely presentable} if the hom-functor $\BCat(-,X):\BCat\ra\Set^{op}$ is cofinitary, i.e., preserves cofiltered limits. The full subcategory of all cofinitely presentable objects is denoted by $\BCat_{cfp}$. The category $\BCat$ is \emph{locally cofinitely presentable} if $\BCat_{cfp}$ is essentially small, $\BCat$ is complete and every object arises as a cofiltered limit of cofinitely presentable objects.
\item The dual of $\mathsf{Ind}$ is denoted by $\mathsf{Pro}$:  the \emph{free completion under cofiltered limits} of a small category $\ACat$ is a full embedding $\ACat\hookrightarrow \Pro{\ACat}$ such that $\Pro{\ACat}$ has cofiltered limits and every functor $F: \ACat \ra \BCat$ into a category $\BCat$ with cofiltered colimits has an essentially unique cofinitary extension $\overline{F}: \Pro{\ACat}\ra \BCat$:
\[
\xymatrix{
\ACat \ar@{>->}[r] \ar[dr]_F & \Pro{\ACat} \ar@{-->}[d]^{\overline{F}} \\
& \BCat
}
\]
Note that $(\Pro \ACat)^{op} \cong \IndC{\ACat^{op}}$.  If $\ACat$ has finite limits then $\Pro{\ACat}$ is locally cofinitely presentable and $(\Pro{\ACat})_{cfp} \cong \ACat$. Conversely, for every locally cofinitely presentable category $\BCat$ one has $\BCat\cong\ProC{\BCat_{cfp}}$.
\end{enumerate}
\end{defn}

\begin{notation}
Let $\DCath$ denote the free completion of $\DCat_f$ under cofiltered limits, i.e.,
\[ \DCath = \ProC{\DCat_f}.\]
\end{notation}

\begin{rem}
$\DCath$ is dually equivalent to $\Cat$ since
\[ \Cat = \IndC{\Cat_f} \cong \IndC{\DCat_f^{op}} \cong \ProC{\DCat_f}^{op} =\DCath^{op}.\]
Moreover, the equivalence functors
\[ \widehat{(\mathord{-})} : \Cat_f \xra{\cong} \DCat_f^{op} \quad\text{and}\quad \overline{(\mathord{-})}: \DCat_f^{op} \xra{\cong} \Cat_f\]
extend essentially uniquely to equivalences -- also called $\widehat{(\mathord{-})}$ and $\overline{(\mathord{-})}$ -- between $\Cat$ and $\DCath^{op}$:
\[
\xymatrix{
\Cat = \IndC{\Cat_{f}} \ar[r]^<<<<<{\widehat{(\mathord{-})}} & \DCath^{op} && \Cat  & \ProC{\DCat_f}^{op} = \DCath^{op} \ar[l]_>>>>>>{\overline{(\mathord{-})}}\\
\Cat_f \ar@{>->}[u] \ar[r]_{\widehat{(\mathord{-})}}  & \DCat_f^{op} \ar@{>->}[u] && \Cat_f \ar@{>->}[u] & \DCat_f^{op} \ar@{>->}[u] \ar[l]^{\overline{(\mathord{-})}}
}
\]
\end{rem}

\begin{expl}\label{expl:dualpairs2}
For the predual varieties $\Cat$ and $\DCat$ of Example \ref{expl:dualpairs} we have the following categories $\DCath$:
\begin{center}
\begin{tabular}{|lll|}
\hline\rule[11pt]{0pt}{0pt}
$\Cat\quad\quad\quad$ & $\DCat\quad\quad\quad$  & $\DCath\quad\quad\quad$ \\
\hline
$\BA$ & $\Set$ & $\Stone$\\ 
$\DL$ & $\Poset$ & $\Priest$ \\ 
$\JSL$ & $\JSL$ & $\JSL$ in $\Stone$\\
$\Vect{\Field_2}$ & $\Vect{\Field_2}$ & $\Vect{\Field_2}$ in $\Stone$\\
\hline
\end{tabular}
\end{center}
In more detail:
\begin{enumerate}[(a)]
  \item For $\Cat=\BA$ and $\DCat=\Set$, we have the classical Stone duality: $\DCath$ is the category $\Stone$ of Stone spaces (i.e., compact Hausdorff spaces with a base of clopen sets) and continuous maps. 
The equivalence functor $\overline{(\mathord{-})}: \Stone^{op} \ra\BA$ assigns to each Stone space the boolean algebra of clopen sets, and its associated equivalence $\widehat{(\mathord{-})}: \BA \ra \Stone^{op}$ assigns to each boolean algebra the Stone space of all ultrafilters.
\item For the category $\Cat=\DL$ and $\DCat=\Poset$ we have the classical Priestley duality: $\DCath$ is the category $\Priest$ of Priestley spaces (i.e., ordered Stone spaces such that given $x\not\leq y$ there is a clopen set containing $x$ but not $y$) and continuous monotone maps. The equivalence functor $\overline{(\mathord{-})}: \Priest^{op}\ra\DL$ assigns to each Priestley space the lattice of all clopen upsets, and its associated equivalence $\widehat{(\mathord{-})}: \DL \ra \Priest^{op}$ assigns to each distributive lattice the Priestley space of all prime filters.
\item For $\Cat=\DCat=\JSL$ the dual category $\DCath$ is the category of join-semilattices in $\Stone$.
Similarly, for $\Cat=\DCat=\Vect{\Field_2}$ the dual category $\DCath$ is the category of $\Field_2$-vector spaces in $\Stone$, see \cite{j82}.
\end{enumerate}
\end{expl}

\begin{defn}
We denote by $\hL:\DCath\ra\DCath$ the dual of the functor $T:\Cat\ra\Cat$, i.e., the essentially unique functor for which the following diagram commutes up to natural isomorphism:
\[
\xymatrix{
\DCath^{op} \ar[r]^{\hL^{op}} \ar@{}[dr]|{\cong} & \DCath^{op} \\
\Cat  \ar[u]^{\widehat{(\mathord{-})}} \ar[r]_{T} & \Cat   \ar[u]_{\widehat{(\mathord{-})}} 
}
\]
\end{defn}

\begin{rem}\label{rem:algcoalgdual2}
In analogy to Remark \ref{rem:algcoalgdual}, the categories $\Coalg{T}$ and $\Alg{\hL}$ are dually equivalent: the equivalence $\widehat{(\mathord{-})}: \Cat \xra{\cong} \DCat^{op}$ lifts to an equivalence $\Coalg{T} \xra{\cong} (\Alg{\hL})^{op}$ given by 
\[ (Q\xra{\gamma} TQ) \quad\mapsto\quad  (\hL\widehat{Q} = \widehat{TQ} \xra{\widehat \gamma} \widehat Q). \]
\end{rem}

\begin{expl}\label{exp:dualfunc}
The dual of the endofunctor $T_\Sigma Q = \two \times Q^\Sigma$ of $\Cat$, see Example \ref{exp:tcoalg}, is the endofunctor of $\DCat$
  \[ \hL_\Sigma Z = \one + \coprod_\Sigma Z\]
 where $\one=\widehat{\two}$. In $\DCath=\Stone$ the object $\one$ is the one-element space. Hence, by the universal property of the coproduct,  an $\hL_\Sigma$-algebra $\hL_\Sigma Z = \one + \coprod_\Sigma Z \ra Z$ is a deterministic $\Sigma$-automaton (without final states) in $\Stone$, given by a Stone space $Z$ of states, continuous transition functions $\alpha_a: Z\ra Z$ for $a\in\Sigma$, and an initial state $\one\ra Z$. Analogously for the other dualities of Example \ref{expl:dualpairs2}.
\end{expl}

\begin{notation}
The category of all $\hL$-algebras with a cofinitely presentable carrier (shortly \emph{cfp-algebras}) is denoted by $\Algcfp{\hL}$.
\end{notation}

\begin{rem}
Note that $\Algcfp{\hL} \cong \FAlg{L}$ because the restrictions of $\hL$ and $L$  to $\DCath_{cfp} \cong \DCat_f$ are naturally isomorphic.
\end{rem}

\begin{defn}
Dually to Definition \ref{rem:rat_fix} an $\hL$-algebra is called \emph{locally cofinitely presentable} if it is a cofiltered limit of cfp-algebras.
\end{defn}

\begin{rem}\label{rem:lcpalg}
The category of all \lcp algebras is equivalent to
$\ProC{\Algcfp{\hL}}$.
This is the dual of Theorem \ref{thm:lfpindcomp}. The initial object $\tau \hL$ is what one can call the \emph{dual of the rational fixpoint}. By the dual of Definition \ref{rem:rat_fix}, one can construct $\tau \hL$ as the limit of all algebras in $\Alg{\hL}$ with cofinitely presentable carrier.
\end{rem}

\begin{expl}
\begin{enumerate}[(a)]
  \item For $\Cat=\BA$ and $\DCath=\Stone$, we have
  \[\tau \hL_\Sigma =  \text{ultrafilters of regular languages.}\]
  \item Analogously, for $\Cat=\DL$ and $\DCath=\Priest$, we have \[\tau \hL_\Sigma = \text{prime filters of regular languages.}\]
\end{enumerate}
\end{expl}

\begin{defn}
\label{not:F}
We denote by $F:\DCat\ra\DCath$ the essentially unique finitary functor for which
\[
\xymatrix@=5pt{
& \ar@{_{(}->}[dl]\DCat_f \ar@{^{(}->}[dr]& \\
\DCat=\IndC{\DCat_{f}}\ar[rr]_F & & \ProC{\DCat_{f}}=\DCath
}
\]
commutes, and by $U:\DCath\ra\DCat$ the essentially unique cofinitary functor for which
\[
\xymatrix@=5pt{
& \ar@{_{(}->}[dl]\DCat_{f} \ar@{^{(}->}[dr]& \\
\DCath=\ProC{\DCat_f}\ar[rr]_U & & \IndC{\DCat_f}=\DCat
}
\]
commutes.
\end{defn}

\begin{lem}\label{lem:ufadj}
The functors $F$ and $U$ are well-defined and $F$ is a left adjoint to $U$.
\end{lem}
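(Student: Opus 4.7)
The plan is to handle well-definedness of $F$ and $U$ separately, and then establish the adjunction by a direct hom-set computation that exploits the fact that both functors act as the identity on the common full subcategory $\DCat_f$.

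For the well-definedness of $F$, I would invoke the universal property of the $\Ind$-completion (Remark \ref{rem:indcomp}): any functor from $\DCat_f$ into a category with filtered colimits has an essentially unique finitary extension along $\DCat_f \hookrightarrow \IndC{\DCat_f} = \DCat$. Applying this to the inclusion $\DCat_f \hookrightarrow \DCath$ yields $F$, provided $\DCath$ has filtered colimits. This holds because the equivalence $\Cat \simeq \DCath^{op}$ and the \lfp (hence bicomplete) structure of $\Cat$ force $\DCath$ to be both complete and cocomplete. Dually, $U$ is the essentially unique cofinitary extension of the inclusion $\DCat_f \hookrightarrow \DCat$ along $\DCat_f \hookrightarrow \ProC{\DCat_f} = \DCath$; this uses only that $\DCat$ is complete (which it is, being \lfp).

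For the adjunction $F \dashv U$, I would verify $\DCath(FA,B) \cong \DCat(A,UB)$ naturally in two stages. Fix $B \in \DCath$ and write $B = \lim_j B_j$ as a cofiltered limit of objects $B_j \in \DCat_f$; since $U$ is cofinitary and acts as the identity on $\DCat_f$, one has $UB \cong \lim_j B_j$ computed in $\DCat$. First, for $A \in \DCat_f$, the chain
\[
\DCath(A,B) \cong \lim_j \DCath(A,B_j) = \lim_j \DCat_f(A,B_j) = \lim_j \DCat(A,B_j) \cong \DCat(A,UB)
\]
gives the isomorphism (the middle equalities hold because $\DCat_f$ sits fully in both $\DCath$ and $\DCat$). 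Second, for general $A \in \DCat$, write $A = \colim_i A_i$ as a filtered colimit of $A_i \in \DCat_f$. Since $F$ is finitary and identical to the inclusion on $\DCat_f$, we have $FA \cong \colim_i A_i$ in $\DCath$, and therefore
\[
\DCath(FA,B) \cong \lim_i \DCath(A_i,B) \cong \lim_i \DCat(A_i,UB) \cong \DCat(A,UB),
\]
where the middle step invokes the previous stage. Naturality in $A$ and $B$ is inherited from naturality of the (co)limit hom-isomorphisms used at each step.

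The only mildly delicate point, and what I expect to be the main obstacle, is keeping track of the fact that the homs at finite objects genuinely coincide in all three categories $\DCat_f$, $\DCat$, and $\DCath$; once this is noted, all the isomorphisms above are just instances of the universal properties of $\Ind$ and $\Pro$. There is no serious categorical difficulty — the proof is essentially a bookkeeping exercise in the dual completions.
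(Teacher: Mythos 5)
Your proposal is correct and follows essentially the same route as the paper: well-definedness of $F$ and $U$ from (co)completeness of the target categories, then writing $FA\cong\colim_i A_i$ and $UB\cong\lim_j B_j$ using that both functors are (co)finitary and restrict to the identity on $\DCat_f$, and finally the hom-set computation $\DCath(FA,B)\cong\lim_i\lim_j\DCat_f(A_i,B_j)\cong\DCat(A,UB)$, which the paper does in one step and you merely split into two stages.
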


\begin{proof}
$F$ is well-defined because $\DCath$ has filtered colimits: being \lcp it is cocomplete. Analogously for $U$. Furthermore, $F$ and $U$ form an adjunction: every object $A\in\DCat$ is a filtered colimit
\[ A=\colim_{i\in I} A_i \text{ with } A_i\in\DCat_{f}.\]
This implies, since $FA_i=A_i$, that
\[ FA = \colim_{i\in I} A_i \text{ in } \DCath.  \]
Analogously, every object $B\in \DCath$ is a cofiltered limit
\[ B=\lim_{j\in J} B_j \text{ with } B_j\in \DCat_{f}. \]
This implies, since $UB_j=B_j$, that,
\[ UB = \lim_{j\in J} B_j \text{ in } \DCat.\]
Consequently we have the desired natural isomorphism
\[ \DCath(FA,B) \cong \lim_{i\in I}\lim_{j\in J} \DCat_{fp}(A_i, B_j) \cong \DCat(A, UB). \tag*{\qed}\]
\end{proof}

\begin{expl}
\begin{enumerate}
\item
For $\Cat = \BA$, $\DCat = \Set$ and $\DCath = \Stone$, the functor $F : \Set \to \Stone$ is the Stone-\v{C}ech compactification and $U : \Stone \to \Set$ is the forgetful functor.
\item
For $\Cat = \DL$, $\DCat = \Poset$ and  $\DCath = \Priest$, the functor $F : \Poset \to \Priest$ constructs the free Priestley space on a poset and $U : \Priest \to \Poset$ is the forgetful functor.
\end{enumerate}
\end{expl}


\begin{defn}
Let $\hU: \ProC{\Algcfp{\hL}}\ra\Alg{L}$ denote the essentially unique cofinitary functor that makes the triangle below commute:
\[
\xymatrix@=5pt{
& \ar@{ )->}[dl]\Algcfp{\hL}\cong\Algfp{L} \ar@{ (->}[dr]& \\
\ProC{\Algcfp{\hL}}\ar[rr]_(.55){\hU} & & \Alg{L}
}
\]
\end{defn}

\begin{expl}
For $T_\Sigma Q = \two \times Q^\Sigma : \BA \to \BA$ we have $\hL_\Sigma Z
= \one + \coprod_\Sigma Z: \Stone \to \Stone$ and $L_\Sigma A = \one
+ \coprod_\Sigma A : \Set \to \Set$. The objects of
$\ProC{\Algcfp{\hL_\Sigma}}$ are the \lcp $\hL_\Sigma$-algebras, and the
functor $\hU : \ProC{\Algcfp{\hL_\Sigma}} \to \Alg{L_\Sigma}$
simply forgets the topology on the carrier of an $\hL_\Sigma$-algebra. 
\end{expl}

\begin{prop}\label{prop:hu_adj}
$\hU$ is a right adjoint.
\end{prop}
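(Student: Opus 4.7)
The plan is to adapt the proof of Lemma~\ref{lem:ufadj} to the algebra level, constructing the left adjoint $\hF:\Alg{L}\to\ProC{\Algcfp{\hL}}$ as the essentially unique finitary extension of the inclusion $\Algfp{L}\cong\Algcfp{\hL}\hookrightarrow\ProC{\Algcfp{\hL}}$. The overall pattern will mirror the base-level adjunction $F\dashv U$ of Lemma~\ref{lem:ufadj} almost verbatim.

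Two preliminary facts are needed. First, $\Alg{L}\cong\IndC{\Algfp{L}}$: since $L$ is finitary and $\DCat$ is a locally finite variety, $\Alg{L}$ is itself locally finitely presentable and its finitely presentable objects are exactly the finite-carrier $L$-algebras. Second, $\ProC{\Algcfp{\hL}}$ has filtered colimits: using the equivalence $\Algcfp{\hL}\cong\Algfp{L}$ and the fact that $\Algfp{L}$ is closed under finite colimits in the cocomplete category $\Alg{L}$, the category $\Algcfp{\hL}$ has finite colimits; dualizing via $(\ProC{\ACat})^{op}\cong\IndC{\ACat^{op}}$ shows that $\ProC{\Algcfp{\hL}}$ is locally cofinitely presentable, hence both complete and cocomplete.

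With these facts in place, define $\hF$ by the universal property of the $\mathsf{Ind}$-completion:
\[
\xymatrix@=5pt{
& \ar@{_{(}->}[dl]\Algfp{L}\cong\Algcfp{\hL}\ar@{^{(}->}[dr] & \\
\IndC{\Algfp{L}}\cong\Alg{L}\ar[rr]_-{\hF} & & \ProC{\Algcfp{\hL}}
}
\]
Then for any $A\in\Alg{L}$ written as a filtered colimit $A=\colim_{i\in I}A_i$ of finite $L$-algebras we have $\hF A=\colim_{i\in I}A_i$ in $\ProC{\Algcfp{\hL}}$; dually, any $B\in\ProC{\Algcfp{\hL}}$ is a cofiltered limit $B=\lim_{j\in J}B_j$ of cfp-algebras and $\hU B=\lim_{j\in J}B_j$ in $\Alg{L}$ by construction of $\hU$. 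The adjunction then follows from
\[
\ProC{\Algcfp{\hL}}(\hF A,B) \cong \lim_{i}\lim_{j}\Algcfp{\hL}(A_i,B_j) \cong \lim_{i}\lim_{j}\Algfp{L}(A_i,B_j) \cong \Alg{L}(A,\hU B),
\]
where we use that each $A_i$ is finitely presentable in both $\Alg{L}$ and $\ProC{\Algcfp{\hL}}$ and each $B_j$ is cofinitely presentable in both categories.

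The main obstacle is the first preliminary fact, namely $\Alg{L}\cong\IndC{\Algfp{L}}$: this is the algebra-level counterpart of $\DCat\cong\IndC{\DCat_f}$ and requires verifying, via the standard theory of algebras for finitary endofunctors on locally (finitely) presentable categories, that the forgetful functor $\Alg{L}\to\DCat$ is finitary and lifts the \lfp structure. Once this is in hand, the entire argument becomes a direct transcription of Lemma~\ref{lem:ufadj}.
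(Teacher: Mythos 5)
There is a genuine gap, and it is exactly at the point you flag as ``the main obstacle'': the claimed equivalence $\Alg{L}\cong\IndC{\Algfp{L}}$ is false. It is true that $\Alg{L}$ is \lfp (it is finitary monadic over $\DCat$ via the free monad on $L$), but its finitely presentable objects are \emph{not} the algebras with finite carrier: free $L$-algebras on finitely presentable $\DCat$-objects are finitely presentable in $\Alg{L}$ and typically have infinite carrier. Concretely, for $L=L_\Sigma$ on $\DCat=\Set$ the initial algebra $\mu L_\Sigma=\Psi\Sigma^*=\Sigma^*$ (with transitions $w\mapsto wa$ and initial state $\eps$) is a finitely presentable object of $\Alg{L_\Sigma}$, yet it is not a filtered colimit of finite $L_\Sigma$-algebras: the image of any finite $L_\Sigma$-algebra under a homomorphism into $\Sigma^*$ would be a finite nonempty subset of $\Sigma^*$ containing $\eps$ and closed under $w\mapsto wa$, which does not exist. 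So $\IndC{\Algfp{L}}$ is only the proper full subcategory of ``locally finite'' $L$-algebras (mirroring Theorem~\ref{thm:lfpindcomp}, where $\Coalglfp{T}=\IndC{\FCoalg{T}}$ is a proper coreflective subcategory of $\Coalg{T}$), and your $\hF$ is therefore only defined on this subcategory; likewise the decomposition $A=\colim_i A_i$ with $A_i$ finite, on which your hom-set computation rests, is unavailable for a general $L$-algebra $A$ -- in particular for the algebra $\mu L_\Sigma$ that the paper most needs to apply $\hF$ to. The analogy with Lemma~\ref{lem:ufadj} breaks precisely because $\DCat\cong\IndC{\DCat_f}$ is an assumption (local finiteness of the variety $\DCat$) that has no counterpart one level up.

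The paper's proof avoids constructing $\hF$ altogether: it shows that $\hU$ preserves limits, using the commutative square over $U:\DCath\to\DCat$ and the fact that limits in both $\ProC{\Algcfp{\hL}}$ and $\Alg{L}$ are computed on underlying $\DCat$-objects, and then applies the Special Adjoint Functor Theorem ($\ProC{\Algcfp{\hL}}$ is complete and wellpowered, being dual to an \lfp category, and $\Algcfp{\hL}$ is an essentially small cogenerator). As the remark following the proposition notes, the left adjoint obtained this way sends an $L$-algebra to the cofiltered limit of its finite quotient algebras -- a codensity-style construction, not an $\mathsf{Ind}$-extension. If you want an explicit construction of $\hF$, that is the route to take; your filtered-colimit description of $\hF$ cannot be correct as stated.
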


\begin{proof}
We have a commutative square
\[
\xymatrix@=5pt{
\ProC{\Algcfp{\hL}} \ar[dd] \ar[rr]^(.65){\hU} && \Alg{ L} \ar[dd]\\
&&\\
\Pro \DCath_{cfp} = \DCath \ar[rr]_(.7)U && \DCat
}
\]
where the vertical functors are the obvious forgetful functors. Recall that limits in $\Alg{ L}$ are formed on the level of $\DCat$, analogously for limits in $\ProC{\Algcfp{\hL}}$. Since $U$ preserves limits by Lemma \ref{lem:ufadj}, we conclude that so does $\hU$. By the Special Adjoint Functor Theorem, $\hU$ is a right adjoint: the category $\ProC{\Algcfp{\hL}}$ is complete, $\Algcfp{\hL}$ is its (essentially small) cogenerator, and $\ProC{\Algcfp{\hL}}$ is wellpowered. Indeed, it is dual to a \lfp category which is co-wellpowered by \cite[Theorem 1.58]{ar94}.\qed
\end{proof}

\begin{rem}
It follows that the left adjoint $\hF$ of $\hU$ maps the initial $L$-algebra to the initial \lcp $\hL$-algebra: $\hF(\mu L) = \tau \hL$. 
One can prove that $\hF$ assigns to
every $L$-algebra $\alpha: LA \ra A$ the limit of the diagram of all
its quotients in $\Algfp{L}=\Algcfp{\hL}$. Thus, we see that $\tau \hL$
can be constructed as the limit (taken in $\Alg{\hL}$) of all finite quotient $L$-algebras of $\mu L$. This construction generalizes a similar one given by Gehrke \cite{gehrke13}.
\end{rem} 

\begin{rem} Under the Assumptions \ref{asm:geneilenberg} of the previous section we obtain a generalization of the result of Gehrke, Grigorieff and Pin \cite{ggp08,ggp10} that $\Reg_\Sigma$ endowed with boolean operations and derivatives is dual to the free profinite monoid on $\Sigma$.
By Proposition \ref{prop:loceilfin} and Lemma \ref{lem:finsub_in_finrqc}  the finite $\Sigma$-generated $\DCat$-monoids form a cofinal subposet of $\Quofp{\mu L_\Sigma}$. Thus, the corresponding diagrams have the same limit in $\AlgCat{\hL_\Sigma}$. Hence by the previous remark $\tau \hL_\Sigma$ is also the limit of the directed diagram of all finite $\Sigma$-generated $\DCat$-monoids. Since $\hat{U}$ preserves this limit and the forgetful functor $\Mon{\DCat}$ creates limits (see Lemma \ref{lem:bimprops}) it follows that $\hat{U}(\tau \hL_\Sigma)$ carries the structure of a $\DCat$-monoid and it is then easy to see that it is the free profinite $\DCat$-monoid on $\Sigma$: for every $\DCat$-monoid morphism $e: \PS\ra M$ into a finite $\DCat$-monoid $M$, there exists a unique $\hL_\Sigma$-algebra homomorphism $\overline{e}: \tau \hL_\Sigma \ra M$ such that $\hat{U}\overline{e}$ is a $\DCat$-monoid morphism and the diagram below commutes:
\[
\xymatrix{
\PS \ar[r]^<<<<{\hat \eta} \ar@{->}[dr]_e & \hat U \hat F (\mu L_\Sigma) = \hat U (\tau \hL_\Sigma) \ar[d]^{\hat U\overline{e}}\\
& M = \hat U \hat F M
}
\] 
Here $\hat\eta$ is the unit of the adjunction $\hat F\dashv \hat U$. In summary:
\end{rem}

\begin{thm}
 Under the assumptions of the General Local Eilenberg Theorem, $\tau \hL_\Sigma$ is the free profinite $\DCat$-monoid on $\Sigma$.
\end{thm}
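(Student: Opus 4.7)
The plan is to make precise the argument sketched in the remark preceding the theorem: realize $\mlhs$ as a cofiltered limit of finite $\Sigma$-generated $\DCat$-monoids, lift this limit to $\Bim$ via the fact that forgetful functors create limits, and then read off the universal property from the adjunction $\hF \dashv \hU$ of Proposition~\ref{prop:hu_adj}.

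First I would establish that the inclusion $\SBimfp \hookrightarrow \Quofp{\mls}$ is initial in the sense required for cofiltered limits. Via Proposition~\ref{prop:subquo} dualised, combined with Proposition~\ref{prop:loceilfin}, this reduces to showing that every finite subcoalgebra of $\rho T_\Sigma$ is contained in a finite subcoalgebra closed under right derivatives, which is exactly Corollary~\ref{cor:finsub_in_finrqc}. Combined with Remark~\ref{rem:lcpalg} — which identifies $\mlhs$ with the limit in $\Alg{\hL_\Sigma}$ of the diagram of all finite quotient algebras of $\mls$ — this gives $\mlhs \cong \lim_{M \in \SBimfp} \tilde M$, where $\tilde M$ denotes the $\hL_\Sigma$-algebra dual to the finite $L_\Sigma$-algebra underlying $M$.

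Next I would transfer this limit to $\Bim$. By Proposition~\ref{prop:hu_adj}, the functor $\hU$ preserves limits, so $\hU(\mlhs)$ is the limit in $\Alg{L_\Sigma}$ of the diagram of finite $\Sigma$-generated $\DCat$-monoids viewed as $L_\Sigma$-algebras via Remark~\ref{rem:bimtoalg}. The same diagram admits a limit $L$ in $\Bim$ by Lemma~\ref{lem:bimprops}(a), and since limits in each of $\Bim$, $\Alg{L_\Sigma}$ and $\DCat$ are computed on the underlying $\DCat$-object, the image of $L$ under the forgetful functor $\Bim \to \Alg{L_\Sigma}$ is isomorphic to $\hU(\mlhs)$. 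This endows $\hU(\mlhs)$ with a canonical $\DCat$-monoid structure, and the $\DCat$-monoid morphism $\PS \to \hU(\mlhs)$ induced by the surjections $\PS \epito M$ of the diagram coincides with the unit component $\hat\eta_{\PS}$ of the adjunction $\hF \dashv \hU$, by uniqueness.

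Finally, for the universal property, given a $\DCat$-monoid morphism $e: \PS \to M$ into a finite $\DCat$-monoid $M$, I would factor $e$ in $\Bim$ as a surjection $\PS \epito M'$ followed by a strong monomorphism $M' \hookrightarrow M$ using Lemma~\ref{lem:bimprops}(b). The quotient $M'$ is a finite $\Sigma$-generated $\DCat$-monoid, so $M' \in \SBimfp$, and the limit projection $\pi_{M'}: \hU(\mlhs) \to M'$ is a $\DCat$-monoid morphism; composing with $M' \hookrightarrow M$ yields the desired extension $\bar e: \hU(\mlhs) \to M$, which is an $\hL_\Sigma$-algebra morphism at the level of $\mlhs \to M$ since it is a limit projection followed by an embedding. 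Uniqueness is immediate from the adjunction $\hF \dashv \hU$. The main obstacle I anticipate lies in the transfer step: matching the two limit descriptions of $\hU(\mlhs)$ (as a limit in $\Alg{L_\Sigma}$ via preservation of limits by $\hU$, and as the image under a forgetful functor of a limit in $\Bim$) requires a careful chase of the commuting square in the proof of Proposition~\ref{prop:hu_adj}, together with the explicit construction of limits at the level of underlying $\DCat$-objects in each category involved.
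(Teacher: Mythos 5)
Your proposal is correct and follows essentially the same route as the paper's own argument: cofinality of the finite $\Sigma$-generated $\DCat$-monoids in $\Quofp{\mu L_\Sigma}$ (Proposition~\ref{prop:loceilfin} together with Lemma~\ref{lem:finsub_in_finrqc}), the resulting identification of $\tau\hL_\Sigma$ as the limit of that diagram, preservation of this limit by $\hU$ combined with creation of limits by the forgetful functor of $\Bim$ (Lemma~\ref{lem:bimprops}), and then the universal property via the adjunction $\hF\dashv\hU$ — your explicit factorization of $e\colon \PS\to M$ through its finite image usefully fills in what the paper dismisses as ``easy to see''. The only blemishes are notational: the object you denote $\mlhs$ should be $\tau\hL_\Sigma=\hF(\mu L_\Sigma)$, the initial \emph{locally cofinitely presentable} $\hL_\Sigma$-algebra (which is dual to $\rho T_\Sigma=\Reg_\Sigma$ and in general differs from the initial $\hL_\Sigma$-algebra), the cofinality fact you invoke is Lemma~\ref{lem:finsub_in_finrqc} rather than Corollary~\ref{cor:finsub_in_finrqc}, and the description of $\tau\hL_\Sigma$ as the limit of all finite quotient algebras of $\mu L_\Sigma$ comes from the remark following Proposition~\ref{prop:hu_adj} rather than Remark~\ref{rem:lcpalg}.
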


\section{A Categorical Framework}\label{sec:oldasm}

Although we have assumed $\Cat$ and $\DCat$  to be locally finite varieties throughout this paper, our methodology was purely categorical (rather than algebraic) in spirit. In fact,  all our results and their proofs can be adapted to the following categorical setting:
\begin{enumerate}
\item $\Cat$ and $\DCat$ are predual categories, i.e., the categories $\Cat_f$ and $\DCat_f$ of finitely presentable objects are dually equivalent.
\item $\Cat$ has the following additional properties:
\begin{enumerate}
\item  $\Cat$ is \lfp.
\item $\Cat_f$ is closed under strong epimorphisms.
\item $\Cat$ is concrete, i.e., a faithful functor $\under{\mathord{-}}_{\Cat}: \Cat\ra\Set$ is given. 
\item $\under{\mathord{-}}_{\Cat}$ is a finitary right adjoint and maps finitely presentable objects to finite sets.
\item  $\under{\mathord{-}}_{\Cat}$ is amnestic, i.e., every subset $B\seq \under{A}_{\Cat}$, where $A$ is an object of $\Cat$, is carried by at most one subobject of $A$ in $\Cat$.
\end{enumerate}
\item $\DCat$ has the following additional properties:
\begin{enumerate}
\item $\DCat$ is \lfp.
\item $\DCat_f$ is closed under strong monomorphisms and finite products.
\item $\DCat$ is concrete, i.e., a faithful functor $\under{\mathord{-}}_{\DCat}: \DCat\ra\Set$ is given. 
\item $\under{\mathord{-}}_{\DCat}$ is a finitary right adjoint and maps finitely presentable objects to finite sets.
\item  $\under{\mathord{-}}_{\DCat}$ preserves epimorphisms.
\item For any two objects $A$ and $B$ of $\DCat$, there exists an embedding $[A,B]\xto{e_{A,B}} B^{\under{A}}$ given by hom-sets, i.e.,
\begin{enumerate}
\item $[A,B]$ has the underlying set $\DCat(A,B)$,
\item $e_{A,B}$ is a monomorphism,
\item $\under{e_{A,B}}$ takes $f: A\ra B$ to its underlying function $|f|\in \under{B}^{\under{A}}$, and
\item whenever a $\DCat$-morphism $h: X\ra [A,B]$ factorizes through $e_{A,B}$ in $\Set$, it factorizes through $e_{A,B}$ in $\DCat$. 
\end{enumerate}
\end{enumerate}
\item The $\DCat$-object $\one = \Psi 1$, where $\Psi: \Set\ra\DCat$ is the left adjoint of $\under{\mathord{-}}_{\DCat}$, is dual to an object $\two$ of $\Cat$ with two-element underlying set. 
\item $T: \Cat \ra \Cat$ and $L: \DCat\ra \DCat$ are finitary predual functors, i.e., they restrict to functors $T_f: \Cat_f\ra\Cat_f$ and $L_f: \DCat_f\ra\DCat_f$ and these restrictions are dual. Moreover, $T$ preserves monomorphisms and preimages, and $L$ preserves epimorphisms. In Section \ref{sec:locvar} one works with the functors $T= T_\Sigma = \two\times\Id^\Sigma$ on $\Cat$ and $L = L_\Sigma = \one + \coprod_\Sigma \Id$ on $\DCat$.
\end{enumerate}

\section{Conclusions and Future Work}

Inspired by recent work of Gehrke, Grigorieff and Pin \cite{ggp08, ggp10} we have proved a
generalized local Eilenberg theorem, parametric in a pair of dual categories $\Cat$ and $\DCath$ and a type of coalgebras $T:\Cat\ra\Cat$. By instantiating our framework to  deterministic automata, i.e., the functor $T_\Sigma=\two\times \Id^\Sigma$ on $\Cat=\BA$, $\DL$, $\JSL$ and $\Vect{\Field_2}$, we derived the local Eilenberg theorems for (ordered) monoids as in ~\cite{ggp08}, as well as two new  local Eilenberg theorems for idempotent semirings and $\Field_2$-algebras.

There remain a number of open points for further work. Firstly, our
general approach should be extended to the
ordinary (non-local) version of Eilenberg's theorem. Secondly, for different functors $T$ on the categories we have
considered our approach should provide the means to relate varieties of rational behaviours of $T$ with
varieties of appropriate algebras. In this way, we hope to obtain
Eilenberg theorems for systems such as Mealy and Moore automata, but also weighted or probabilistic automata -- ideally, such results would be proved
uniformly for a certain class of functors.

Another very interesting aspect we have not treated in this paper are profinite
equations and syntactic presentations of varieties (of $\DCat$-monoids or
regular languages, resp.) as in the work of Gehrke, Grigorieff and
Pin~\cite{ggp08}. An important role in studying profinite equations will be played by the $\hL$-algebra $\tau \hL$, the dual of the rational fixpoint, that we identified as the free profinite
$\DCat$-monoid. A profinite equation is then a pair of elements of $\tau \hL$. We
intend to investigate this in future work.

\bibliographystyle{splncs03}		
\bibliography{EilenbergThm_fossacs2014_final}

\begin{thebibliography}{10}
\providecommand{\url}[1]{\texttt{#1}}
\providecommand{\urlprefix}{URL }

\bibitem{ap04}
Ad\'{a}mek, J., Porst, H.E.: On tree coalgebras and coalgebras presentations.
  Theoret.~Comput.~Sci.  311,  257--283 (2004)

\bibitem{ammu14}
Ad\'amek, J., Milius, S., Myers, R., Urbat, H.: Generalized eilenberg theorem
  i: Local varieties of languages. In: Proc. Seventeenth International
  Conference on Foundations of Software Science and Computation Structures
  (FoSSaCS'14). Lecture Notes Comput. Sci. (ARCoSS), vol. 8412, pp. 366--380
  (2014)

\bibitem{amv_atwork}
Ad\'amek, J., Milius, S., Velebil, J.: Iterative algebras at work. Math.
  Structures Comput. Sci.  16(6),  1085--1131 (2006)

\bibitem{ar94}
Ad\'{a}mek, J., Rosick\'y, J.: Locally presentable and accessible categories.
  Cambridge University Press (1994)

\bibitem{Almeida94}
Almeida, J.: Finite semigroups and universal algebra. World Scientific
  Publishing, River Edge (1994)

\bibitem{BN1976}
Banaschewski, B., Nelson, E.: Tensor products and bimorphisms. Canad. Math.
  Bull.  19,  384--401 (1976)

\bibitem{Birkhoff35}
Birkhoff, G.: On the structure of abstract algebras. Proc.~Cambridge
  Phil.~Soc.~31 pp. 433--454 (1935)

\bibitem{Bloom1976}
Bloom, S.L.: Varieties of ordered algebras. J. Comput. Syst. Sci.  13(2),
  200--212 (1976)

\bibitem{Eilenberg76}
Eilenberg, S.: Automata, languages and machines, vol.~B. Academic Press
  [Harcourt Brace Jovanovich Publishers, New York (1976)

\bibitem{gehrke13}
Gehrke, M.: Stone duality, topological algebra and recognition (2013),
  available at \url{http://hal.archives-ouvertes.fr/hal-00859717}

\bibitem{ggp08}
Gehrke, M., Grigorieff, S., Pin, J.{\'E}.: Duality and equational theory of
  regular languages. In: Proc.~ICALP 2008, Part II. Lecture Notes Comput.~Sci.,
  vol. 5126, pp. 246--257. Springer (2008)

\bibitem{ggp10}
Gehrke, M., Grigorieff, S., Pin, J.{\'E}.: A topological approach to
  recognition. In: Proc.~ICALP 2010, Part II. Lecture Notes Comput.~Sci., vol.
  6199, pp. 151--162. Springer (2010)

\bibitem{j82}
Johnstone, P.T.: Stone Spaces, Cambridge studies in advanced mathematics,
  vol.~3. Cambridge University Press (1982)

\bibitem{m10}
Milius, S.: A sound and complete calculus for finite stream circuits. In:
  Proc.~25th Annual Symposium on Logic in Computer Science (LICS'10). pp.
  449--458. IEEE Computer Society (2010)

\bibitem{pin95}
Pin, J.: A variety theorem without complementation. Russian Mathematics
  (Izvestija vuzov.Matematika)  39,  80--90 (1995)

\bibitem{pin13}
Pin, J.{\'E}.: Mathematical foundations of automata theory (2013), available at
  \url{http://www.liafa.jussieu.fr/~jep/PDF/MPRI/MPRI.pdf}

\bibitem{Pippenger97}
Pippenger, N.: Regular languages and {S}tone duality. Theory Comput.~Syst.
  30(2),  121--134 (1997)

\bibitem{polak01}
Pol\'ak, L.: Syntactic semiring of a language. In: Sgall, J., Pultr, A.,
  Kolman, P. (eds.) Proc.~International Symposium on Mathematical Foundations
  of Computer Science (MFCS). vol. 2136, pp. 611--620. Springer (2001)

\bibitem{Reiterman82}
Reiterman, J.: The {B}irkhoff theorem for finite algebras. Algebra Universalis
  14(1),  1--10 (1982)

\bibitem{reut80}
Reutenauer, C.: S\'{e}ries formelles at algebres syntactiques. J. Algebra  66,
  448--483 (1980)

\bibitem{rs2008}
Rhodes, J., Steinberg, B.: The Q-theory of Finite Semigroups. Springer
  Publishing Company, Incorporated, 1st edn. (2008)

\bibitem{Rutten2000}
Rutten, J.J.M.M.: Universal coalgebra: A theory of systems. Theor. Comput. Sci.
   249(1),  3--80 (Oct 2000)

\end{thebibliography}

\end{document}